\documentclass[11pt]{article}
 
\usepackage[a4paper,left=22mm,top=25mm,right=22mm,bottom=35mm, marginparsep=.1in]{geometry}

\usepackage{authblk}
\usepackage[numbers, sort&compress]{natbib}
\usepackage[margin=30pt,font=small,labelfont=bf]{caption}
\usepackage{amsmath,amsopn,amsthm,amssymb}

\usepackage{graphicx}
\usepackage{mathptmx} 
\usepackage[mathscr]{euscript}
\usepackage{mathtools}
\usepackage{float}
\usepackage{enumitem}
\setlist[enumerate,1]{label={(\arabic*)}}
\usepackage{xspace}
\usepackage{scalerel}
\usepackage{marginnote}
\usepackage{enumitem}

\usepackage[
 	colorlinks=true,
	urlcolor=black,
	linkcolor=blue
]{hyperref}

\usepackage{algorithmicx}
\usepackage{algorithm} 
\usepackage[noend]{algpseudocode}
\algrenewcommand\algorithmicrequire{\textbf{Input:}}
\algrenewcommand\algorithmicensure{\textbf{Output:}}

\makeatletter
\def\moverlay{\mathpalette\mov@rlay}
\def\mov@rlay#1#2{\leavevmode\vtop{%
    \baselineskip\z@skip \lineskiplimit-\maxdimen
    \ialign{\hfil$\m@th#1##$\hfil\cr#2\crcr}}}
\newcommand{\charfusion}[3][\mathord]{
  #1{\ifx#1\mathop\vphantom{#2}\fi
    \mathpalette\mov@rlay{#2\cr#3}
  }
  \ifx#1\mathop\expandafter\displaylimits\fi}
\makeatother

\newtheorem{theorem}{Theorem}[section]
\newtheorem{proposition}[theorem]{Proposition}
\newtheorem{lemma}[theorem]{Lemma}
\newtheorem{corollary}[theorem]{Corollary}
\newtheorem*{example*}{Example}
\newtheorem{definition}[theorem]{Definition}
\newtheorem{observation}[theorem]{Observation}

\newcommand{\reg}{\textsc{Reg}}
\newcommand{\cupdot}{\charfusion[\mathbin]{\cup}{\cdot}}
\newcommand{\lca}{\ensuremath{\operatorname{lca}}}

\newcommand{\LCA}{\ensuremath{\operatorname{LCA}}}

\newcommand{\Hasse}[1][]{\mathscr{H}\ifthenelse{\equal{#1}{}}{}{(#1)}}

\DeclareMathOperator{\child}{child}
\DeclareMathOperator{\parent}{par}
\DeclareMathOperator{\indeg}{indeg}
\DeclareMathOperator{\outdeg}{outdeg}
\DeclareMathOperator{\CC}{\mathtt{C}}

\newcommand{\lcaV}[1]{\mathfrak{lca}_{#1}}
\newcommand{\notlcaV}[1]{\overline{\mathfrak{lca}_{#1}}}
\newcommand{\vis}{\mathfrak{vis}}
\newcommand{\dvis}{\mathfrak{dvis}}
\newcommand{\notvis}{\overline{\mathfrak{vis}}}
\newcommand{\Xci}[1][i]{\mathcal{P}_{\leq #1}(X)}

\newcommand{\norm}{\textsc{Norm}}
\newcommand{\cov}{\textsc{Cov}}
\newcommand{\visop}{\ensuremath{\operatorname{vis}}}
\newcommand{\wvis}{\ensuremath{\operatorname{d-vis}}}

 \newenvironment{owndesc}%
    {\begin{description}[leftmargin = 0.2cm, labelsep = 0.2cm]}
    {\end{description}}

\providecommand{\keywords}[1]{\textbf{\textit{Keywords: }} #1}

\title{Regularizing and Normalizing DAGs and Phylogenetic Networks}

\author[1]{Marc Hellmuth} 

\author[2,*]{Anna Lindeberg} 

\author[3]{Vincent Moulton} 

\affil[1]{Faculty of Mathematics and Computer Science,
  		  Leipzig University, DE-04109 Leipzig, Germany}

\affil[2]{Department of Mathematics, Faculty of Science,
  Stockholm University, SE-10691 Stockholm, Sweden} 

\affil[3]{School of Computing Sciences, 
University of East Anglia, Norwich, NR4 7TJ, United Kingdom}

\affil[*]{corresponding author}

\date{\ }

\begin{document}
\sloppy

\maketitle

\abstract{ 
Phylogenetic networks and, more generally, directed acyclic graphs (DAGs)
represent hierarchical structure beyond trees, for instance in the presence
of reticulate evolutionary events such as hybridization or horizontal gene
transfer. A central question is which parts of such graphs are essential
with respect to leaf-observable information, and which parts can be removed
without changing this information. Resolving this question can lead to
principled simplification methods for phylogenetic networks, such as the
recent normalization approach of Francis et al.

In this paper, we study this question from three related perspectives:
clusters displayed by a DAG $G$, least common ancestors (LCAs) of subsets
of its leaf set, and visibility, a path-based property of vertices. We first
introduce an LCA-based simplification procedure called
\emph{$i$-regularization}. For a DAG $G$ and $i\geq 1$, the DAG
$\reg_i(G)$ retains precisely those vertices that occur as unique LCAs of
leaf subsets of size at most $i$, removes the remaining non-leaf vertices
by a graph-editing operation $\ominus$, and then deletes shortcuts. We show
that $\reg_i(G)$ 
admits a Hasse-diagram characterization in terms of the corresponding lca-clusters.

We then compare LCA-based regularization with normalization. Using the same
$\ominus$-operator, we describe the cover construction underlying
normalization, identify visible vertices that are nevertheless removed, and
characterize when regularization and normalization coincide. Together, these
results provide a unified framework for cluster-based, LCA-based, and
visibility-based simplifications of DAGs and phylogenetic networks.

\smallskip
\noindent
\keywords{Phylogenetic network, Regular network, Normal network, Cluster, Least common ancestors}


\section{Introduction}

Directed acyclic graphs (DAGs) and rooted phylogenetic networks are widely used to model
hierarchical structure in situations where purely tree-like representations are too restrictive. A
classical example arises in evolutionary biology, where reticulation events such as hybridization or
horizontal gene transfer cannot be captured by trees alone \cite{Huson:11}. Beyond phylogenetics,
DAG-based models also appear in areas such as database theory \cite{Aho:81}, workflow analysis and
process modeling \cite{aalst1998application,van2003workflow}, or in the theory of Bayesian and
causal networks \cite{pearl2009causality,koller2009probabilistic}.

A recurring theme in the study of such graphs is to determine which aspects of the underlying
topology can be recovered from ``coarse'' information derived from the leaves, and how one can
systematically simplify a given DAG or network without destroying the leaf-implied structure of
interest. In phylogenetics, the leaves represent present-day species (or other taxa), while internal
vertices encode ancestral ones. It is therefore natural to ask which internal vertices are
structurally essential and which are redundant with respect to a specified type of leaf-based
information.

Three classical viewpoints on this question are central to this paper. The first is \emph{cluster
information}: every vertex $v$ in a DAG $G$ with leaf set $X$ induces a cluster $\CC_G(v) = \{x\in X
\mid\, \text{there is a directed path from } v \text{ to } x \text{ in }G\}$, and the resulting set
system $\mathfrak{C}(G)$ records how leaves are grouped by ancestral vertices. The second is
\emph{least common ancestors (LCAs)}: for a subset $A\subseteq X$, its least common ancestors (when
they exist) identify the ``lowest'' vertices that jointly explain the ancestry of $A$. The third is
\emph{visibility}, a path-based notion that captures whether a vertex lies on all root-to-leaf paths
leading to at least one leaf. For example, in Figure~\ref{fig:normal-prune-intro2}, taking $G$ to be
the DAG $N_1$, the vertex $u$ induces the cluster $\{a,b\}$, the LCA of the vertices $a$ and $c$ is
$\rho$, and the vertex $u$ is visible, since all paths from the root $\rho$ to the leaf $a$ pass
through $u$.

\begin{figure}
    \centering
    \includegraphics[width=0.75\linewidth]{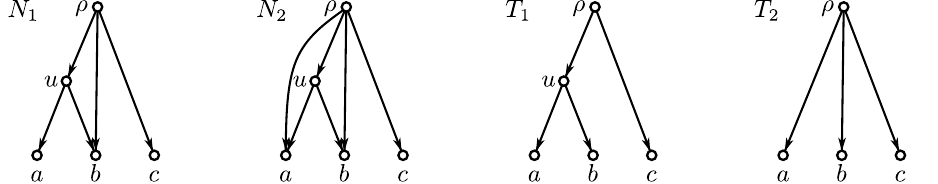}
    \caption{Two networks $N_1$, $N_2$ and two trees $T_1$, $T_2$, for which $T_1 =\reg(N_1) =
             \norm(N_1)$ and $T_1=\reg(N_2)\neq \norm(N_2)=T_2$, see Section~\ref{sec:reg}.}
    \label{fig:normal-prune-intro2}
\end{figure}

This leads to a fundamental question: given a DAG or network together with its induced leaf-based
information -- such as clusters, LCAs, or visibility -- which structural components are essential
for representing this information, and which are superfluous? In other words, can certain internal
vertices be removed or suitably transformed to yield a simpler directed graph that still faithfully
captures the prescribed leaf-based structure? From this perspective, simplifying a DAG or network
means defining a principled transformation that eliminates ``irrelevant'' substructures -- those
that do not contribute to the chosen leaf-based information -- while preserving the structural
relationships among the leaves. The main challenge is to formalize this notion of irrelevance and to
design simplification procedures that are canonical, well-behaved, and compatible with the
underlying leaf-based representation.

In~\cite{francis2021normalising}, Francis, Huson and Steel introduced a \emph{normalization}
procedure that transforms a network $N$ into a normal network $\norm(N)$ by retaining only visible
vertices, removing shortcuts, and suppressing vertices of in-degree and out-degree equal to one.
More recently, Lindeberg et al.~\cite{HL:24,HL:26} proposed an alternative simplification approach
based on retaining only those vertices that serve as least common ancestors for specified subsets of
leaves. A third, conceptually related, construction is to consider the Hasse diagram of the
clustering system $\mathfrak{C}(G)$, which yields a canonical DAG that is minimal in the sense of
representing all leaf-induced clusters. Note that there are other approaches for simplifying DAGs
and networks. Some focus on producing a tree from a given DAG or network, see for example
\cite{Gusfield:2014,Huber2019,DRESS2010535,Heiss:25,willson2011restrictedtrees,Baroni:06}, and
further simplifications that do not necessarily produce a tree are presented in \cite{Moret:04,
Willson2012CSDhomo,Willson:22}.

Although the normalization, the LCA-driven and the Hasse diagram approaches produce a simplified
version of the underlying DAG or network -- essentially by removing vertices deemed ``irrelevant''
and rewiring certain arcs -- each of them relies on different criteria. Normalization is based on
visibility and thus on path-based properties, whereas the LCA-driven simplification retains vertices
according to their role in explaining leaf subsets. The relationship between a DAG $G$ and the Hasse
diagram $\Hasse(\mathfrak{C}(G))$ of its cluster system has also been partially explored in
\cite{HL:24,HL:26,Baroni:05}; in essence, $\Hasse(\mathfrak{C}(G))$ retains single vertices $v$ for
which $\CC_G(v)$ is contained in $\mathfrak{C}(G)$ and its arcs reflect how such clusters are
related with respect to set inclusion. Our main contribution is to highlight the similarities
and differences between these three approaches to simplifying DAGs and networks, and to place them
within a unified conceptual framework.

In the following, we begin with the LCA-driven perspective and specialize the simplification
framework originally introduced in \cite{HL:24} to a systematic ``regularization'' approach
based on the notion of \emph{lca-relevance}. More specifically, for each integer $i \geq 1$, we
consider the set of vertices that arise as the unique LCA, denoted $\lca_G(A)$, for some subset $A
\subseteq X$ with $|A| \le i$, and we delete all other non-leaf vertices using the 
graph-editing operator $\ominus$ (which generalizes suppression and certain arc
contractions). After subsequently removing shortcuts (denoted with a superscripted ``$-$''), this
yields the \emph{$i$-regularization} $\reg_i(G) =(G\ominus W)^-$ of $G$. Intuitively,
$\reg_i(G)$ is a canonical simplification of $G$ that preserves the LCA structure of all leaf
subsets of size at most $i$.  Moreover, $\reg_i(G)$ is \emph{regular}, which
means that it is completely determined by its set of clusters.

Although regularity of $\reg_i(G)$ is an immediate consequence of results in \cite{HL:24},
that paper does not describe the set of clusters in
$\mathfrak{C}(\reg_i(G))$. To complete the picture, we give a cluster-level characterization of
$i$-regularization in Theorem~\ref{thm:clusters-of-prune}. Specifically, we show that $\reg_i(G)$ is
isomorphic to the Hasse diagram of certain \emph{lca-clusters}: clusters $C$ for which $\lca_G(C)$
is well-defined and forced by some subset $A$ with $|A|\le i$. Thus,
Theorem~\ref{thm:clusters-of-prune} provides a direct link between LCA-based simplification and the
classical cluster/Hasse-diagram viewpoint. This is complemented by
Theorem~\ref{thm:reg_vs_hasse}, which characterizes when the Hasse diagram $\Hasse(\mathfrak{C}(G))$
coincides with the fully regularized graph $\reg(G)\coloneqq\reg_{|X|}(G)$.

Next, we consider normal networks and their behavior under regularization. In
Proposition~\ref{prop:strong-normal=>2lcarel}, we extend the connection between normal and regular
networks established in Theorem~3.9 of \cite{Willson2010} by clarifying the relationships between
normality, strong normality, lca-relevance, and regularity. We then show in
Theorem~\ref{thm:normal=>Lrsf(N)-network} precisely how regularization acts on a normal network.
Besides providing further structural insight into normal networks, these results provide key
ingredients to characterize when normalization and regularization yield the same network, cf.\ 
Theorem~\ref{thm:norm(N)-ominus-form}.

We then turn to the principal goal of this paper: comparing
regularization with the normalization $\norm(N)$ of $N$. These two approaches are distinct. For
example, for the networks $N_1$ and $N_2$ in Figure~\ref{fig:normal-prune-intro2}, the networks
$\reg(N_1)$ and $\norm(N_1)$ coincide, whereas the networks $\reg(N_2)$ and $\norm(N_2)$ do not.
Although both approaches aim
to remove ``irrelevant'' vertices, they differ precisely in the criterion used to determine
which vertices are retained: normalization is based on visibility, a path-based notion that
captures whether a vertex lies on all root-to-leaf paths to some leaf, whereas $i$-regularization is
driven by lca-relevance.

To better understand the interplay between these two notions of relevance, in
Section~\ref{sec:normreg} we express both constructions within the common framework of
transformations of the form $(N\ominus W)^-$. The key difference lies in the
choice of the vertex set $W$. For $i$-regularization, $W$ consists, by definition, of the
vertices that are not unique LCAs of leaf subsets of size at most $i$. For normalization, we
first show that the cover construction $\cov(N)$ can be expressed as $
\cov(N)=(N\ominus\notvis(N))^-$, where $\notvis(N)$ is the set of non-visible vertices of $N$. We
then identify a subclass of visible vertices that are suppressed in the second step of
normalization, the so-called \emph{dispensably visible} vertices, and provide criteria for detecting
them directly in $N$. This yields an $\ominus$-based characterization of normalization itself and,
ultimately, allows us to characterize precisely when $\reg_i(N)$ and $\norm(N)$ coincide; see
Theorem~\ref{thm:norm(N)-ominus-form}, which is the main comparison result of the paper.

The rest of this paper is organized as follows. In Section~\ref{sec:basic} we recall the required
background on DAGs, shortcuts, normal networks, the $\ominus$-operator, clusters, and LCAs.
Section~\ref{sec:reg} introduces $i$-regularization, discusses its basic properties, and characterizes
$\reg_i(G)$ via lca-clusters and Hasse diagrams. It also discusses normal and strongly normal
networks and relates their structure to lca-relevance. In particular, we describe how normal
networks behave under $i$-regularization, that is, under the transformation $N\mapsto \reg_i(N)$. In
Section~\ref{sec:normreg} we then develop the framework for the comparison of network regularization
and normalization, derive $\ominus$-based formulas for $\cov(N)$ and $\norm(N)$, and identify
conditions under which regularization and normalization agree. Finally, in Section~\ref{sec:future}
we conclude with a discussion of some future directions.


\section{Preliminaries}
\label{sec:basic}

In this section, we recall the main definitions and results concerning DAGs, LCAs, clusters, and the
$\ominus$-operator that are needed throughout the paper. Most of this material is taken from, or
follows directly from, \cite{HL:24} and related work; it is included here for reference and to fix
notation. Those readers who are more familiar with these concepts can move on to the next section
and refer back to the relevant statements if necessary.

\paragraph{Basics.}
In what follows, $X$ will always be a finite non-empty set and $\Xci$ the set of non-empty
subsets of $X$ of size at most $i$. Moreover, $A\cupdot B$ denotes the \emph{disjoint union} of two
sets $A$ and $B$. A collection $\mathfrak{T}$ of non-empty subsets of $X$ is a \emph{hierarchy} if
$X\in\mathfrak{T}$, $\{x\}\in\mathfrak{T}$ for all $x\in X$, and for all $A,B\in \mathfrak{T}$, we
have $A\cap B\in \{\emptyset,A,B\}$.

A \emph{directed graph $G=(V,E)$} is an ordered tuple with non-empty vertex set $V(G)\coloneqq V$
and arc set $E(G)\coloneqq E\subseteq V\times V$. We sometimes use $u\to v$ to refer to the arc
$(u,v)\in E(G)$ and $u\leadsto v$ to denote a directed $uv$-path in $G$. We put
$\outdeg_G(v)\coloneqq\left|\left\{u\in V \colon (v,u)\in E\right\}\right|$ and
$\indeg_G(v)\coloneqq\left|\left\{u\in V \colon (u,v)\in E\right\}\right|$ to denote the
\emph{out-degree} and \emph{in-degree} of a vertex $v$, respectively. A vertex $v$ with
$\outdeg_G(v)=0$ is a \emph{leaf} of $G$ and a vertex $v$ with $\indeg_G(v)=0$ is a \emph{root} of
$G$. A \emph{hybrid} is a vertex $v$ with $\indeg_G(v)\geq 2$. Note that leaves might also be
hybrids. A directed graph $G$ is \emph{phylogenetic} if it does not contain a vertex $v$ such that
$\outdeg_G(v)=1$ and $\indeg_G(v)\leq 1$.

Directed graphs $G$ without directed cycles are called \emph{directed acyclic graphs (DAGs)}. If $G$
is a DAG with leaf set $X$, then $G$ is a \emph{DAG on $X$}. A \emph{network} is a DAG with a unique
root. A \emph{(rooted) tree} is a network that does not contain any hybrid vertices. A DAG $G$ is
\emph{separated} if all hybrids $v$ in $G$ have $\outdeg_G(v)=1$ \cite{Hellmuth2023}. 

Let $G$ be a DAG. We write $v\preceq_G w$ if and only if there is a directed $wv$-path in $G$. If
$v\preceq_G w$ and $v\neq w$, we write $v\prec_G w$. If $u\to v$ is an arc in $G$, then $v\prec_G u$
and we call $v$ a \emph{child} of $u$ and $u$ a \emph{parent} of $v$. In a DAG $G$ where $u\preceq_G
v$, we call $u$ a \emph{descendant} of $v$ and $v$ an \emph{ancestor} of $u$. If $u\preceq_G v$ or
$v\preceq_G u$, then $u$ and $v$ are \emph{$\preceq_G$-comparable} and, otherwise,
\emph{$\preceq_G$-incomparable}.

Two DAGs $G$ and $H$ are \emph{isomorphic} if there is an isomorphism between $G$ and $H$, i.e., a
bijective map $\varphi\colon V(G)\to V(H)$ such that $(u,v)\in E(G)$ if and only if
$(\varphi(u),\varphi(v))\in E(H)$. If this is the case, we write $G\simeq H$.
        
An arc $e=(u,w)$ in a DAG $G$ is a \emph{shortcut} if there is a directed $uw$-path that does not
contain the arc $e$ \cite{linz2020caterpillars, DOCKER2019129}. A DAG without shortcuts is
\emph{shortcut-free}. If $F\subseteq E$ is the set of all shortcuts of $G=(V,E)$, then we put
$G^-\coloneqq (V,E\setminus F)$.

\begin{lemma}[{\cite[L.~2.5]{HL:24} and \cite[L.~7.2]{willson2016comparing}}]\label{lem:properties-SF-G}\label{lem:same-prec-so-G=H}
	Let $G$ be a DAG on $X$. Then, $G^-$ is a shortcut-free DAG on $X$ such that, for all $u, v \in
	V(G)$, we have $u \preceq_{G} v$ if and only if $u \preceq_{G^-} v$. Moreover, if $G$ and $H$ are
	shortcut-free DAGs such that $V\coloneqq V(G)=V(H)$ and, for all $u,v\in V$, it holds that
	$u\preceq_G v$ if and only if $u\preceq_H v$, then $G=H$.
\end{lemma}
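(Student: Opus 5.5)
The plan is to treat the two parts of Lemma~\ref{lem:properties-SF-G} separately. Both reduce to one observation: in a finite DAG, reachability is witnessed by paths that use no shortcuts.

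For the first part, note that $G^-$ has the same vertex set as $G$ and a subset of its arcs. Hence $G^-$ is acyclic, and $u\preceq_{G^-}v$ implies $u\preceq_G v$. For the converse, suppose $u\preceq_G v$ and pick a directed $vu$-path $P$ in $G$ of \emph{maximum} length. Such a path exists because $G$ is acyclic and finite, so all paths are simple and bounded in length. If some arc $(a,b)$ of $P$ were a shortcut, there would be an $ab$-path $Q$ avoiding $(a,b)$. Since $Q$ avoids this arc, it has length at least $2$. Replacing $(a,b)$ by $Q$ gives a walk, and in fact a path by acyclicity, because any repeated vertex would create a cycle. This path is strictly longer than $P$, a contradiction. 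So $P$ lies in $G^-$, and $u\preceq_{G^-}v$.

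It remains to check that $G^-$ is shortcut-free and has leaf set $X$. Suppose $(a,b)\in E(G^-)$ were a shortcut of $G^-$. Then the witnessing $ab$-path in $G^-$ is also a path in $G$ avoiding $(a,b)$, so $(a,b)$ was already a shortcut of $G$ and would have been removed. For the leaves: a vertex $v$ with $\outdeg_G(v)\ge1$ has some descendant $w\prec_G v$. By the reachability equivalence, $w\prec_{G^-}v$, so $\outdeg_{G^-}(v)\ge1$. Conversely, removing arcs cannot create out-arcs. Hence the leaf sets of $G$ and $G^-$ coincide.

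For the uniqueness statement, let $G$ and $H$ be shortcut-free with the same vertex set and the same relation $\preceq$, and take an arc $(u,w)\in E(G)$. Then $w\prec_H u$, so there is a $uw$-path in $H$. If this path had length at least $2$, say through an intermediate vertex $z$, then $w\prec_G z\prec_G u$, so $G$ would contain a $uw$-path through $z$. That path avoids the arc $(u,w)$, because $z\neq u,w$ and, by acyclicity, the path cannot revisit $u$. This contradicts shortcut-freeness of $G$. Hence the path in $H$ is the single arc $(u,w)$, giving $E(G)\subseteq E(H)$; by symmetry the arc sets are equal.

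I expect the main subtlety to be the first part, specifically verifying that removing \emph{all} shortcuts simultaneously does not destroy reachability. The maximum-length-path argument handles this cleanly, so it is the step I would write most carefully.
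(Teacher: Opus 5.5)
Your proof is correct. The paper gives no argument of its own for this lemma. It imports it from \cite[L.~2.5]{HL:24} and \cite[L.~7.2]{willson2016comparing}, so your write-up is a self-contained replacement rather than a variant of an in-paper proof.

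Compared with the usual transitive-reduction route, you prove the two parts by two separate direct arguments. The longest-path trick handles the simultaneous deletion of all shortcuts: a longest $vu$-path cannot contain a shortcut, since splicing in the bypass would lengthen it. The bypass has length at least $2$ and remains a path by acyclicity.

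The standard route instead proves one characterization first. In a finite DAG, an arc $(a,b)$ is a shortcut if and only if some vertex $c$ satisfies $b\prec_G c\prec_G a$. Also, every covering pair of the partial order $\preceq_G$ is necessarily an arc of $G$, since a longer path would supply an intermediate vertex. Hence $E(G^-)$ is exactly the set of covering pairs of $\preceq_G$. In particular, a shortcut-free DAG $H$ has as arcs precisely the covering pairs of $\preceq_H$. Both parts then follow at once:
\begin{enumerate}
\item Reachability is preserved because, in a finite poset, every strict relation is a composite of covering steps. Your longest path is just a maximal chain.
\item Uniqueness holds because equal orders have equal cover relations. This is exactly what your uniqueness paragraph checks arc by arc.
\end{enumerate}

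Your version is more elementary and needs no poset language. The cover-relation version packages both statements into one and identifies $G^-$ as the transitive reduction of $G$, determined by $\preceq_G$ alone.

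You are right to invoke finiteness explicitly. For an infinite DAG such as a dense order, every arc is a shortcut and the first part fails.
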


\paragraph{Normal Networks.}
Normal networks form a prominent class of phylogenetic networks
\cite{francis2021normalising,Willson2010}; see \cite{Francis:25} for a comprehensive overview.
Although our primary focus lies on normal \emph{networks}, several technical results are more
naturally formulated in the broader setting of DAGs. For this reason, we develop the theory at the
level of DAGs and specialize to networks where appropriate.

\begin{definition}
	A child $u$ of $v$ in a DAG $G$ is a \emph{tree-child} if $\indeg_G(u) = 1$. A DAG $G$ on $X$ is
	\emph{tree-child} if every vertex $v \in V(G) \setminus X$ has a tree-child \cite{Cardona:09}. A
	DAG is \emph{normal} if it is tree-child and shortcut-free \cite{Willson2010}.
\end{definition}

A common restriction in the phylogenetic literature, in particular for normal networks, is that
hybrid vertices are required to have a unique child. We explicitly call such networks
\emph{separated}. To avoid some technical issues caused by separated vertices, we introduce the
following class of DAGs:

\begin{definition}
	A DAG is \emph{strongly normal} if it is normal and does not contain any vertices that have only
	one child.
\end{definition}

Let $N$ be a network on $X$ with unique root $\rho$. A vertex $v$ of $N$ is \emph{visible (in $N$)}
if there exists a leaf $x\in X$ such that every $\rho x$-path of $N$ passes through $v$. Let
$\vis(N)$ denote the set of visible vertices of $N$ and put $\notvis(N)\coloneqq V(N)\setminus
\vis(N)$. Observe that, by definition, $\rho$ and every $x\in X$ are visible vertices of $N$. The
following result provides an interesting characterization of normal networks.

\begin{theorem}[{\cite[Lem.~2]{Cardona:09}}]\label{thm:char-normal-vis}
	A network is normal if and only if it is shortcut-free and all its vertices are visible. 
\end{theorem}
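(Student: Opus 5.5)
The theorem to prove is Theorem~\ref{thm:char-normal-vis}: a network $N$ with root $\rho$ is normal if and only if it is shortcut-free and every vertex is visible. Since normality is by definition tree-child plus shortcut-free, shortcut-freeness appears on both sides. It therefore suffices to show that a shortcut-free network is tree-child if and only if all its vertices are visible. In fact I expect the forward direction to need no shortcut hypothesis at all.

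\emph{Tree-child $\Rightarrow$ all vertices visible.} Let $v$ be a vertex. If $v\in X$, it is visible by definition. Otherwise, use the tree-child property repeatedly to build a path $v=v_0\to v_1\to\dots\to v_k=x$ with $x\in X$, where each $v_{j+1}$ is a tree-child of $v_j$, so $\indeg_N(v_{j+1})=1$. This is possible because every non-leaf has a tree-child, and the process terminates because $N$ is acyclic and finite. Now take any $\rho x$-path $P$. Its last arc enters $x=v_k$, which has a unique parent $v_{k-1}$, so $P$ passes through $v_{k-1}$. Inducting backwards along the chain, $P$ passes through $v_0=v$. Hence $v$ is visible with witness $x$. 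This step is routine.

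\emph{Shortcut-free and all vertices visible $\Rightarrow$ tree-child.} Suppose, for contradiction, that some non-leaf $v$ has every child a hybrid. Since $v$ is visible, there is a leaf $x$ such that every $\rho x$-path meets $v$. In particular $v\neq\rho$ would give no contradiction from rootedness alone, so the choice of child matters. Pick a child $c$ of $v$ that lies on some $v x$-path. Such a $c$ exists: a $\rho x$-path exists (every vertex is reachable from $\rho$), it passes through $v$, and $v\neq x$ because $v$ is not a leaf. Because $c$ is a hybrid, it has a second parent $u\neq v$. Take a $\rho u$-path, append the arc $u\to c$, and then a $c x$-path; call the result $P$. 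By visibility, $P$ passes through $v$. Since $c$ follows $v$ on $P$ and $N$ is acyclic, $v$ cannot lie on the $cx$-segment, so $v$ lies on the $\rho u$-segment. This gives $u\prec_N v$ with $u\neq v$, and hence a directed path $v\leadsto u\to c$ of length at least two. Therefore the arc $(v,c)$ is a shortcut, contradicting shortcut-freeness. So every non-leaf vertex has a tree-child, and $N$ is normal.

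The main obstacle is this last step. The key is to choose the child $c$ on a path toward the visibility witness $x$, so that the alternative route through the second parent $u$ is forced through $v$. The subtle point is excluding $v$ from the $c\leadsto x$ segment, which needs acyclicity, since $v\prec c$ is impossible once $c$ is a child of $v$. Everything else is bookkeeping with the definition of visibility.
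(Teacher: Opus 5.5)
Your proof is correct in both directions. The paper gives no proof of its own; it only cites Cardona et al. So the useful comparison is with the underlying fact, which is stronger than what you prove. A network is tree-child if and only if all its vertices are visible, with no shortcut hypothesis at all. The statement here is just that equivalence with shortcut-freeness added to both sides.

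Your forward direction (following tree-children down to a leaf) is the standard argument and indeed uses nothing about shortcuts.

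In your reverse direction, however, shortcut-freeness is used essentially: it is what turns $u\prec_N v$ into a contradiction. It can be avoided with one extra choice. Among the children of $v$ that are ancestors of the witness $x$, take $c$ to be $\preceq_N$-maximal. Your argument still yields a second parent $u$ of $c$ with $u\prec_N v$. The $vu$-path then leaves $v$ through some child $c'$ with $x\preceq_N c\prec_N u\preceq_N c'$. So $c'$ is another child of $v$ above $x$ with $c\prec_N c'$, contradicting maximality.

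That small change gives the hypothesis-free equivalence, which is the form the cited source rests on. Your version is slightly shorter but works only because the hypothesis hands you shortcut-freeness.

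Your aside about $v\neq\rho$ is garbled and can simply be dropped. The point it gestures at, that $c$ must be chosen with $x\preceq_N c$, is already handled correctly in your next sentence.
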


\paragraph{The $\ominus$-operator.}
We continue by summarizing important concepts from \cite{HL:24} where, in particular, the following
$\ominus$-operator is studied.

\begin{definition}[{\cite{SCHS:24}}]\label{def:ominus}
	Let $G=(V,E)$ be a DAG and $v\in V$. Then $G\ominus v=(V',E')$ is the directed graph with vertex
	set $V'=V\setminus\{v\}$ and arcs $(p,q)\in E'$ precisely if $v\ne p$, $v\ne q$ and $(p,q)\in E$,
	or if $(p,v)\in E$ and $(v,q)\in E$. For a non-empty subset $W = \{w_1,\dots,w_\ell\} \subsetneq
	V$, define $G\ominus W \coloneqq (\dots ((G \ominus w_1) \ominus w_2) \dots)\ominus w_\ell$. For
	notational reasons, put $G\ominus\emptyset=G$.
\end{definition} 

In simple terms, the directed graph $G \ominus v$ is obtained from $G = (V,E)$ by deleting the
vertex $v$ together with all incident arcs, and then adding an arc $p \to q$ for every parent $p$ of
$v$ and every child $q$ of $v$. If $v$ is a leaf or a root, we simply remove $v$ and its incident
arcs. If $v$ has in-degree one and out-degree one in $G$, then $G\ominus v$ is well-known as the
directed graph obtained from $G$ by \emph{suppressing} $v$. More generally, if $v$ has a unique
child $u$, then $G \ominus v$ can be viewed as contracting the arc $v \to u$ to the single vertex
$u$. Similarly, if $v$ has a unique \emph{parent} $u$, then $G\ominus v$ coincides with contracting
the arc $u\to v$ to $u$. Hence, the $\ominus$-operator unifies and generalizes several common
graph-editing operations on DAGs or networks.

It is also worth noting that the definition of $G \ominus W$ is motivated by the fact that the
$\ominus$-operator is commutative. This fact is mentioned –– without proof –– in
\cite{SCHS:24,HL:24}. For completeness, we shall prove this in the appendix of the paper.

\begin{lemma}\label{ominus:commutative}
	For all DAG $G$ and distinct vertices $u,v\in V(G)$ it holds that $(G \ominus v) \ominus u = (G
	\ominus u) \ominus v$.
\end{lemma}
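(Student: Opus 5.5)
We compare the two graphs by showing that they have the same vertex set and the same arc set. Both $(G\ominus v)\ominus u$ and $(G\ominus u)\ominus v$ have vertex set $V\setminus\{u,v\}$, so only the arcs need attention. The plan is to compute an explicit, symmetric description of the arc set of $(G\ominus v)\ominus u$ directly from Definition~\ref{def:ominus}. Exchanging the roles of $u$ and $v$ will then give the same set for $(G\ominus u)\ominus v$.

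Write $H=G\ominus v$ and let $p,q\in V\setminus\{u,v\}$. By Definition~\ref{def:ominus}, $(p,q)$ is an arc of $H\ominus u$ if and only if either $(p,q)\in E(H)$, or $(p,u)\in E(H)$ and $(u,q)\in E(H)$. Unfolding $E(H)$ once more, each of the arcs $(p,q)$, $(p,u)$, $(u,q)$ lies in $H$ exactly when it is an arc of $G$ or is created by a path through $v$. That is, $(a,b)\in E(H)$ iff $(a,b)\in E$, or $(a,v),(v,b)\in E$, for $a,b\neq v$.

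Substituting, I claim that $(p,q)\in E((G\ominus v)\ominus u)$ if and only if $G$ contains a directed $pq$-path whose interior vertices form a subset of $\{u,v\}$, with each interior vertex used at most once. Concretely, $G$ contains one of the walks
\[
p\to q,\quad p\to v\to q,\quad p\to u\to q,\quad p\to v\to u\to q,\quad p\to u\to v\to q .
\]
For the forward direction, the disjunction expands into exactly these five cases, namely both arcs $p\to u$ and $u\to q$ taken as original arcs or as arcs created through $v$. The one exception is the case where both $(p,u)$ and $(u,q)$ are created through $v$. That case would give $p\to v\to u$ and $u\to v\to q$, which forces the cycle $v\to u\to v$ and contradicts acyclicity, so it cannot occur. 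For the converse, each of the five paths yields the arc by the corresponding case of the definition.

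This description is symmetric in $u$ and $v$, so the same argument applied to $(G\ominus u)\ominus v$ gives the same arc set, and the two graphs are equal. The only delicate step is the case analysis: we must check that no spurious arcs appear from the mixed case. Acyclicity of $G$ handles this. The argument also uses that $G\ominus v$ is again a DAG without loops, which holds because the new arcs $(p,q)$ correspond to paths $p\to v\to q$ in $G$.
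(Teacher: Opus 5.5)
Your proof is correct and takes essentially the same approach as the paper: both unfold Definition~\ref{def:ominus} twice and rule out the only problematic case, where $(p,u)$ and $(u,q)$ are both created through $v$, by the cycle $u\to v\to u$. The difference is packaging: the paper proves one inclusion and says the other follows by an analogous argument, whereas your five-path characterization is symmetric in $u$ and $v$ and so gives both inclusions at once.
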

\begin{proof}
	The proof is straightforward but tedious and, therefore, outsourced to Appendix~\ref{sec:appx}.
\end{proof}
  
Hence, the order of the vertices in $W$ does not affect the result of $G \ominus W \coloneqq (\dots
((G \ominus w_1) \ominus w_2) \dots) \ominus w_\ell$. It is easily seen that $G\ominus W$
remains a DAG and that the existence of directed paths in $G$ is preserved in $G\ominus W$. The
following lemma summarizes these simple but important properties of the $\ominus$-operator.
  
\begin{lemma}[{\cite[Obs~5.3]{HL:24}}]\label{lem:ominus-basics}
	Let $G$ be a DAG on $X$ and $W \subseteq V (G) \setminus X$. Then, $G \ominus W$ is a DAG on $X$
	that satisfies $u\preceq_{G} v\iff u\preceq_{G\ominus W} v$ for all $u,v \in V (G \ominus
	W)=V(G)\setminus W$. 
\end{lemma}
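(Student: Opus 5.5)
The plan is to prove the statement first for a single vertex, $W=\{w\}$, and then extend it to arbitrary $W$ by induction on $|W|$. The extension uses the definition $G\ominus W=(\dots(G\ominus w_1)\dots)\ominus w_\ell$; by Lemma~\ref{ominus:commutative} the order of the $w_j$ does not matter.

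For the single-vertex case, let $w\in V(G)\setminus X$ and $H\coloneqq G\ominus w$. We need three facts.

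\emph{$H$ is a DAG.} Suppose $H$ had a directed cycle. Each arc of $H$ is either an arc of $G$, or a new arc $p\to q$ with $p\to w\to q$ in $G$. Replacing every new arc by the path $p\to w\to q$ yields a closed directed walk in $G$, and hence a directed cycle in $G$, which is impossible.

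\emph{$H$ has leaf set $X$.} Since $w\notin X$, every $x\in X$ remains in $H$. Deleting $w$ can remove arcs out of $x$ only if $x\to w$, and then new arcs from $x$ to the children of $w$ are added. But $x$ is a leaf of $G$, so $x$ has no outgoing arcs at all, and no new outgoing arcs arise. Hence $x$ remains a leaf. Conversely, let $v\neq w$ be a non-leaf of $G$. If $v$ has a child other than $w$, that arc survives in $H$. If its only child is $w$, note that $w$ is not a leaf of $G$, so $w$ has a child $q$, and $v\to q$ is an arc of $H$. Thus the leaves of $H$ are exactly $X$.

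\emph{Reachability is preserved.} We show $u\preceq_G v\iff u\preceq_H v$ for all $u,v\neq w$.
\begin{itemize}
\item ($\Leftarrow$) Given a $vu$-path in $H$, expand each new arc $p\to q$ into $p\to w\to q$. This gives a $vu$-walk in $G$, which contains a $vu$-path.
\item ($\Rightarrow$) Given a $vu$-path in $G$ that passes through $w$, the endpoints are distinct from $w$, so $w$ is an interior vertex with a predecessor $p$ and a successor $q$ on the path. Replace $p\to w\to q$ by the arc $p\to q$ of $H$. Paths avoiding $w$ consist only of arcs of $G$ not incident with $w$, so they lie in $H$ unchanged.
\end{itemize}

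For the induction, assume the claim for $G'\coloneqq G\ominus(W\setminus\{w_\ell\})$. Then $G'$ is a DAG on $X$, and $w_\ell\in V(G')\setminus X$, so the single-vertex case applies to $G'\ominus w_\ell$. Composing the two reachability equivalences gives the claim for all $u,v\in V(G)\setminus W$. The base case $W=\emptyset$ is trivial.

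The only delicate point is the leaf-set claim. It is where the hypothesis $W\cap X=\emptyset$ is essential, since removing a leaf would change $X$. It is also the step that shows no new leaves appear. In the inductive setting this uses that $w_\ell$ is still a non-leaf in $G'$, which is precisely what the inductive hypothesis provides.
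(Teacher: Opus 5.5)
Your proof is correct, and it is the routine verification the paper leaves implicit: the paper gives no argument beyond citing \cite[Obs.~5.3]{HL:24} and remarking that acyclicity and the preservation of directed paths are easily seen. Your single-vertex case covers every claim, and the induction along the enumeration of $W$ extends it correctly, with Lemma~\ref{ominus:commutative} needed only for $G\ominus W$ to be well defined. In that case you expand each new arc $p\to q$ into $p\to w\to q$, check that no leaf is lost or created because $w\notin X$ is a non-leaf, and contract $p\to w\to q$ on paths.
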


\paragraph{Regular Networks, LCAs and Clusters.}}

Let $G$ be a DAG on $X$. For every $v\in V(G)$ we define the \emph{cluster} of $v$ as the set
$\CC_G(v)\coloneqq\{x\in X\mid x\preceq_G v\}$. The set $\mathfrak{C}(G)=\{\CC_G(v)\mid v\in V(G)\}$
comprises all clusters of the vertices in $G$. 

Just as every DAG $G$ on $X$ naturally gives rise to a set system on $X$, we can also associate to
every set system on $X$ a corresponding DAG on $X$, as we now explain. The \emph{Hasse diagram}
$\Hasse(\mathfrak{C})$ of a set system $\mathfrak{C}$ on $X$ is the DAG with vertex set
$\mathfrak{C}$ and arc set containing an arc $A\to B$ for $A,B\in\mathfrak{C}$ if and only if (i)
$B\subsetneq A$ and (ii) there is no $C\in\mathfrak{C}$ with $B\subsetneq C\subsetneq A$. The Hasse
diagram $\Hasse(\mathfrak{C})$ is also known as the \emph{cover digraph} of $\mathfrak{C}$
\cite{Baroni:05}. Intuitively, DAGs that are indistinguishable from the Hasse diagram of their set
of clusters have much of their structure determined by these clusters. Such DAGs are called
\emph{regular}, and are formally defined as follows.

\begin{definition}[{\cite{Baroni:05}}]
  \label{def:regular-N}
  A DAG $G=(V,E)$ is \emph{regular} if the map
  $\varphi\colon V\to V(\Hasse[\mathfrak{C}(G)])$ defined by  $v\mapsto \CC_G(v)$ is an
  isomorphism between $G$ and $\Hasse(\mathfrak{C}(G))$.  
\end{definition}

Note that regular DAGs have been shown to be the only DAGs that can arise from certain simple models
of evolution, see \cite{Baroni:05,Baroni:06B} for details. As we shall see, regular DAGs are
also closely related to properties involving so-called least common ancestors, which we define
next. For a given DAG $G$ on $X$ and a non-empty subset $A\subseteq X$, a vertex $v\in V(G)$ is a
\emph{common ancestor of $A$} if $A\subseteq\CC_G(v)$, i.e., if $v$ is an ancestor of every vertex in
$A$. Moreover, $v$ is a \emph{least common ancestor} (LCA) of $A$ if $v$ is $\preceq_G$-minimal
among all common ancestors of $A$. The set $\LCA_G(A)$ comprises all LCAs of $A$ in $G$. 

\begin{figure}[ht]
  \centering
    \includegraphics[width=0.8\linewidth]{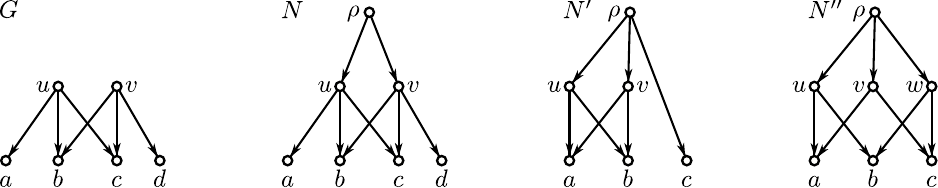}
    \caption{Four DAGs that illustrate certain concepts related to least common ancestors; see
             the text for further details.}
  \label{fig:exmpl-LCA}
\end{figure}

In general, not every set $A\subseteq X$ has a (unique) least common ancestor in a DAG; for example,
consider the DAG $G$ in Figure~\ref{fig:exmpl-LCA} where $|\LCA_G(\{b,c\})|>1$ and
$\LCA_G(\{a,d\})=\emptyset$. In a network $N$ on $X$, the unique root is a common ancestor for all
$A\subseteq X$ and, therefore, $\LCA_N(A)\neq\emptyset$. For simplicity, we will write $\lca_G(A)=v$
in case that $\LCA_G(A)=\{v\}$, in which case we say that \emph{$\lca_G(A)$ is well-defined};
otherwise, we leave $\lca_G(A)$ \emph{undefined}. Moreover, when $\lca_G(\{x,y\})$ is well-defined
for some $x,y\in X$, we write $\lca_G(x,y)$ to refer to $\lca_G(\{x,y\})$. A vertex $v\in V(G)$ is a
\emph{$k$-lca vertex} if $v=\lca_G(A)$ for some $A\subseteq X$ with $|A|=k$. In addition, a DAG $G$
satisfies the \emph{cluster-lca (CL)} property if $\lca_G(\CC_G(v))$ is well-defined for all $v\in
V(G)$.

Not all DAGs satisfy the (CL) property. For example, in the network $N'$ of
Figure~\ref{fig:exmpl-LCA}, $\LCA_{N'}(\CC_{N'}(v))=\LCA_{N'}(\{a,b\})=\{u,v\}$ and thus,
$\lca_{N'}(\CC_{N'}(v))$ is not well-defined. Nevertheless, as the following result shows,
$\lca_G(\CC_G(v))$ is well-defined whenever $v=\lca_G(A)$ for some subset $A\subseteq X$. This, and
the other properties stated in the following lemma, are simple consequences of the definitions and
the fact that $\CC_G(u)\subseteq\CC_G(v)$ for all $u,v\in V(G)$ such that $u\preceq_G v$.

\begin{lemma}[{\cite[Obs.~2]{SCHS:24} \& \cite[Cor.~3.6]{HL:24}}]\label{lem:lca-clusters}
	If $G$ is a DAG on $X$ and $A\subseteq X$ is a non-empty subset such that $\lca_G(A)$ is
	well-defined, then $\CC_G(\lca_G(A))$ is the unique inclusion-minimal cluster in $\mathfrak{C}(G)$
	containing $A$, and $\lca_G(\CC_G(\lca_G(A)))=\lca_G(A)$. In particular, $v\in V(G)$ is a $k$-lca
	vertex in $G$ for some $k$ if and only if $v = \lca_G (\CC_G (v))$. Moreover, if $u\in V(G)$ is a
	vertex such that $A\subseteq\CC_G(u)$, then $\lca_G(A)\preceq_G u$.
\end{lemma}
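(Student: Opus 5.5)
The plan is to derive each assertion of Lemma~\ref{lem:lca-clusters} directly from the definitions, using only that $u\preceq_G v$ implies $\CC_G(u)\subseteq\CC_G(v)$. Throughout, put $v\coloneqq\lca_G(A)$.

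\emph{Step 1: the last assertion.} Let $u$ be a vertex with $A\subseteq\CC_G(u)$, so $u$ is a common ancestor of $A$. Since $G$ is finite and acyclic, among the common ancestors of $A$ that are $\preceq_G u$ we can pick a $\preceq_G$-minimal one, say $w$. We claim $w$ is minimal among \emph{all} common ancestors of $A$. Indeed, any common ancestor $w'\preceq_G w$ also satisfies $w'\preceq_G u$ by transitivity, so $w'=w$. Hence $w\in\LCA_G(A)=\{v\}$, which gives $v=w\preceq_G u$.

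\emph{Step 2: the minimal cluster.} Trivially $A\subseteq\CC_G(v)$. Let $C=\CC_G(u)\in\mathfrak{C}(G)$ be any cluster containing $A$. By Step~1, $v\preceq_G u$, and therefore $\CC_G(v)\subseteq C$. So $\CC_G(v)$ is contained in every cluster containing $A$. This makes it the unique inclusion-minimal such cluster, since any other minimal one must contain it and hence equal it.

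\emph{Step 3: $\lca_G(\CC_G(v))=v$.} First, $v$ is a common ancestor of $\CC_G(v)$. Now suppose $u$ is any common ancestor of $\CC_G(v)$. Then $A\subseteq\CC_G(v)\subseteq\CC_G(u)$, so Step~1 yields $v\preceq_G u$. Thus $v$ lies below every common ancestor of $\CC_G(v)$. Consequently, any minimal common ancestor $u$ satisfies $v\preceq_G u$, and minimality forces $u=v$. Hence $\LCA_G(\CC_G(v))=\{v\}$.

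\emph{Step 4: the ``in particular''.} If $v$ is a $k$-lca vertex, i.e.\ $v=\lca_G(A)$ with $|A|=k$, then Step~3 gives $v=\lca_G(\CC_G(v))$. Conversely, if $v=\lca_G(\CC_G(v))$, then $v$ is a $k$-lca vertex with $k=|\CC_G(v)|$. This uses that $\CC_G(v)\neq\emptyset$, which holds because every vertex of a finite DAG lies above some leaf.

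The only point needing care is Step~1: it relies on uniqueness of the LCA together with finiteness, so that a minimal common ancestor below $u$ exists. Everything else is immediate once Step~1 is in place.
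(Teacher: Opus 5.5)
Your proof is correct and takes the same route the paper indicates: the paper gives no separate proof, only noting that everything follows from the definitions together with the monotonicity $u\preceq_G v\Rightarrow \CC_G(u)\subseteq\CC_G(v)$, and your Steps~1--4 carry this out by proving the final assertion first and deriving the minimal-cluster property and $\lca_G(\CC_G(\lca_G(A)))=\lca_G(A)$ from it. The one point left implicit in Step~3, that the least element $v$ is itself a minimal common ancestor and so $\LCA_G(\CC_G(v))\neq\emptyset$, follows at once from antisymmetry of $\preceq_G$ in a DAG.
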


Note that, if $w$ is the unique child of a vertex $v$ in a DAG $G$, then $\CC_G(v)=\CC_G(w)$ holds.
This together with Lemma~3.4 and 3.5 in \cite{HL:24} implies the following natural property of
vertices with out-degree one.
\begin{lemma}\label{lem:outdeg1-no-lca}
	If $v$ is a vertex of a DAG $G$ on $X$ such that $\outdeg_G(v)=1$, then $v\notin\LCA_G(A)$ for any
	$A\subseteq X$. 
\end{lemma}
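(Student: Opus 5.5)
The plan is to argue by contradiction, using only the cluster characterization in Lemma~\ref{lem:lca-clusters}. Let $w$ be the unique child of $v$, and suppose that $v\in\LCA_G(A)$ for some non-empty $A\subseteq X$. (If $A=\emptyset$, then $\LCA_G(A)$ is not defined in the paper's sense, so we may assume $A\neq\emptyset$.)

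First, I would show that $\CC_G(v)=\CC_G(w)$. Since $v$ is not a leaf, every leaf $x\preceq_G v$ is reached by a directed path starting at $v$. The first arc of that path must be $v\to w$, so $x\preceq_G w$ and hence $\CC_G(v)\subseteq\CC_G(w)$. The reverse inclusion holds because $w\prec_G v$.

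Second, I would derive the contradiction. From $v\in\LCA_G(A)$ we get $A\subseteq\CC_G(v)=\CC_G(w)$, so $w$ is a common ancestor of $A$. But $w\prec_G v$, which contradicts the $\preceq_G$-minimality of $v$ among the common ancestors of $A$. Hence $v\notin\LCA_G(A)$ for every $A$.

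This argument already covers all of $\LCA_G(A)$, not only the case where $\lca_G(A)$ is well-defined. So the appeal to Lemmas~3.4 and~3.5 of \cite{HL:24} is not really needed. No step is a genuine obstacle; the only point needing care is the convention $A\neq\emptyset$, which I handle by noting that $\LCA_G(\emptyset)$ is not considered in the paper.

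As a cross-check in the well-defined case, one can also use Lemma~\ref{lem:lca-clusters}. It would give $v=\lca_G(\CC_G(v))=\lca_G(\CC_G(w))$, while $w$ is a common ancestor of $\CC_G(v)$ lying strictly below $v$, which again contradicts minimality.
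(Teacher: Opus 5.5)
Your proof is correct and uses the same key observation as the paper: the unique child $w$ satisfies $\CC_G(w)=\CC_G(v)$. The paper then concludes by citing Lemmas~3.4 and~3.5 of \cite{HL:24}, whereas you finish directly by noting that $w$ is a common ancestor of $A$ lying strictly below $v$; this makes your argument self-contained, and it even works verbatim for $A=\emptyset$, so you do not need a separate convention for that case.
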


The following definition from \cite{HL:24} captures the notion of a DAG where every vertex is a
$k$-lca vertex for some $k\leq i$.

\begin{definition}\label{def:lca-rel}\label{def:2-lca-rel}
	A DAG $G=(V,E)$ on $X$ is $i$-lca-relevant if, for all $v\in V$, there is some $A\in \Xci$
	such that $v = \lca_G(A)$. 
\end{definition}

In other words, $G$ is $i$-lca-relevant precisely if every vertex of $G$ is a $k$-lca vertex for
some $k\leq i$. Hence, if $G$ is $i$-lca-relevant for some integer $i$, then $G$ is also
$j$-lca-relevant for all $j\geq i$. As shown by the DAG $N''$ in Figure~\ref{fig:exmpl-LCA}, the
converse of the latter statement is not always satisfied: $N''$ is 3-lca-relevant, but not
2-lca-relevant since its root $\rho$ is not the unique LCA for any subset consisting of one or two
leaves only. Note also that the DAG $G$ in Figure~\ref{fig:exmpl-LCA} is 2-lca-relevant, as
$\lca_G(x)=x$ for all $x\in\{a,b,c,d\}$, and $u=\lca_G(a,b)$, $v=\lca_G(c,d)$ are all well-defined.
This holds even though $\lca_G(b,c)$ is undefined. We call $|X|$-lca-relevant DAGs on $X$ simply
\emph{lca-relevant} since, in this case, each vertex $v$ is the unique LCA of just \emph{some}
subset of its leaves, with no restriction on the size of that set. By way of example, the DAGs $G$,
$N$ and $N''$ of Figure~\ref{fig:exmpl-LCA} are lca-relevant, while the network $N'$ is not.
Clearly, all $i$-lca-relevant DAGs are, in particular, lca-relevant.

We also note that there is a surprisingly close connection between regular and lca-relevant DAGs:

\begin{theorem}[{\cite[Thm.~4.10]{HL:24}}]\label{thm:regular-char}
    A DAG $G$ is regular if and only if $G$ is lca-relevant and shortcut-free.
\end{theorem}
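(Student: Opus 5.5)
We prove both directions of Theorem~\ref{thm:regular-char}, using Lemma~\ref{lem:lca-clusters} as the main tool together with the characterization of arcs in a Hasse diagram.

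\emph{($\Rightarrow$)} Suppose $G$ is regular, so that $\varphi\colon v\mapsto \CC_G(v)$ is an isomorphism onto $\Hasse(\mathfrak{C}(G))$.

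First, $\varphi$ is injective, so distinct vertices have distinct clusters. Hence $u\prec_G v$ implies $\CC_G(u)\subsetneq\CC_G(v)$.

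Second, every arc of $\Hasse(\mathfrak{C}(G))$ goes from a set to a proper subset. Hence $\CC_G(u)\subsetneq \CC_G(v)$ if and only if there is a directed path from $\CC_G(v)$ to $\CC_G(u)$ in the Hasse diagram, which holds if and only if $u\prec_G v$. So $\preceq_G$ coincides with inclusion of clusters.

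Shortcut-freeness follows. A Hasse diagram has no shortcuts, since an arc $A\to B$ together with a longer path $A\to C\leadsto B$ would give $B\subsetneq C\subsetneq A$, contradicting condition (ii) of the definition. Shortcut-freeness is preserved under isomorphism, so $G$ is shortcut-free.

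For lca-relevance, fix $v$ and put $A=\CC_G(v)$. Every common ancestor $w$ of $A$ satisfies $\CC_G(v)\subseteq\CC_G(w)$, hence $v\preceq_G w$. So $v$ is the unique minimal common ancestor of $A$, that is, $v=\lca_G(\CC_G(v))$.

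\emph{($\Leftarrow$)} Suppose $G$ is lca-relevant and shortcut-free. By Lemma~\ref{lem:lca-clusters}, $v=\lca_G(\CC_G(v))$ for every vertex $v$. This gives injectivity of $\varphi$ directly: if $\CC_G(u)=\CC_G(v)$, then both $u$ and $v$ equal $\lca_G$ of the same set. Surjectivity onto $\mathfrak{C}(G)$ holds by definition.

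The key claim is the order statement
\[
u\preceq_G v \iff \CC_G(u)\subseteq\CC_G(v).
\]
The forward direction is the standard fact that clusters shrink along paths. For the converse, $\CC_G(u)\subseteq\CC_G(v)$ says that $v$ is a common ancestor of $\CC_G(u)$. The last part of Lemma~\ref{lem:lca-clusters} then gives $u=\lca_G(\CC_G(u))\preceq_G v$.

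Consequently $\varphi$ is an order isomorphism between $(V,\preceq_G)$ and $(\mathfrak{C}(G),\subseteq)$. To compare arcs, transport $G$ along $\varphi$ to a DAG $G'$ on vertex set $\mathfrak{C}(G)$. Then $G'$ is shortcut-free and its reachability relation is exactly inclusion. The Hasse diagram $\Hasse(\mathfrak{C}(G))$ is also shortcut-free, and its reachability relation is inclusion as well, because every proper inclusion in a finite set system refines into a chain of covers. By the uniqueness part of Lemma~\ref{lem:properties-SF-G}, $G'=\Hasse(\mathfrak{C}(G))$. Hence $\varphi$ is an isomorphism, and $G$ is regular.

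The main obstacle is the order-embedding step in ($\Leftarrow$), which is exactly where lca-relevance enters through Lemma~\ref{lem:lca-clusters}. The remaining steps are bookkeeping with Lemma~\ref{lem:properties-SF-G}.
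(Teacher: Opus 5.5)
Your proof is correct and complete. The paper gives no proof of this statement; it imports it from \cite[Thm.~4.10]{HL:24}. So there is no in-paper argument to match, and yours is a self-contained derivation that uses only Lemma~\ref{lem:lca-clusters} and the uniqueness part of Lemma~\ref{lem:properties-SF-G}.

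Your route works because both directions pass through the same intermediate condition, $u\preceq_G v\iff \CC_G(u)\subseteq\CC_G(v)$. This condition is equivalent to (PCC) together with injectivity of $v\mapsto\CC_G(v)$.
\begin{itemize}
\item In ($\Leftarrow$) you show that lca-relevance implies it.
\item In ($\Rightarrow$) you show that regularity implies it, and that it in turn forces $v=\lca_G(\CC_G(v))$.
\end{itemize}
As a by-product, lca-relevance is exactly this order-embedding property, and shortcut-freeness is precisely the extra ingredient that pins down the arcs. This makes transparent why regular DAGs are semi-regular.

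Two small points deserve a line each in a write-up.
\begin{itemize}
\item When you set $A=\CC_G(v)$, note that $A\neq\emptyset$, since LCAs are only considered for non-empty sets. This holds because every vertex of a finite DAG has a leaf descendant.
\item In ($\Rightarrow$), the step from $\CC_G(u)\subsetneq\CC_G(v)$ to a directed path in $\Hasse(\mathfrak{C}(G))$ already uses the refinement of a proper inclusion in a finite set system into a chain of covers. You only state this fact later, in ($\Leftarrow$).
\end{itemize}
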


We close this section by collecting together a couple of results that connect the $\ominus$-operator
with clusters and LCAs and that we shall need later on. The first observation is a direct
consequence of Lemma~\ref{lem:properties-SF-G} and Lemma~\ref{lem:ominus-basics}.

\begin{observation}\label{obs:prune-basics}
	Let $G= (V,E)$ be a DAG on $X$ and $W\subseteq V\setminus X$. Then, $H=(G\ominus W)^-$ is a
	shortcut-free DAG on $X$ that satisfies $u\preceq_G v \iff u\preceq_H v$ for all $u,v\in
	V(H)=V\setminus W$. Consequently, $\CC_H(v)=\CC_G(v)$ for all $v\in V\setminus W$ and, thus,
	$\mathfrak{C}(H)=\{\CC_G(v)\mid v\in V\setminus W \}\subseteq\mathfrak{C}(G)$. 
\end{observation}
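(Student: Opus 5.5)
The plan is to derive Observation~\ref{obs:prune-basics} by composing the two order-preservation results already available, Lemma~\ref{lem:ominus-basics} and Lemma~\ref{lem:properties-SF-G}, and then to read off the statements about clusters directly from the definitions.

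First, apply Lemma~\ref{lem:ominus-basics} to $G$ and $W\subseteq V\setminus X$. This gives that $G\ominus W$ is a DAG on $X$ with vertex set $V\setminus W$, and that $u\preceq_G v\iff u\preceq_{G\ominus W} v$ for all $u,v\in V\setminus W$. Next, apply Lemma~\ref{lem:properties-SF-G} to the DAG $G\ominus W$. It yields that $H=(G\ominus W)^-$ is a shortcut-free DAG on $X$, and that $\preceq_{G\ominus W}$ and $\preceq_H$ agree on $V(H)=V(G\ominus W)=V\setminus W$, since removing shortcuts deletes no vertices. Chaining the two equivalences gives $u\preceq_G v\iff u\preceq_H v$ for all $u,v\in V\setminus W$.

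For the cluster statement, note that $H$ has leaf set $X$ and that $X\subseteq V\setminus W$, because $W$ contains no leaves. Hence for each $v\in V\setminus W$ we have
$\CC_H(v)=\{x\in X\mid x\preceq_H v\}=\{x\in X\mid x\preceq_G v\}=\CC_G(v)$.
It follows that $\mathfrak{C}(H)=\{\CC_H(v)\mid v\in V(H)\}=\{\CC_G(v)\mid v\in V\setminus W\}$, and this set is contained in $\mathfrak{C}(G)$.

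There is no real obstacle here. The only point needing care is to check that the leaf set is preserved at each stage, so that "cluster" refers to the same ground set $X$ in both $G$ and $H$. This is part of the statements of both cited lemmas.
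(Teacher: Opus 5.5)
Your proposal is correct and is essentially the paper's argument: the paper states this observation as a direct consequence of Lemma~\ref{lem:ominus-basics} and Lemma~\ref{lem:properties-SF-G}. You chain their order equivalences on $V\setminus W$ and read off the clusters using $X\subseteq V\setminus W$, which is exactly the intended reasoning.
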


In case $W$ solely contains vertices of out-degree one, we have further control over properties of
$G\ominus W$.
\begin{lemma}\label{lem:ominus-outdeg1}
	Let $G=(V,E)$ be a DAG on $X$ and let $W\subseteq\{v\in V\mid \outdeg_G(v)=1\}$. Then $G\ominus W$
	is a DAG on $X$ such that the following conditions hold.
    \begin{enumerate}
        \item $\LCA_{G\ominus W}(A)=\LCA_G(A)$ for all non-empty $A\subseteq X$.
        \item $\outdeg_{G\ominus W}(v)=\outdeg_G(v)$ for all $v\in V(G\ominus W)$.
    \end{enumerate}
\end{lemma}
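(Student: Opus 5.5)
The plan is to reduce to a single vertex and then induct on $|W|$. For $|W|\ge 2$, fix $w\in W$ and put $G'=G\ominus w$. I then need two facts: every vertex of $W\setminus\{w\}$ still has out-degree one in $G'$, and $G'$ is a DAG on $X$ with the same LCAs. Given these, Lemma~\ref{ominus:commutative} lets me write $G\ominus W=G'\ominus(W\setminus\{w\})$ and apply induction. So everything rests on the case $W=\{v\}$ with $\outdeg_G(v)=1$. Let $c$ be the unique child of $v$. By Lemma~\ref{lem:ominus-basics}, $G\ominus v$ is a DAG on $X$ and $\preceq$ is unchanged on $V\setminus\{v\}$. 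In particular $\CC_{G\ominus v}(u)=\CC_G(u)$ for every remaining vertex $u$. Note that $v\notin X$, since $\outdeg_G(v)=1$.

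For (1), first show $\LCA_G(A)\subseteq V\setminus\{v\}$. This holds because $v\notin\LCA_G(A)$ by Lemma~\ref{lem:outdeg1-no-lca}. Next observe that the common ancestors of $A$ in $G\ominus v$ are exactly the common ancestors of $A$ in $G$ other than $v$, since clusters are preserved. Minimality is also preserved, because the order $\preceq$ is the same on the remaining vertices. Conversely, a vertex that is minimal among common ancestors in $G\ominus v$ is also minimal in $G$. Indeed, the only possible new witness against its minimality in $G$ is $v$. But then $c$ has $\CC_G(c)=\CC_G(v)\supseteq A$ and lies strictly below the vertex, so minimality already fails in $G\ominus v$. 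This gives $\LCA_{G\ominus v}(A)=\LCA_G(A)$.

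For (2), compare children. For $u\ne v$ that is not a parent of $v$, the children of $u$ are unchanged. The vertex $c$ is not affected either, since acyclicity means $c$ is not a parent of $v$. For a parent $p$ of $v$, the child set in $G\ominus v$ is $(\child_G(p)\setminus\{v\})\cup\{c\}$. So equality of out-degree amounts to $c\notin\child_G(p)$, and this is the step I expect to be the main obstacle. The arc $p\to c$ would be a shortcut parallel to the path $p\to v\to c$, and in that case the out-degree of $p$ drops by one. Concretely, suppose $p$ has children $v$ and $c$, and $v$ has only the child $c$. Then $\outdeg_G(p)=2$ but $\outdeg_{G\ominus v}(p)=1$. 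So the equality in (2) as stated needs this collision excluded. I would therefore check whether the paper's intended setting rules out such shortcuts, for instance by assuming $G$ shortcut-free. Under that assumption the replacement $v\mapsto c$ is a bijection $\child_G(p)\to\child_{G\ominus v}(p)$, and (2) follows.

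For the induction it remains to show that out-degree one is preserved for the other vertices of $W$. That is exactly (2) applied to those vertices, and a vertex of out-degree one cannot suffer a collision at all, because it has only one child to replace. Finally, (1) passes along the induction directly, since each step preserves the LCA sets.
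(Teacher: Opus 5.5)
\textbf{Part (2).} You are right, and the flaw is in the paper, not in your reasoning. The paper's proof of (2) claims that deleting $w$ ``replaces, for every parent $p$ of $w$, the arc $p\to w$ by $p\to c$ and hence does not change the out-degree of $p$''. This silently assumes $(p,c)\notin E$. Your example ($p\to v$, $p\to c$, $v\to c$) is a genuine counterexample, so (2) is false for arbitrary DAGs.

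\textbf{Part (1).} Your proof is correct and more self-contained than the paper's. The paper combines Lemma~\ref{lem:outdeg1-no-lca} with Theorem~5.5 of \cite{HL:24}, an external result on how $\ominus$ acts on LCA sets. You instead prove the one-vertex case directly. The key point is that $v$ is the only new potential witness against minimality, and it can be replaced by its child $c$, which has the same cluster. Your observation that the remaining vertices of $W$ keep out-degree one is also correct, since a single child admits no collision. This is exactly what your induction for (1) needs, with no shortcut hypothesis.

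\textbf{The gap: your repair of (2).} Shortcut-freeness does settle the case $|W|=1$. However, it is not preserved by $\ominus$, so it cannot be carried through your induction. In fact (2) still fails for shortcut-free $G$ once $|W|\ge 2$. Take arcs $p\to w_1$, $p\to q$, $w_1\to c$, $q\to c$ with $c$ a leaf, and $W=\{w_1,q\}$. Then $G$ is shortcut-free, but $p\to c$ is a shortcut of $G\ominus w_1$, and $\outdeg_{G\ominus W}(p)=1\neq 2=\outdeg_G(p)$.

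In general, for $p\in V\setminus W$ one has $\child_{G\ominus W}(p)=\{r(q)\mid q\in\child_G(p)\}$. Here $r(q)$ is the first vertex outside $W$ on the out-degree-one chain starting at $q$. So (2) holds precisely when $r$ is injective on $\child_G(p)$ for every $p\in V\setminus W$.

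A natural hypothesis guaranteeing this is that $G$ be normal. Deleting a single out-degree-one vertex keeps $G$ normal; that part of the proof of Proposition~\ref{prop:lca-i/reg-i_of_normal} does not rely on (2). Hence your single-vertex shortcut-free argument applies at every step. This covers every place where the paper invokes (2):
\begin{itemize}
\item Proposition~\ref{prop:lca-i/reg-i_of_normal};
\item Proposition~\ref{prop:normN-basics}, where the graph is $\cov(N)$, which is normal by Lemma~\ref{lem:cov(N)-ominus-form}.
\end{itemize}
So the downstream results survive, but the lemma as stated needs that extra hypothesis.
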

\begin{proof}
	Let $G$ and $W$ be as stated. We first observe that by definition, $W$ cannot contain any leaves
	of $G$. Thus, Lemma~\ref{lem:ominus-basics} implies that $G\ominus W$ is a DAG on $X$. Now for the
	first statement, note that if $w\in W$, then $w\notin\LCA_G(Y)$ for any $Y\subseteq X$, see
	Lemma~\ref{lem:outdeg1-no-lca}. This together with \cite[Thm~5.5]{HL:24} ensures that
	$\LCA_{G\ominus W}(A)=\LCA_G(A)$ for all non-empty $A\subseteq X$.

	For Statement~(2), observe that removing a vertex $w$ of out-degree one with unique child $c$
	replaces, for every parent $p$ of $w$, the arc $p\to w$ by $p\to c$ and hence does not change the
	out-degree of $p$; the out-degrees of all other remaining vertices are unaffected. The claim
	therefore follows by repeatedly applying this observation to the vertices of $W$.
\end{proof}

In the final definition for this section, we provide a slight generalization of regular DAGs.

\begin{definition}[\cite{Hellmuth2023}]
	A DAG is \emph{semi-regular} if it is shortcut-free and satisfies the following condition 
  \begin{description}
		\item[(PCC):] for all $u, v \in V(G)$ it holds that $u$ and $v$ are $\preceq_G$-comparable if
		              and only if $\CC_G(u) \subseteq \CC_G(v)$ or $\CC_G(v) \subseteq \CC_G(u)$.
    \end{description}
\end{definition}

Theorem~4.6 of \cite{HL:24} shows that semi-regular DAGs generalize regular DAGs, i.e., that every
regular DAG is also semi-regular. As noted earlier, the DAGs $G$, $N$, and $N''$ in
Figure~\ref{fig:exmpl-LCA} are regular, and thus also semi-regular. In contrast, the network $N'$
is not semi-regular: it contains the $\preceq_{N'}$-incomparable vertices $u$ and $v$ that satisfy
$\CC_{N'}(u)=\{a,b\}=\CC_{N'}(v)$. Therefore, $N'$ violates the (PCC) property.

\section{Regularizations}
\label{sec:reg}

In this section, we consider a systematic way to transform a DAG into a regular DAG while
preserving its least common ancestor (LCA) structure up to some prescribed level. The regularization
procedure considered here is a specialization of the more general LCA-based simplification framework
introduced in \cite{HL:24}. We formulate this construction explicitly in terms of subsets of bounded
cardinality, which leads to the notion of $i$-regularization. Intuitively, this amounts to removing
vertices that do not play any role as LCAs of ``small'' subsets of leaves.

In Section~\ref{subs:regDAGs}, we recall the basic properties of this
construction and derive new connections between $i$-regularization, lca-clusters, and Hasse diagrams. In particular, we characterize
the set of clusters of $\reg_i(G)$ in terms of lca-clusters and show that $\reg_i(G)$ is isomorphic
to the Hasse diagram of this cluster system (see Theorem~\ref{thm:clusters-of-prune}). We further
characterize when the fully regularized DAG $\reg(G)$ coincides with the Hasse diagram of the
complete cluster system $\mathfrak{C}(G)$ (see Theorem~\ref{thm:reg_vs_hasse}).

In Section~\ref{sec:norm}, we investigate how LCA-based regularization acts on normal networks
and, in particular, clarify the relationship between normality, strong normality, and lca-relevance.
We show that if $N$ is a normal network, then $\reg(N)$ is again a normal network, and characterize
when $\reg(N)=N$ (see Theorem~\ref{thm:normal=>Lrsf(N)-network}). These structural results also
provide important ingredients for Section~\ref{sec:normreg}, where we compare regularization with
normalization within the common $\ominus$-framework.

\subsection{Regularizing DAGs}
\label{subs:regDAGs}

Recall that $\Xci$ denotes the non-empty subsets of $X$ of size at most $i$. For an integer
$i\geq 1$ and a DAG $G$, define the set
\[\lcaV{i}(G) \coloneqq \left\{v\in V(G)\mid v=\lca_G(A) \text{ for some } A\in \Xci\right\}
  \text{\ and \ } \notlcaV{i}(G) \coloneqq V(G)\setminus \lcaV{i}(G).\]
In other words, $\lcaV{i}(G)$ contains each vertex of $G$ that is a $k$-lca vertex for some $k\leq
i$ while $\notlcaV{i}(G)$ is the complement of that set. Note that $X\subseteq \lcaV{i}(G)$ for all
$i$, since every $x\in X$ satisfies $x=\lca_G(\{x\})$ for any DAG $G$ on $X$.

Another simple observation stems from Lemma~\ref{lem:outdeg1-no-lca}, namely, if a vertex $v$ has
out-degree one, then it is not an LCA of \emph{any} set of leaves. Consequently, vertices of
out-degree one are in particular always elements of $\notlcaV{i}(G)$ for any integer $i$ and any DAG
$G$. The next lemma gives a slight extension of this result which we will use later.

\begin{lemma}\label{lem:lca-sets-inclusions}
	Suppose $G$ is a DAG on $X$ and let $W=\{w\in V(G)\mid \outdeg_G(w)=1\}$. Then,
	\[\begin{aligned}
	X&=\lcaV{1}(G)\subseteq \lcaV{2}(G)\subseteq \ldots \subseteq \lcaV{|X|}(G)\subseteq V(G)\setminus W, &\text{and}\\ 
	W&\subseteq\notlcaV{|X|}(G)\subseteq\ldots\subseteq\notlcaV{2}(G)\subseteq\notlcaV{1}(G)=V(G)\setminus X& 
	\end{aligned}\]
\end{lemma}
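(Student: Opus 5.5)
The plan is to prove the first chain of inclusions directly from the definition of $\lcaV{i}(G)$, and then obtain the second chain by taking complements in $V(G)$.

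\textbf{Step 1: $\lcaV{1}(G)=X$.} Every $x\in X$ satisfies $x=\lca_G(\{x\})$, since $x$ is a common ancestor of $\{x\}$ and any common ancestor $v$ of $\{x\}$ has $x\preceq_G v$. Hence $X\subseteq\lcaV{1}(G)$. Conversely, suppose $v=\lca_G(\{x\})$ for some $x\in X$. Then $x$ is itself a common ancestor of $\{x\}$ with $x\preceq_G v$. Minimality of $v$ forces $v=x$, so $\lcaV{1}(G)\subseteq X$.

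\textbf{Step 2: the monotone middle.} For $i\le j$ we have $\Xci\subseteq\mathcal{P}_{\leq j}(X)$. Any $v$ witnessed by some $A\in\Xci$ is therefore also witnessed for $j$, which gives $\lcaV{i}(G)\subseteq\lcaV{j}(G)$. In particular the chain runs up to $\lcaV{|X|}(G)$.

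\textbf{Step 3: the last inclusion.} We need $\lcaV{|X|}(G)\subseteq V(G)\setminus W$. If $w\in W$, then $\outdeg_G(w)=1$, and Lemma~\ref{lem:outdeg1-no-lca} gives $w\notin\LCA_G(A)$ for every $A\subseteq X$. In particular $w\neq\lca_G(A)$ for every $A\in\mathcal{P}_{\leq |X|}(X)$, so $w\notin\lcaV{|X|}(G)$. Of course $\lcaV{|X|}(G)\subseteq V(G)$ trivially.

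\textbf{Step 4: complements.} Since $\notlcaV{i}(G)=V(G)\setminus\lcaV{i}(G)$, complementation reverses all inclusions from Steps 1--3. Thus $\notlcaV{1}(G)=V(G)\setminus X$, the middle sets form a decreasing chain, and $W=V(G)\setminus(V(G)\setminus W)\subseteq\notlcaV{|X|}(G)$.

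No step is a real obstacle. The only point needing care is the reverse inclusion $\lcaV{1}(G)\subseteq X$ in Step 1, which rests on the minimality argument given there.
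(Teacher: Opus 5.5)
Your proposal is correct and follows the paper's own proof essentially step for step: $\lcaV{1}(G)=X$ from the singleton case, monotonicity directly from the definition, $\lcaV{|X|}(G)\cap W=\emptyset$ via Lemma~\ref{lem:outdeg1-no-lca}, and the second chain by complementation. Your only addition is to spell out the minimality argument behind $\lcaV{1}(G)=X$, which the paper states without detail.
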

\begin{proof}
	Let $G$ be a DAG on $X$, $|X|=\ell$ and $W=\{w\in V(G)\mid \outdeg_G(w)=1\}$. Note that
	$\lcaV{1}(G)=X$ follows from the fact that $v=\lca_G(\{x\})$ for some $x\in X$ if and only if
	$v=x$. Moreover, the inclusions $\lcaV{i}(G)\subseteq\lcaV{i+1}(G)$ for each $1\leq i< \ell$
	follow from the definition: if $v\in\lcaV{i}(G)$, then $v$ is a $j$-lca vertex for some $j$ with
	$1\leq j\leq i$ and thus clearly also for some $j$ with $1\leq j\leq i+1$, i.e.
	$v\in\lcaV{i+1}(G)$. Moreover, Lemma~\ref{lem:outdeg1-no-lca} implies that $w\notin\LCA_G(A)$ for
	all $w\in W$ and non-empty $A\subseteq X$. Consequently, $\lcaV{\ell}(G)\cap W=\emptyset$.
	Therefore, \[X=\lcaV{1}(G)\subseteq \lcaV{2}(G)\subseteq \ldots \subseteq \lcaV{\ell}(G)\subseteq
	V(G)\setminus W.\] The second chain of inclusions is now an immediate consequence of the latter
	statement and the fact that $\notlcaV{i}(G)=V(G)\setminus\lcaV{i}(G)$ for all $i$.
\end{proof}

We now introduce the main definition of this section. It defines a transformation that maps a DAG
$G$ to a DAG $G'$ by first removing, via the $\ominus$-operator, all vertices that are not $k$-lca
vertices for any $k\leq i$, and then removing all shortcuts. As we shall see, this transformation
always yields an $i$-lca-relevant and regular DAG; see also Theorem~\ref{thm:prune-regular}.

\begin{definition}\label{def:regularization}
	For an integer $i\geq 1$, the \emph{$i$-regularization} of a DAG $G$ is $\reg_i(G)\coloneqq
	(G\ominus \notlcaV{i}(G))^-$. In particular, if $G$ is a DAG on $X$, then we define the
	\emph{regularization} of $G$ to be $\reg(G)\coloneqq\reg_{|X|}(G)$.
\end{definition}

\begin{figure}
    \centering
    \includegraphics[width=0.75\linewidth]{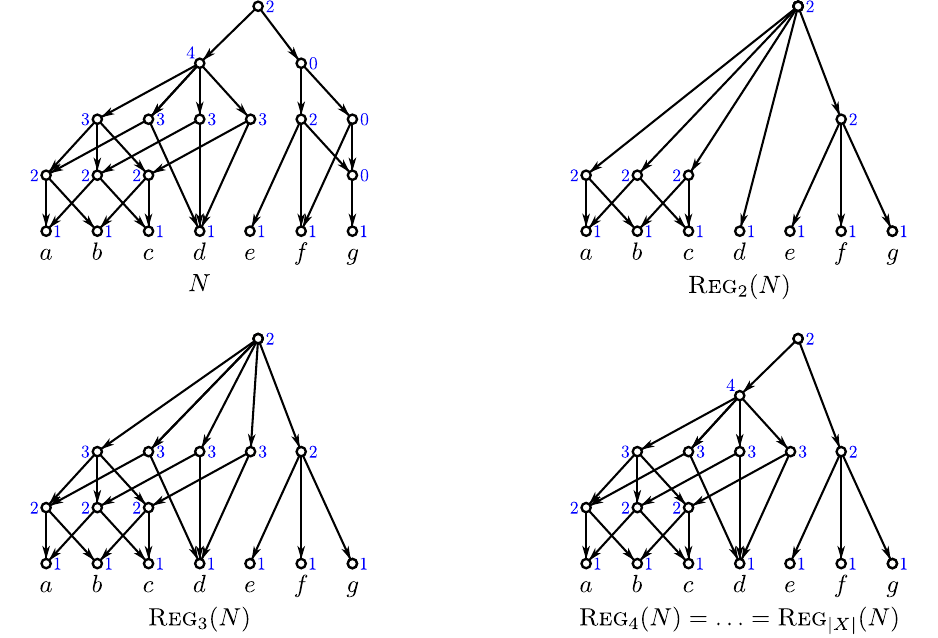}
    \caption{
    A network $N$ on $X=\{a,b,c,d,e,f,g\}$. Numbers $i$ next to a vertex $v$ denote the smallest
    integers such that $v\in \lcaV{i}(N)$. If $i=0$, then the respective vertex is not a $k$-lca
    vertex for any $k$. In addition several networks $\reg_i(N)$ are shown. Note that $\reg_1(N) =
    (X,\emptyset)$ just consists of $X$ and no arcs.
    }
    \label{fig:regi-seq}
\end{figure}

Definition~\ref{def:regularization} is illustrated in Figure~\ref{fig:regi-seq}. The figure
also shows that, for a DAG $G$ on $X$ and integers $j<i$, the DAG $\reg_i(G)$ can be
viewed as a refinement of $\reg_j(G)$, with the sequence starting
at the arc-less DAG $\reg_1(G) = (X,\emptyset)$.

As outlined in the introduction, various approaches to simplify networks have been proposed in
recent years. In \cite{HL:24}, Lindeberg and Hellmuth studied simplifications $G\ominus W$ of
DAGs $G$ in which each vertex $v\notin W$ is an $i$-lca vertex for some
$i\in\mathcal{I}$, where $\mathcal{I}$ is a set of positive integers. The $i$-regularization of
Definition~\ref{def:regularization} is therefore a natural specialization of this
construction to $\mathcal{I}=\{1,\ldots,i\}$, followed by the removal of all shortcuts. In particular, the simplification
$\varphi_{\lca}(G)$ introduced in \cite[Sec.~6]{HL:24} coincides with the ``full'' regularization of
$G$ as given in Definition~\ref{def:regularization}, that is, \[\reg(G) = (G\ominus
\notlcaV{|X|}(G))^- =\varphi_{\lca}(G).\] It follows that $\reg(G)$ satisfies three additional
``simplification'' axioms that are desirable for such transformations, provided that $G$ is a DAG
satisfying the (CL) property; see \cite[Sec.~6]{HL:24} and \cite{Heiss:25} for further details.

\begin{figure}
    \centering
    \includegraphics[width=0.8\linewidth]{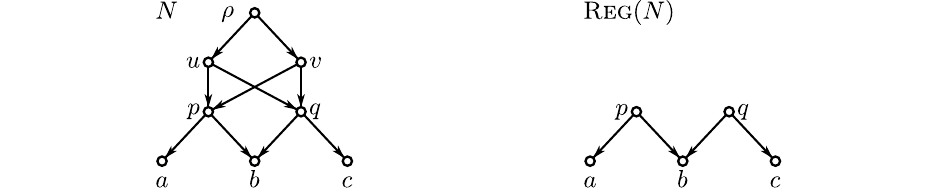}
    \caption{The network $N$ has a unique root, while the corresponding DAG $\reg(N)$ has two.}
    \label{fig:reg_loose_root}
\end{figure}

It is worth noting that, in general, there is no direct relationship between the number of
roots of a DAG $G$ and the number of roots of $\reg_i(G)$ for $i\geq 1$. For example, the network
$N$ in Figure~\ref{fig:reg_loose_root} loses its unique root under regularization: $\reg(N)$ has two
distinct roots. On the other hand, the DAG $G$ in
Figure~\ref{fig:first-hasse-prune} has three roots, while $\reg(G)$ has only one. Thus, even
when $G$ is a network, $\reg_i(G)$ need not be a network, and conversely, a DAG with several roots
may have a regularization that is a network.

Our main motivation for considering $\reg_i(G)$ stems from results established in \cite{HL:24}, which we restate using our notation.

\begin{theorem}[\cite{HL:24}]\label{thm:prune-regular}
	Let $G$ be a DAG on $X$ and $i\geq 1$ be an integer. Then, $\reg_i(G)$ is an $i$-lca-relevant and regular DAG on $X$. Moreover, $G$ is regular if and only if $\reg(G) = G$.
\end{theorem}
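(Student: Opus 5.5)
The plan is to work with $H\coloneqq\reg_i(G)=(G\ominus W)^-$, where $W\coloneqq\notlcaV{i}(G)$. By Lemma~\ref{lem:lca-sets-inclusions}, $W\subseteq V(G)\setminus X$, so Observation~\ref{obs:prune-basics} applies. It tells us that $H$ is a shortcut-free DAG on $X$ whose vertex set is $\lcaV{i}(G)$, that $\preceq_H$ is the restriction of $\preceq_G$, and that $\CC_H(v)=\CC_G(v)$ for all $v\in V(H)$. Once $H$ is shown to be $i$-lca-relevant, it is in particular lca-relevant, and Theorem~\ref{thm:regular-char} immediately gives that $H$ is regular. So the main task is the $i$-lca-relevance.

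For $i$-lca-relevance, I fix $v\in V(H)$ and choose $A\in\Xci$ with $v=\lca_G(A)$. I claim that $v=\lca_H(A)$.

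First, $v$ is a common ancestor of $A$ in $H$, because the clusters in $H$ and $G$ agree.

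Second, let $u\in V(H)$ be any common ancestor of $A$ in $H$. Then $A\subseteq\CC_H(u)=\CC_G(u)$. By the last statement of Lemma~\ref{lem:lca-clusters}, this gives $v\preceq_G u$, and since $\preceq_H$ is the restriction of $\preceq_G$, also $v\preceq_H u$.

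Hence $v$ lies below every common ancestor of $A$ in $H$, so it is the unique $\preceq_H$-minimal one and $\LCA_H(A)=\{v\}$. Since $A\in\Xci$, every vertex of $H$ is an $i$-lca vertex, and $H$ is $i$-lca-relevant.

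For the equivalence "$G$ regular iff $\reg(G)=G$", I take $i=|X|$.

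If $\reg(G)=G$, then $G$ is regular by the first part.

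Conversely, suppose $G$ is regular. By Theorem~\ref{thm:regular-char}, $G$ is lca-relevant and shortcut-free, so $\notlcaV{|X|}(G)=\emptyset$. Then $\reg(G)=(G\ominus\emptyset)^-=G^-=G$, where the last equality holds because $G$ has no shortcuts.

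I expect no step to be a real obstacle. The only delicate point is checking that uniqueness of the LCA transfers from $G$ to $H$. This holds because $H$ has fewer vertices than $G$ while preserving both the ancestor order and the clusters, and because $v$ itself survives in $H$, as $v\in\lcaV{i}(G)$.
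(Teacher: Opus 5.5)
Your proof is correct. Its skeleton matches the paper's: establish $i$-lca-relevance of $\reg_i(G)$, invoke Theorem~\ref{thm:regular-char} for regularity, and obtain the second statement from Theorem~\ref{thm:regular-char} again.

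The difference is in the key step. The paper does not argue $i$-lca-relevance directly. It cites Theorem~5.5 of \cite{HL:24} to get that $G\ominus\notlcaV{i}(G)$ is $i$-lca-relevant, and then uses Observation~\ref{obs:prune-basics} to carry this through the shortcut removal. You instead prove it in one step on $H$ itself. You use only the last assertion of Lemma~\ref{lem:lca-clusters}, namely that any common ancestor $u$ of $A$ satisfies $\lca_G(A)\preceq_G u$, together with the preservation of $\preceq$ and of clusters from Observation~\ref{obs:prune-basics}.

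Your route is self-contained within the paper, and it shows slightly more. For any $W\subseteq V(G)\setminus X$ and any $A$ such that $\lca_G(A)$ is well-defined and not in $W$, you get $\lca_{(G\ominus W)^-}(A)=\lca_G(A)$. Hence $\reg_i$ preserves the unique LCAs of all leaf subsets of size at most $i$.

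The paper's citation buys brevity. It also ties into the more general description in \cite{HL:24} of how LCA sets behave under $\ominus$, which is used again in Lemma~\ref{lem:ominus-outdeg1}. The cost is that it relies on machinery not reproduced in the paper.
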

\begin{proof}
    Theorem~5.5 in \cite{HL:24} ensures that $G\ominus \notlcaV{i}(G)$ is $i$-lca-relevant for every DAG $G$ and integer $i\geq1$. Since $\reg_i(G)=(G\ominus \notlcaV{i}(G))^-$ is shortcut-free, Observation~\ref{obs:prune-basics} implies that $\reg_i(G)$ is $i$-lca-relevant. Thus, regularity of $\reg_i(G)$ follows from Theorem~\ref{thm:regular-char}. The second statement is an immediate consequence of the first statement and Theorem~\ref{thm:regular-char}.
\end{proof}

The key point of Theorem~\ref{thm:prune-regular} is that $\reg_i(G)$ is regular, which motivates its
name. Consequently, its structure is determined solely by its set of clusters. Although
\cite{HL:24} implicitly considers $\reg_i(G)$, it does not provide a description of $\mathfrak{C}(\reg_i(G))$, which we now establish. To this end, we first introduce a new definition.

\begin{definition}\label{def:lca-clusters}
	Let $G$ be a DAG on $X$. The set of \emph{lca-clusters} of $G$ is defined to be
	\[\mathfrak{C}_{\lca}(G)\coloneqq\{C\in\mathfrak{C}(G)\mid \lca_G(C) \text{ is well-defined}\}.\]
	Furthermore, for each integer $i\in\{1,2,\ldots,|X|\}$, we define
	\[\mathfrak{C}_{\lca}^{i}(G)\coloneqq\{C\in\mathfrak{C}_{\lca}(G)\mid \exists A\in \Xci:\, C
	\text{ is the unique inclusion-minimal element in }\mathfrak{C}(G) \text{ containing }A\}.\]
\end{definition}

Definition~\ref{def:lca-clusters} is illustrated with the help of the following example.
\begin{example*}
	In Figure~\ref{fig:exmpl-LCA}, two networks $N'$ and $N''$ are shown that both have leaf set
	$X=\{a,b,c\}$. For $N'$, we have $\mathfrak{C}(N') = \{\{a,b,c\}, \{a,b\}, \{a\}, \{b\}, \{c\}\}$
	where $\CC_{N'}(u)=\CC_{N'}(v)=\{a,b\}$, i.e., $u$ and $v$ are not distinguishable from their
	associated clusters. Neither $u$ nor $v$ serve as a unique LCA for any subset of leaves. In
	particular, $\CC_{N'}(v)=\CC_{N'}(u)=\{a,b\}\notin \mathfrak{C}_{\lca}(N') $ and we obtain
	$\mathfrak{C}_{\lca}(N') = \{\{a,b,c\}, \{a\}, \{b\}, \{c\}\}$. It is easy to verify that
	$\mathfrak{C}_{\lca}^{1}(N') = \{\{a\}, \{b\}, \{c\}\}$ and $\mathfrak{C}_{\lca}^{3}(N') =
	\mathfrak{C}_{\lca}(N')$. In this example, we even have $\mathfrak{C}_{\lca}^{2}(N') =
	\mathfrak{C}_{\lca}(N')$ since $\{a,b,c\} \in \mathfrak{C}_{\lca}(N')$ is the unique
	inclusion-minimal element in $\mathfrak{C}(N')$ that contains for example $A= \{a,c\} \in
	    \Xci$. For $N''$, we have $\mathfrak{C}(N'')=\{A\mid \emptyset\neq A\subseteq X\}=\Xci[3]$
	and $\mathfrak{C}_{\lca}^{i}(N'') =\Xci$, $1\leq i\leq 3$.
\end{example*}

Note that, by definition, $\mathfrak{C}_{\lca}^{i}(G)\subseteq \mathfrak{C}_{\lca}^{j}(G)$ always
holds whenever $i\leq j$. Moreover, there are DAGs $G$, for example phylogenetic trees, for which
$\mathfrak{C}_{\lca}(G) = \mathfrak{C}(G) $ holds. We now describe the set of clusters of
$\reg_i(G)$.

\begin{theorem}\label{thm:clusters-of-prune}
	For each DAG $G$ on $X$ and every $i\in\{1,2,\ldots,|X|\}$,
	$\mathfrak{C}(\reg_i(G))=\mathfrak{C}_{\lca}^{i}(G)$. In particular, since $\reg_i(G)$ is regular,
	$\reg_i(G)\simeq\Hasse(\mathfrak{C}_{\lca}^{i}(G))$.
\end{theorem}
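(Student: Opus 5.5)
By Observation~\ref{obs:prune-basics}, applied with $W=\notlcaV{i}(G)\subseteq V(G)\setminus X$ (recall $X\subseteq\lcaV{i}(G)$), we have $\CC_{\reg_i(G)}(v)=\CC_G(v)$ for every $v\in\lcaV{i}(G)$. Hence
\[
\mathfrak{C}(\reg_i(G))=\{\CC_G(v)\mid v\in\lcaV{i}(G)\}.
\]
So the whole claim reduces to the set identity $\{\CC_G(v)\mid v\in\lcaV{i}(G)\}=\mathfrak{C}_{\lca}^{i}(G)$. Once it holds, the isomorphism $\reg_i(G)\simeq\Hasse(\mathfrak{C}_{\lca}^{i}(G))$ follows at once, because $\reg_i(G)$ is regular by Theorem~\ref{thm:prune-regular}. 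By Definition~\ref{def:regular-N}, the map $v\mapsto\CC_{\reg_i(G)}(v)$ is then an isomorphism onto $\Hasse(\mathfrak{C}(\reg_i(G)))$.

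For the inclusion ``$\subseteq$'', let $v\in\lcaV{i}(G)$, so $v=\lca_G(A)$ for some $A\in\Xci$. By Lemma~\ref{lem:lca-clusters}, $C\coloneqq\CC_G(v)$ is the unique inclusion-minimal cluster in $\mathfrak{C}(G)$ containing $A$. The same lemma gives $\lca_G(C)=\lca_G(A)=v$, so $\lca_G(C)$ is well-defined and $C\in\mathfrak{C}_{\lca}(G)$. Together these give $C\in\mathfrak{C}_{\lca}^{i}(G)$.

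For the inclusion ``$\supseteq$'', let $C\in\mathfrak{C}_{\lca}^{i}(G)$. Put $v\coloneqq\lca_G(C)$, and fix $A\in\Xci$ such that $C$ is the unique inclusion-minimal cluster containing $A$. I claim that $v=\lca_G(A)$ and that $\CC_G(v)=C$.

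First, $C=\CC_G(u)$ for some vertex $u$, so $u$ is a common ancestor of $C$ and hence $\CC_G(v)\subseteq\CC_G(u)=C$, since $v\preceq_G u$ for some LCA dominated by $u$. Conversely $C\subseteq\CC_G(v)$, so $\CC_G(v)=C$.

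Next, let $w\in\LCA_G(A)$. Then $\CC_G(w)$ is a cluster containing $A$. Every cluster containing $A$ contains $C$: the clusters containing $A$ form a finite family, each of them contains an inclusion-minimal one, and the only minimal one is $C$. Hence $C\subseteq\CC_G(w)$, so $w$ is a common ancestor of $C$, and therefore some LCA of $C$ lies below $w$; that LCA must be $v$, giving $v\preceq_G w$. But $v$ is itself a common ancestor of $A$, because $A\subseteq C=\CC_G(v)$. Minimality of $w$ among common ancestors of $A$ then forces $w=v$.

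Thus $\LCA_G(A)=\{v\}$, which is non-empty since $v$ is a common ancestor of $A$ and $G$ is finite. So $v=\lca_G(A)$ with $A\in\Xci$, i.e.\ $v\in\lcaV{i}(G)$, and $C=\CC_G(v)$ lies in the left-hand set.

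The main obstacle is this second inclusion. One has to be careful that ``unique inclusion-minimal'' really forces every cluster containing $A$ to contain $C$, which holds by finiteness. One also has to use well-definedness of $\lca_G(C)$ to pin down both $\CC_G(v)=C$ and uniqueness of the LCA of $A$. The remaining steps are direct applications of the earlier lemmas.
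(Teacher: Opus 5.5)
Your proof is correct and follows the paper's argument step for step. You reduce the claim to $\{\CC_G(v)\mid v\in\lcaV{i}(G)\}=\mathfrak{C}_{\lca}^{i}(G)$ via Observation~\ref{obs:prune-basics}, obtain ``$\subseteq$'' from Lemma~\ref{lem:lca-clusters}, prove ``$\supseteq$'' by showing every $w\in\LCA_G(A)$ lies above $v=\lca_G(C)$, and finish with regularity; you also make explicit two points the paper leaves implicit, namely $\CC_G(\lca_G(C))=C$ and the finiteness argument that every cluster containing $A$ contains $C$.
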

\begin{proof}
  Let $G$ be a DAG on $X$, put $\ell\coloneqq |X|$ and let $i\in\{1,\ldots,\ell\}$. We first note
  that the statements in Observation~\ref{obs:prune-basics} are, in particular, valid for
  $H\coloneqq \reg_i(G) = (G\ominus \notlcaV{i}(G))^-$ and that $V(H)=\lcaV{i}(G)$. Hence, to prove
  the theorem, we show that $\mathfrak{C}(H)=\mathfrak{C}_{\lca}^{i}(G)$.
          
  First, suppose $C\in\mathfrak{C}(H)$. By definition, there exists some $v\in V(H)$ such that
  $\CC_{H}(v)=C$. Recall that, from Observation~\ref{obs:prune-basics}, $\CC_{H}(v)=\CC_{G}(v)$.
  Moreover, since $v\in\lcaV{i}(G)=V(H)$ there is some $A\in \Xci$ such that $v=\lca_G(A)$. The
  latter implies that $\CC_G(\lca_G(A)) = \CC_G(v)=C$. Moreover, since $\lca_G(A)$ is well-defined,
  we can apply Lemma~\ref{lem:lca-clusters} to conclude that $\lca_G(\CC_G(\lca_G(A)))=\lca_G(C)$ is
  well-defined (and equals $v$). In other words, $C\in\mathfrak{C}_{\lca}(G)$. Moreover,
  Lemma~\ref{lem:lca-clusters} also ensures that $C=\CC_G(\lca_G(A))$ is the unique
  inclusion-minimal cluster in $\mathfrak{C}(G)$ that contains $A$. We have thus shown that
  $C\in\mathfrak{C}_{\lca}^i(G)$. In other words,
  $\mathfrak{C}(H)\subseteq\mathfrak{C}_{\lca}^{i}(G)$.

  To see that $\mathfrak{C}_{\lca}^{i}(G)\subseteq\mathfrak{C}(H)$ holds, consider any cluster $C\in
  \mathfrak{C}_{\lca}^{i}(G)$. One easily verifies that there exists some $v\in V(G)$ such that
  $\CC_G(v)=C$ and $v=\lca_G(C)$, see Lemma~\ref{lem:lca-clusters}. We now claim that $v\in V(H)$.
  Since, by assumption, $C\in \mathfrak{C}_{\lca}^{i}(G)$, there exists, by definition, some $A\in
  \Xci$ such that $C$ is the unique inclusion-minimal cluster in $\mathfrak{C}(G)$ containing
  $A$. Since $A\subseteq C$ and $v=\lca_G(C)$, $v$ is a common ancestor of $A$ and
  $\LCA_G(A)\neq\emptyset$. Consider any $v'\in\LCA_G(A)$. Since $A\subseteq\CC_G(v')$ and $C$ is
  the unique inclusion-minimal cluster in $\mathfrak{C}(G)$ containing $A$, we must have
  $C\subseteq\CC_G(v')$. Hence, $v'$ is a common ancestor $C$. The latter together with the fact
  that $v=\lca_G(C)$ is the unique least common ancestor of $C$ implies that $v=\lca_G(C)\preceq_G
  v'$ must hold. As the latter holds for every $v'\in\LCA_G(A)$, we can conclude that
  $\LCA_G(A)=\{v\}$, i.e., that $v=\lca_G(A)$. Since $A\in \Xci$, it follows that
  $v\in\lcaV{i}(G)=V(H)$, as claimed.  
  
  Observation~\ref{obs:prune-basics} then implies that $C=\CC_G(v)=\CC_{H}(v)\in\mathfrak{C}(H)$
  i.e. $\mathfrak{C}_{\lca}^{i}(G)\subseteq\mathfrak{C}(H)$. Finally, to complete the proof,
  observe that Theorem~\ref{thm:prune-regular} implies that $\reg_i(G)$ is regular. By definition of
  regular DAGs, $\reg_i(G)\simeq\Hasse(\mathfrak{C}_{\lca}^{i}(G))$. 
\end{proof}

Theorem~\ref{thm:clusters-of-prune} provides a characterization of the structure of $\reg_i(G)$ in
terms of certain clusters of $G$. In the rest of this section, we shall pinpoint consequences of
this result and explore further properties of the $i$-regularization operator.

A natural starting point is to consider the regularization of the
archetypal phylogenetic structures, namely trees. One easily verifies
that, in a tree, a vertex $v$ is a $k$-lca vertex for some $k$ if and only if $v$ does not have
out-degree one. Hence, $\notlcaV{i}(T)$ is precisely
the set of vertices of out-degree one in any tree $T$ and for all $i\geq 2$. Moreover, there is a
well-known bijective correspondence between phylogenetic trees on $X$ and hierarchies on $X$, see
e.g.\ \cite[Thm.~3.5.2]{sem-ste-03a}, Thus, Theorem~\ref{thm:clusters-of-prune} immediately
characterizes those graphs $G$ for which $\reg_i(G)$ is a tree. We collect
these observations in the following statement.

\begin{observation}
	For all $i\geq2$, the $i$-regularization $\reg_i(T)$ of a tree $T$ on $X$ coincides with $T'$,
	where $T'$ is the tree obtained from $T$ by suppressing all vertices of out-degree one. In
	particular, $\reg_i(T)=T$ for all phylogenetic trees $T$ and all $i\geq 2$. Moreover, for any DAG
	$G$ and $i\geq2$, the graph $\reg_i(G)$ is a tree if and only if $\mathfrak{C}_{\lca}^{i}(G)$ is a
	hierarchy.
\end{observation}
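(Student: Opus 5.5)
The observation has three claims, and I would prove them in the order stated.

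\emph{First claim.} Let $T$ be a tree on $X$ and $i\geq 2$. I first show that every vertex $v$ with $\outdeg_T(v)\neq 1$ lies in $\lcaV{2}(T)\subseteq\lcaV{i}(T)$ (Lemma~\ref{lem:lca-sets-inclusions}).
\begin{itemize}
\item If $v$ is a leaf, then $v=\lca_T(\{v\})$.
\item If $\outdeg_T(v)\geq 2$, pick two children $c_1,c_2$ of $v$ and leaves $x_1\preceq_T c_1$, $x_2\preceq_T c_2$.
\end{itemize}
In the second case, every ancestor of a vertex in a tree lies on its unique root path. Hence the common ancestors of $\{x_1,x_2\}$ form a chain, and its minimum is $v$, because no descendant of $c_1$ is an ancestor of $x_2$. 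So $v=\lca_T(x_1,x_2)$. Conversely, out-degree-one vertices are never LCAs (Lemma~\ref{lem:outdeg1-no-lca}). Therefore $\notlcaV{i}(T)=W$, where $W$ is the set of out-degree-one vertices.

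Next I show that $T\ominus W$ is exactly $T'$, the tree obtained by suppressing $W$. Removing a vertex $w$ of out-degree one contracts the arc $w\to c$ into the unique child $c$ of $w$. This keeps the graph a tree: $c$ inherits the unique parent of $w$, or becomes the root if $w$ was the root. Applying this to all of $W$ in any order is justified by Lemma~\ref{ominus:commutative}. A tree has no shortcuts, since a shortcut would give two distinct paths into a vertex and force a hybrid. Hence $(T\ominus W)^-=T\ominus W$.

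The one small point to check is that "suppressing all out-degree-one vertices" in $T'$ is meant to include the root with a single child, and chains of such vertices. The $\ominus$ description handles both uniformly, so I would phrase $T'$ as $T\ominus W$ to avoid this ambiguity.

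\emph{Second claim.} If $T$ is phylogenetic and a tree, then every non-root vertex has in-degree $1$. The root has in-degree $0$. So the phylogenetic condition forbids out-degree $1$ everywhere, which gives $W=\emptyset$ and $\reg_i(T)=T$.

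\emph{Third claim.} By Theorem~\ref{thm:clusters-of-prune}, $\reg_i(G)\simeq\Hasse(\mathfrak{C}_{\lca}^{i}(G))$, so it suffices to decide when this Hasse diagram is a tree. First, all singletons $\{x\}$ lie in $\mathfrak{C}_{\lca}^1(G)\subseteq\mathfrak{C}_{\lca}^{i}(G)$.

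\textbf{($\Leftarrow$)} Suppose $\mathfrak{C}_{\lca}^{i}(G)$ is a hierarchy. Then $X$ is the unique maximal element, so the Hasse diagram has a unique root. Any cluster $B$ has its strict supersets pairwise comparable, so $B$ has exactly one cover, and the diagram is a tree.

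\textbf{($\Rightarrow$)} Suppose $\reg_i(G)$ is a tree with root $r$. It is regular, so distinct vertices have distinct clusters. Every leaf lies below $r$, so $\CC(r)=X$ belongs to the system. Let $C_u,C_v$ be two clusters with $C_u\cap C_v\neq\emptyset$, and take a leaf $x$ in the intersection. In a tree the ancestors of $x$ form a chain, so $u$ and $v$ are comparable, and hence their clusters are nested. This shows $\mathfrak{C}_{\lca}^{i}(G)$ is a hierarchy.

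The only mild obstacle is the bookkeeping above about the root and chains of out-degree-one vertices in the first claim. The rest is direct.
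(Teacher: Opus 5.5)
Your proposal is correct and follows the paper's reasoning. You identify $\notlcaV{i}(T)$ with the out-degree-one vertices, and you supply details the paper leaves implicit: every other vertex is already a $1$- or $2$-lca vertex, and $T\ominus W$ is shortcut-free. You then combine Theorem~\ref{thm:clusters-of-prune} with the tree--hierarchy correspondence. The only difference is that you verify this correspondence directly for $\Hasse(\mathfrak{C}_{\lca}^{i}(G))$ instead of citing the classical bijection between phylogenetic trees and hierarchies, which also spares you from noting that $\reg_i(G)$ is phylogenetic.
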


Another straightforward observation is that LCAs are unaffected by the presence or absence of
shortcuts. Consequently, adding or removing shortcuts in $G$ does not affect $\reg_i(G)$. This can
be formalized as follows.

\begin{lemma}\label{lem:basic-properties-ireg}
  For every DAG $G$ and integer $i\geq1$, $\reg_i(G)=\reg_i(G^-)$.
\end{lemma}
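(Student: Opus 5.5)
The plan is to show that $G$ and $G^-$ have the same vertex set, the same relation $\preceq$, and the same set $\notlcaV{i}$. After that, Observation~\ref{obs:prune-basics} and the uniqueness part of Lemma~\ref{lem:same-prec-so-G=H} will force the two regularizations to coincide.

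\emph{Step 1.} By Lemma~\ref{lem:properties-SF-G}, $G^-$ is a DAG on the same leaf set $X$ with $V(G^-)=V(G)$. Moreover, $u\preceq_G v$ if and only if $u\preceq_{G^-} v$ for all vertices $u,v$. Clusters, common ancestors and least common ancestors are defined purely in terms of $\preceq$. Hence $\CC_G(v)=\CC_{G^-}(v)$ for every $v$, and $\LCA_G(A)=\LCA_{G^-}(A)$ for every non-empty $A\subseteq X$. In particular, $\lcaV{i}(G)=\lcaV{i}(G^-)$, and therefore $W\coloneqq\notlcaV{i}(G)=\notlcaV{i}(G^-)$. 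By Lemma~\ref{lem:lca-sets-inclusions}, $W\subseteq V(G)\setminus X$, so Observation~\ref{obs:prune-basics} applies to both $G$ and $G^-$.

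\emph{Step 2.} Set $H_1\coloneqq(G\ominus W)^-=\reg_i(G)$ and $H_2\coloneqq(G^-\ominus W)^-=\reg_i(G^-)$. By Observation~\ref{obs:prune-basics}, both are shortcut-free DAGs on the common vertex set $V(G)\setminus W$. Applying the same observation to $G$ and to $G^-$ gives, for all $u,v\in V(G)\setminus W$,
\[
u\preceq_{H_1} v \iff u\preceq_G v \iff u\preceq_{G^-} v \iff u\preceq_{H_2} v.
\]
The second part of Lemma~\ref{lem:same-prec-so-G=H} then yields $H_1=H_2$, which is the claim.

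\emph{Edge case and main obstacle.} Definition~\ref{def:ominus} requires $W\subsetneq V$. This holds automatically, since $X\subseteq\lcaV{i}(G)$ and $X\neq\emptyset$. The case $W=\emptyset$ is covered by the convention $G\ominus\emptyset=G$. The only step needing some care is Step~1, namely checking that LCAs (and hence $\notlcaV{i}$) depend only on the reachability order. This is immediate from the definitions, so no real obstacle is expected.
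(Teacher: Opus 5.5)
Your proof is correct and follows essentially the same route as the paper. Both arguments deduce $\LCA_G(A)=\LCA_{G^-}(A)$ from the equality of the reachability relations. Both then conclude that $\reg_i(G)$ and $\reg_i(G^-)$ are shortcut-free DAGs on the common vertex set $\lcaV{i}(G)$, and apply the uniqueness part of Lemma~\ref{lem:same-prec-so-G=H}. Your version is slightly more explicit: via Observation~\ref{obs:prune-basics}, you spell out why the two regularizations have the same $\preceq$-relation, a step the paper leaves implicit.
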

\begin{proof}
	Recall that, by definition, $V(G)=V(G^-)$ and that Lemma~\ref{lem:properties-SF-G} states that
	$u\preceq_G v$ if and only if $u\preceq_{G^-}v$ for all $u,v\in V(G)$. It is now straightforward
	to verify that, for every non-empty $A\subseteq X$, $\LCA_G(A)=\LCA_{G^-}(A)$ and, therefore,
	$\lcaV{i}(G)=\lcaV{i}(G^-)$. This together with Lemma~\ref{lem:ominus-basics} implies that
	$V(\reg_i(G))=\lcaV{i}(G)=V(\reg_i(G^-))$. Since $\reg_i(G)$ and $\reg_i(G^-)$ are shortcut-free
	the latter, taken together with the last statement of Lemma~\ref{lem:same-prec-so-G=H},
	implies that $\reg_i(G)=\reg_i(G^-)$.
\end{proof}

\begin{figure}
    \centering
    \includegraphics[width=0.75\linewidth]{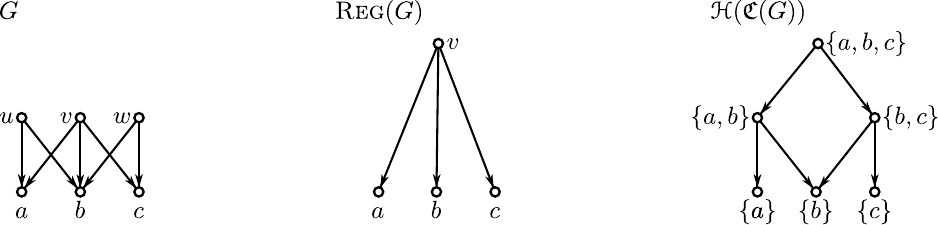}
    \caption{A DAG $G$ for which $\Hasse(\mathfrak{C}(G)) \not\simeq \reg(G)$ holds.
    }
    \label{fig:first-hasse-prune}
\end{figure}

Note that $\Hasse(\mathfrak{C}(G))$ and $\reg(G)$ may differ; see
Figure~\ref{fig:first-hasse-prune}. In this example, for the DAG $G$ we have $v=\lca_G(a,c)$, while
neither $u$ nor $w$ is the unique LCA of any subset of the leaves. As a result, $\reg(G)$ is just a
``star tree''. In contrast, $\Hasse(\mathfrak{C}(G))$ not only differs substantially from $G$ but
also from $\reg(G)$. As the next results shows, $\Hasse(\mathfrak{C}(G))$ and $\reg(G)$ are
isomorphic precisely if $\mathfrak{C}(G)=\mathfrak{C}_{\lca}(G)$.

\begin{theorem}\label{thm:reg_vs_hasse}
	For every DAG $G$, $\reg(G)\simeq\Hasse(\mathfrak{C}_{\lca}(G))$. Moreover,
	$\reg(G)\simeq\Hasse(\mathfrak{C}(G))$ if and only if $G$ satisfies property (CL).
\end{theorem}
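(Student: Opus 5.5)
First I will show that $\mathfrak{C}_{\lca}^{|X|}(G)=\mathfrak{C}_{\lca}(G)$; the first statement then follows from Theorem~\ref{thm:clusters-of-prune} with $i=|X|$, since $\reg(G)=\reg_{|X|}(G)\simeq\Hasse(\mathfrak{C}_{\lca}^{|X|}(G))$. The inclusion $\subseteq$ holds by definition. For the converse, take $C\in\mathfrak{C}_{\lca}(G)$ and put $A\coloneqq C$, which lies in $\Xci[|X|]$ because clusters are non-empty. I then need $C$ to be the unique inclusion-minimal element of $\mathfrak{C}(G)$ containing $C$. Since $\lca_G(C)$ is well-defined, Lemma~\ref{lem:lca-clusters} says $\CC_G(\lca_G(C))$ is the unique inclusion-minimal cluster containing $C$. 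As $C$ is itself a cluster containing $C$, every inclusion-minimal cluster containing $C$ must equal $C$; in particular $\CC_G(\lca_G(C))=C$, and it is unique. Hence $C\in\mathfrak{C}_{\lca}^{|X|}(G)$.

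For the second statement, suppose first that $G$ satisfies (CL). For every cluster $C=\CC_G(v)$, the set $\lca_G(C)$ is then well-defined, so $\mathfrak{C}_{\lca}(G)=\mathfrak{C}(G)$, and the first part gives $\reg(G)\simeq\Hasse(\mathfrak{C}(G))$.

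Conversely, suppose $\reg(G)\simeq\Hasse(\mathfrak{C}(G))$. Then $\Hasse(\mathfrak{C}_{\lca}(G))\simeq\Hasse(\mathfrak{C}(G))$, and in particular the two vertex sets have equal size: $|\mathfrak{C}_{\lca}(G)|=|\mathfrak{C}(G)|$. Since $\mathfrak{C}_{\lca}(G)\subseteq\mathfrak{C}(G)$ and both are finite, they are equal. So for every $v$, the cluster $\CC_G(v)$ has a well-defined $\lca_G$, which is exactly (CL).

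The main subtlety is that the isomorphism in the converse is abstract rather than label-preserving. Counting vertices sidesteps this, because the clusters themselves are the vertices of both Hasse diagrams. A secondary point is the (CL) definition, which quantifies over all $v$ rather than over clusters; this is harmless because (CL) depends only on the set $\CC_G(v)$.
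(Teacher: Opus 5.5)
Your proof is correct and takes essentially the same route as the paper: apply Theorem~\ref{thm:clusters-of-prune} with $i=|X|$, show $\mathfrak{C}_{\lca}^{|X|}(G)=\mathfrak{C}_{\lca}(G)$, and note that (CL) holds exactly when $\mathfrak{C}_{\lca}(G)=\mathfrak{C}(G)$. Your counting argument for the converse makes explicit a step the paper leaves implicit: an abstract isomorphism between the two Hasse diagrams forces $|\mathfrak{C}_{\lca}(G)|=|\mathfrak{C}(G)|$, and since $\mathfrak{C}_{\lca}(G)\subseteq\mathfrak{C}(G)$ are finite, the two families are equal.
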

\begin{proof}
	Let $G$ be a DAG on $X$ and put $\ell=|X|$. Theorem~\ref{thm:clusters-of-prune} states that
	$\mathfrak{C}(\reg(G))=\mathfrak{C}(\reg_\ell(G))=\mathfrak{C}_{\lca}^{\ell}(G)$. By definition
	and Lemma~\ref{lem:lca-clusters}, $\mathfrak{C}_{\lca}^{\ell}(G)=\mathfrak{C}_{\lca}(G)$ holds.
	Since, by Theorem~\ref{thm:prune-regular}, $\reg(G)$ is regular, the latter two arguments ensure
	that $\reg(G)\simeq\Hasse(\mathfrak{C}_{\lca}(G))$ for all DAGs $G$. The second statement now
	follows immediately from the fact that, by definition, $G$ satisfies property (CL) if and only if
	$\lca_G(\CC_G(v))$ is well-defined for all $v\in V(G)$, which is, if and only if
	$\mathfrak{C}(G)=\mathfrak{C}_{\lca}(G)$.
\end{proof}

We close this section by discussing some computational aspects of $i$-regularization. As a starting
point, note that the property of being $2$-lca-relevant can be tested in polynomial time. More
generally, for every fixed constant integer $k$, testing whether a DAG is $k$-lca-relevant can be
done in polynomial time; see \cite[Obs.~7.6]{HL:24}. Moreover, \cite[Cor.~3.12]{HL:24} gives a
polynomial-time algorithm for testing whether a given DAG is lca-relevant. In contrast, testing
whether a DAG $G$ is $k$-lca-relevant when $k$ is part of the input is NP-complete
\cite[Thm.~7.5]{HL:24}. 
We now extend these results to the computation of $\reg_i(G)$ for certain values of $i$.

\begin{proposition}\label{prop:polytime-regN}
	Given a DAG $G = (V,E)$ on $X$, the DAG $\reg(G)$ can be constructed in polynomial time. Moreover,
	if $i$ is a fixed constant, then $\reg_i(G)$ can be constructed in polynomial time in $|V|$.
\end{proposition}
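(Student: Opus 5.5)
The construction of $\reg_i(G)=(G\ominus\notlcaV{i}(G))^-$ splits into three tasks. First we compute the vertex set $\lcaV{i}(G)$. Then we apply the $\ominus$-operator to the complement. Finally we remove shortcuts. The last two steps are clearly polynomial in $|V|$. Each single operation $G\ominus v$ deletes $v$ and adds at most $\indeg(v)\cdot\outdeg(v)\le |V|^2$ arcs. By Lemma~\ref{ominus:commutative} we may process the vertices of $W$ in any order, so $G\ominus W$ needs at most $|V|$ such steps, and every intermediate graph has at most $|V|^2$ arcs. For shortcut removal, we first compute the reachability relation $\preceq_G$, e.g.\ by a DFS from every vertex. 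An arc $(u,w)$ is then a shortcut exactly when some other child $u'$ of $u$ satisfies $w\preceq u'$, which is a polynomial test per arc. The whole difficulty is therefore computing $\lcaV{i}(G)$.

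For fixed $i$, we enumerate all $A\in\Xci$; there are $O(|X|^i)\subseteq O(|V|^i)$ of them. For each $A$, we compute the clusters $\CC_G(v)$ for all $v$ from the reachability relation. The set of common ancestors of $A$ is $\{v\mid A\subseteq\CC_G(v)\}$, and its $\preceq_G$-minimal elements form $\LCA_G(A)$. If $|\LCA_G(A)|=1$, we add its element to $\lcaV{i}(G)$. This is polynomial for each $A$ and hence polynomial overall for constant $i$.

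For $\reg(G)=\reg_{|X|}(G)$, we cannot enumerate all subsets of $X$. Instead we use the characterization in Lemma~\ref{lem:lca-clusters}: $v$ is a $k$-lca vertex for some $k$ if and only if $v=\lca_G(\CC_G(v))$. So $\lcaV{|X|}(G)=\{v\in V\mid \LCA_G(\CC_G(v))=\{v\}\}$. For each of the $|V|$ vertices $v$, we compute $\CC_G(v)$, collect the common ancestors of $\CC_G(v)$, extract the $\preceq_G$-minimal ones, and check whether this set is exactly $\{v\}$. This is again polynomial.

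The main point to get right is this reduction for $\reg(G)$, namely that the exponential family of subsets may be replaced by the $|V|$ candidate sets $\CC_G(v)$. It follows directly from Lemma~\ref{lem:lca-clusters} together with Lemma~\ref{lem:lca-sets-inclusions}, which gives $\lcaV{|X|}(G)$ as the set of all $k$-lca vertices. Combining these steps shows that both constructions run in time polynomial in $|V|$.
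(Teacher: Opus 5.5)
Your proposal is correct and takes the same route as the paper. For fixed $i$ you enumerate the $O(|X|^i)$ sets in $\Xci$, and for $\reg(G)$ you use Lemma~\ref{lem:lca-clusters} to reduce to the $|V|$ tests $\LCA_G(\CC_G(v))=\{v\}$, before forming $(G\ominus W)^-$. The only difference is that you spell out the reachability-based computation of $\LCA_G(A)$, the $\ominus$-steps and the shortcut test, where the paper cites the corresponding algorithms from \cite{HL:24}; also, $\lcaV{|X|}(G)$ being the set of all $k$-lca vertices follows directly from the definition, not from Lemma~\ref{lem:lca-sets-inclusions}.
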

\begin{proof}
\end{proof}
\begin{proof}
	Let $G=(V,E)$ be a DAG on $X$. By definition, $\reg(G) = (G\ominus \notlcaV{|X|}(G))^-$ and
	$\notlcaV{|X|}(G) = \{v\in V \mid v \text{ is not a } k\text{-lca vertex for any } k\leq |X|\}$.
	Together with Lemma~\ref{lem:lca-clusters}, this implies that $\notlcaV{|X|}(G) = \{v\in V \mid
	v\neq \lca_G(\CC_G(v))\}$. For each vertex $v\in V$, the set $\LCA_G(\CC_G(v))$ can be
	computed in polynomial time using the algorithm from \cite{HL:24}. Hence, we can determine in
	polynomial time whether $v=\lca_G(\CC_G(v))$, and thus construct $\notlcaV{|X|}(G)$ in polynomial
	time. Finally, $\reg(G)=(G\ominus \notlcaV{|X|}(G))^-$ can be computed in $O(|V|^2)$ time; see
	\cite[Section~5]{HL:24} for further details.

	Now let $i$ be a fixed constant. We can determine $\notlcaV{i}(G)$ by computing $\LCA_G(A)$ for
	all subsets $A\subseteq X$ with $|A|\leq i$. There are $O(|X|^i)$ such subsets, and each
	corresponding LCA set can be computed in polynomial time using the algorithm from \cite{HL:24}.
	Since $i$ is fixed, this yields a polynomial-time algorithm for constructing $\notlcaV{i}(G)$. As
	above, $\reg_i(G)=(G\ominus \notlcaV{i}(G))^-$ can then be computed in polynomial time.
\end{proof}

\subsection{Regularizing Normal Networks}
\label{sec:norm}

In this section, we investigate how LCA-based regularization acts on normal networks and, in
particular, clarify the relationships between normality, strong normality, and lca-relevance. We
show that regularization preserves normality and characterize precisely when $\reg(N)=N$; see
Theorem~\ref{thm:normal=>Lrsf(N)-network}. These structural results also provide important
ingredients for Section~\ref{sec:normreg}, where we compare regularization with normalization within
the common $\ominus$-framework.

\begin{figure}[t]
    \centering
    \includegraphics[width=0.75\linewidth]{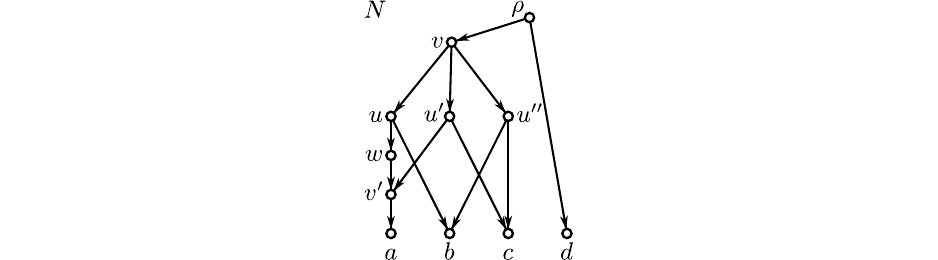} \bigskip
    \begin{tabular}{l|c|c|c|c|c|c}
        & $a,b,c,d$ & $\rho$ & $u,u',u''$ & $v$ & $v'$ & $w$ \\
        \hline
        visible & Yes & Yes & No & Yes & Yes & No\\
        \hline
        $k$-lca & Yes {\footnotesize ($k=1$)} & Yes {\footnotesize ($k=2,3,4$)} & Yes {\footnotesize ($k=2$)} & Yes {\footnotesize ($k=3$)} & No & No \\
    \end{tabular}
    \caption{
    	A network $N$ on $\{a,b,c,d\}$. The table specifies which of the vertices of $N$ are
    	(non-)visible or (non-)$k$-lca vertices. 
    }
    \label{fig:lca-vs-visible}
\end{figure}

We note that the property of a vertex being visible and being a $k$-lca vertex are two unrelated
properties; neither of them implies the other. In fact, there are networks $N$ containing vertices
of every possible combination of (non-)visible and (non-)$k$-lca vertices, see $N$ and the table in
Figure~\ref{fig:lca-vs-visible}. Nevertheless, as the next result shows, every strongly normal
network is $k$-lca-relevant for $k\geq 2$. Thus, all vertices are $k$-lca vertices and, by
Theorem~\ref{thm:char-normal-vis}, at the same time visible. This result has, in parts, been proven
by Willson \cite[Thm~3.9]{Willson2010}, albeit in somewhat different setting and using alternative
notation. 

\begin{lemma}\label{lem:stronglynormal=>2-lca-rel}
	Every strongly normal DAG on $X$ is $k$-lca-relevant for all $k\in \{2,\dots,|X|\}$.
\end{lemma}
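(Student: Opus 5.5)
Since $k$-lca-relevance for $k=2$ implies it for all larger $k$ (as noted after Definition~\ref{def:lca-rel}), it suffices to show that every strongly normal DAG $G$ on $X$ is $2$-lca-relevant. Leaves are handled by $x=\lca_G(\{x\})$. So I fix a non-leaf vertex $v$ and need two leaves $x,y$ with $\lca_G(x,y)=v$.

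The construction uses tree-child paths. Since $G$ is tree-child, every non-leaf vertex $w$ has a tree-child, so from $w$ we can follow tree-children repeatedly until reaching a leaf. This gives a path $P_w$ from $w$ to a leaf $x_w$ in which every vertex after $w$ has in-degree one. The standard consequence is that every vertex $u$ with $x_w\preceq_G u$ is $\preceq_G$-comparable to $w$: it either lies on $P_w$ below $w$, or it is an ancestor of $w$. To see this, take a $u x_w$-path and trace it backwards from $x_w$; as long as it stays on $P_w$ below $w$, uniqueness of parents forces the next vertex to be the predecessor on $P_w$. So either $u$ is a vertex strictly below $w$ on $P_w$, or the path passes through $w$.

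Now $v$ is a non-leaf and $G$ is strongly normal, so $v$ has at least two children. One of them, say $c_1$, is a tree-child. Let $c_2$ be any other child. Put $x:=x_{c_1}$ if $c_1$ is not a leaf and $x:=c_1$ otherwise, and define $y$ from $c_2$ in the same way, using its tree-child path (or $y=c_2$ if $c_2$ is a leaf). Clearly $v$ is a common ancestor of $\{x,y\}$.

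Let $u$ be any common ancestor of $x$ and $y$; I want $v\preceq_G u$. Applying the comparability fact to the path $v\to c_1\to\dots\to x$, all of whose vertices after $v$ have in-degree one, the vertex $u$ is either an ancestor of $v$ or lies on this path strictly below $v$. In the second case $u\preceq_G c_1$, and since $u$ is also an ancestor of $y$, we get $y\preceq_G c_1$. This is the main obstacle: I must rule out $y\preceq_G c_1$. Here shortcut-freeness enters. The vertex $c_2$ is $\preceq_G$-comparable with $c_1$, or $y$ has an ancestor on the path to $c_1$; I use the comparability fact applied to the path from $c_2$ to $y$. Since $c_1$ is an ancestor of $y$, either $c_1$ lies on $P_{c_2}$ below $c_2$, or $c_1$ is an ancestor of $c_2$.

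In the first case $c_1\prec_G c_2$, so the arc $v\to c_1$ is a shortcut via $c_2$, a contradiction. In the second case $c_2\prec_G c_1$ (the two are distinct), so the arc $v\to c_2$ is a shortcut via $c_1$, again a contradiction. If $c_2$ is a leaf, then $y=c_2\preceq_G c_1$ directly makes $v\to c_2$ a shortcut.

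Hence every common ancestor $u$ satisfies $v\preceq_G u$. Therefore $\LCA_G(\{x,y\})=\{v\}$, that is, $v=\lca_G(x,y)$, and $G$ is $2$-lca-relevant and hence $k$-lca-relevant for all $k\in\{2,\dots,|X|\}$.
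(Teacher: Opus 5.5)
Your proof is correct, and it takes a different route from the paper. The paper does not argue directly. It first turns $G$ into a network, adding a new root above all roots if $G$ has several. It then invokes Willson's result \cite[Thm~3.9]{Willson2010}: in a normal network, every vertex outside the base-set (root, leaves, out-degree-one vertices) is the lca of two base-set elements. A second auxiliary construction, attaching a new vertex $r\to\rho$, is needed to show that the root $\rho$ itself is a $2$-lca vertex, and everything is then transferred back to $G$.

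You instead give a self-contained argument that works directly in the DAG and treats every non-leaf vertex uniformly, roots included. You take a tree-child $c_1$ of $v$ and a second child $c_2$, which exists by strong normality, and follow tree-child paths to leaves $x$ and $y$. You then use the observation that any ancestor of the endpoint of such an in-degree-one path lies either on the path or above its start. Shortcut-freeness excludes $y\preceq_G c_1$ in both resulting cases.

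The paper's route buys brevity and an explicit link to Willson's earlier result. Yours buys independence from that citation and needs no auxiliary roots. It also produces explicit witnesses $x,y$ with $v=\lca_G(x,y)$, for any choice of tree-child $c_1$ and second child $c_2$, and shows clearly where each hypothesis is used: tree-child for the paths, out-degree at least two for $c_2$, and shortcut-freeness for the final contradiction.

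The only blemish is the muddled sentence beginning ``The vertex $c_2$ is $\preceq_G$-comparable with $c_1$, or \dots''. The case analysis that follows it is what actually carries the argument, and it is sound. In the first case you could even note that $c_1$ cannot lie strictly below $c_2$ on $P_{c_2}$, since its unique parent is $v$.
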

\begin{proof}
	Since the proof is straightforward, we have placed it in Appendix~\ref{sec:appx}.
\end{proof}

We now provide some characterizations of strongly normal networks.

\begin{proposition}\label{prop:strong-normal=>2lcarel}
	Let $N$ be a network. Then, the following four statements are equivalent. 
 \begin{enumerate}
    \item $N$ is strongly normal. 
    \item $N$ is normal and 2-lca-relevant.
    \item $N$ is normal and lca-relevant.
    \item $N$ is normal, phylogenetic and does not contain hybrids having only one child.
\end{enumerate}
\end{proposition}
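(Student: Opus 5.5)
The plan is to prove the cycle of implications (1)$\Rightarrow$(2)$\Rightarrow$(3)$\Rightarrow$(4)$\Rightarrow$(1), using Lemma~\ref{lem:stronglynormal=>2-lca-rel} and Lemma~\ref{lem:outdeg1-no-lca} as the main tools.

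For (1)$\Rightarrow$(2), a strongly normal network is normal by definition. By Lemma~\ref{lem:stronglynormal=>2-lca-rel} it is $k$-lca-relevant for all $k\in\{2,\dots,|X|\}$, in particular 2-lca-relevant. One degenerate case must be checked: if $|X|=1$, the range $\{2,\dots,|X|\}$ is empty. Then a strongly normal network has no vertex with exactly one child, and every non-leaf vertex of a network with one leaf has out-degree one, since two children would give two distinct leaves via tree-child paths. Hence $N$ is the single leaf, which is trivially 2-lca-relevant.

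For (2)$\Rightarrow$(3), we use that $i$-lca-relevance implies $j$-lca-relevance for $j\ge i$, so 2-lca-relevance gives lca-relevance. Again $|X|=1$ needs care, but then 2-lca-relevance already forces every vertex to be $\lca_N(\{x\})=x$.

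For (3)$\Rightarrow$(4), suppose $N$ is normal and lca-relevant. If some vertex $v$ had out-degree one, Lemma~\ref{lem:outdeg1-no-lca} says $v$ is not an LCA of any leaf set, contradicting lca-relevance. Thus no vertex has out-degree one. This immediately gives that $N$ is phylogenetic, since the forbidden vertices have out-degree one, and that no hybrid has only one child.

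For (4)$\Rightarrow$(1), which I expect to be the only step needing real care, we must show a normal phylogenetic network with no single-child hybrids has no vertex with exactly one child. Let $v$ have $\outdeg_N(v)=1$. If $\indeg_N(v)\le 1$, this contradicts phylogeneticity. If $\indeg_N(v)\ge2$, then $v$ is a hybrid with one child, contradicting the hypothesis. So no such $v$ exists and $N$ is strongly normal. The case split on in-degree covers all vertices, including the root (in-degree $0$).

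The main obstacle is therefore not in the implications themselves but in invoking Lemma~\ref{lem:stronglynormal=>2-lca-rel} correctly in the edge case $|X|=1$. I would handle this by the separate small argument above, showing such a strongly normal network is a single vertex.
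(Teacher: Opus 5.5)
Your proof is correct and takes essentially the same route as the paper. It uses Lemma~\ref{lem:stronglynormal=>2-lca-rel} for (1)$\Rightarrow$(2), the trivial inclusion for (2)$\Rightarrow$(3), Lemma~\ref{lem:outdeg1-no-lca} to rule out out-degree-one vertices, and the fact that an out-degree-one vertex is either non-phylogenetic or a single-child hybrid, arranged as one cycle instead of the paper's (1)$\Rightarrow$(2)$\Rightarrow$(3)$\Rightarrow$(1) plus (1)$\Leftrightarrow$(4); your separate check of $|X|=1$, where the lemma's stated range is empty, is sound extra care that the paper omits.
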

\begin{proof}
	Lemma~\ref{lem:stronglynormal=>2-lca-rel} shows that Condition (1) implies (2), and the
	implication from (2) to (3) is immediate by definition. If $N$ is normal and lca-relevant, then the
	contraposition of Lemma~\ref{lem:outdeg1-no-lca} implies that no vertex of $N$ has out-degree one.
	Hence, Condition~(3) implies~(1).
    The equivalence between (1) and (4) follows from the fact that, in a phylogenetic network, only hybrid vertices can have out-degree one.
\end{proof}

We now show that, for a normal DAG $G$, normality is preserved
under the $\ominus$-operator when applied to vertices of out-degree one, and that these vertices
are precisely those contained in $\notlcaV{i}(G)$.

\begin{proposition}\label{prop:lca-i/reg-i_of_normal}
	Let $G$ be a normal DAG on $X$ and $W\subseteq U\coloneqq\{v\in V(G)\mid \outdeg_G(v)=1\}$. Then,
	$G\ominus W$ is a normal DAG on $X$ and $G\ominus U$ is strongly normal.
	In particular, $\notlcaV{i}(G)=U$ for all integers $i\geq 2$.
\end{proposition}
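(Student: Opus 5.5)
Our plan is to first handle a single vertex and then induct on $|W|$. Let $G$ be normal and let $w$ be a vertex with $\outdeg_G(w)=1$ and unique child $c$. By Lemma~\ref{lem:ominus-basics}, $G\ominus w$ is a DAG on $X$, and by Lemma~\ref{lem:ominus-outdeg1}(2) all out-degrees are preserved. So it suffices to check two things.

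\emph{Tree-child.} We need that every non-leaf $v\neq w$ still has a tree-child in $G\ominus w$. The only in-degrees that change are that of $c$ (its parent $w$ is replaced by the parents of $w$) and of vertices that lose $w$ as a child. Since $G$ is tree-child, $c$ must be a tree-child of $w$, so $\indeg_G(c)=1$. Hence $\indeg_{G\ominus w}(c)=\indeg_G(w)$. Now take a parent $p$ of $w$.
\begin{itemize}
\item If $\indeg_G(w)=1$, then $c$ has in-degree one in $G\ominus w$. Thus $c$ is a tree-child of $p$, replacing $w$ if $w$ was $p$'s tree-child.
\item If $\indeg_G(w)\ge 2$, then $w$ was not a tree-child of $p$. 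So $p$'s tree-child $t\neq w$, and $t\neq c$ since $c$ is not a child of $p$ (otherwise $p\to c$ is a shortcut). Hence $t$ keeps in-degree one.
\end{itemize}
For other vertices $v$, the children and their in-degrees are unchanged, except possibly $c$. But $c$ has only parent $w$ in $G$, so no other vertex had $c$ as a tree-child.

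\emph{Shortcut-free.} Suppose $G\ominus w$ has a shortcut $(a,b)$, i.e.\ an arc together with another $ab$-path $P$. By Lemma~\ref{lem:ominus-basics}, reachability is the same as in $G$. Consider the new arcs $p\to c$. If $(a,b)=(p,c)$ is itself a shortcut, then there is a $pc$-path avoiding that arc. It translates into a $G$-path from $p$ to $c$ that avoids the direct route $p\to w\to c$. But $c$'s only parent is $w$, so the path enters $c$ through $w$, giving a $pw$-path in $G$ other than the arc $p\to w$. That makes $p\to w$ a shortcut in $G$, a contradiction. If $(a,b)$ is an old arc of $G$, lift $P$ to $G$ by replacing each new arc $p\to c$ with $p\to w\to c$. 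This gives an $ab$-path in $G$ avoiding the arc $(a,b)$, again a contradiction.

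Induction on $|W|$ then gives that $G\ominus W$ is normal. The step goes through because, by Lemma~\ref{lem:ominus-outdeg1}(2), each remaining $w\in W$ still has out-degree one in the intermediate graph. The only real care needed is the in-degree bookkeeping for $c$ when $w$ is a hybrid.

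For $W=U$, Lemma~\ref{lem:ominus-outdeg1}(2) says no vertex of $G\ominus U$ has out-degree one. So $G\ominus U$ is normal with no vertex having only one child, i.e.\ it is strongly normal. Finally, fix $i\ge2$. Lemma~\ref{lem:lca-sets-inclusions} gives $U\subseteq\notlcaV{i}(G)$. For the converse, take $v\notin U$. If $v\in X$, then $v\in\lcaV{1}(G)\subseteq\lcaV{i}(G)$. Otherwise $v$ is a non-leaf vertex of $G\ominus U$, and Lemma~\ref{lem:stronglynormal=>2-lca-rel} (applied with $k=2$, noting $|X|\ge2$ whenever a non-leaf vertex with $\geq 2$ children exists) gives $v=\lca_{G\ominus U}(A)$ for some $A$ with $|A|\le2$. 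By Lemma~\ref{lem:ominus-outdeg1}(1), $\LCA_{G\ominus U}(A)=\LCA_G(A)$, so $v\in\lcaV{2}(G)\subseteq\lcaV{i}(G)$. Hence $\notlcaV{i}(G)=U$.
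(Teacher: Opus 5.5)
Your proof is correct and follows essentially the same route as the paper. It uses a single-vertex step showing that $\ominus w$ preserves tree-child and shortcut-freeness (because the unique child $c$ of $w$ has $w$ as its only parent), then induction via Lemma~\ref{lem:ominus-outdeg1}(2), and finally identifies $\notlcaV{i}(G)=U$ through 2-lca-relevance of the strongly normal DAG $G\ominus U$ together with Lemma~\ref{lem:ominus-outdeg1}(1).

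The differences are cosmetic. You lift paths explicitly where the paper invokes preservation of $\preceq$ in the shortcut argument, and you also flag the degenerate case $|X|=1$; your parenthetical claim there is true for tree-child DAGs but would need a one-line justification.
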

\begin{proof}
Since the proof is straightforward but technical, we have placed it in Appendix~\ref{sec:appx}.
\end{proof}

\begin{figure}
    \centering
    \includegraphics[width=0.75\textwidth]{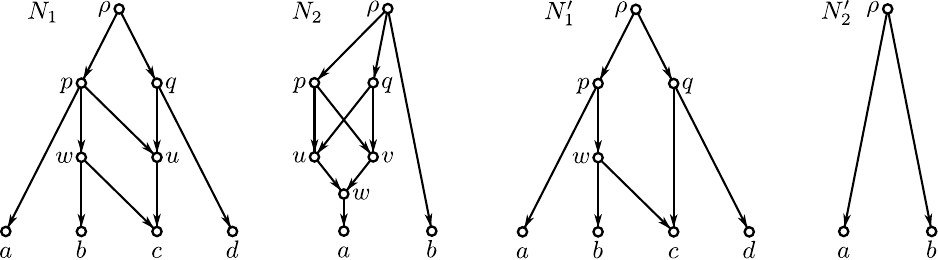}
    \caption{
    	Two networks $N_1$ and $N_2$ for which the regularization and normalization coincide, i.e.,
    	$\reg(N_1)=N_1'=\norm(N_1)$ and $\reg(N_2)=N_2'=\norm(N_2)$.
    }
    \label{fig:outdeg1}
\end{figure}

Note that normal networks are not the only type of DAGs for which $\notlcaV{i}(N)=\{v\in V(N)\mid
\outdeg_N(v)=1\}$. For example, consider the network $N_1$ in Figure~\ref{fig:outdeg1}, for which
$\notlcaV{i}(N_1)=\{u\}$ for all $i\geq2$, i.e., $\notlcaV{i}(N_1)$ contains precisely the only
vertex with out-degree one. Here, the vertex $u$ has no tree-child and thus $N_1$ is not normal.
Nevertheless, $N_1$ is phylogenetic and shortcut-free.

A straightforward consequence of Proposition~\ref{prop:lca-i/reg-i_of_normal} is the following. 
For a normal DAG $G$, the graph $G\ominus \notlcaV{2}(G)$ is already shortcut-free.
Moreover, $\notlcaV{i}(G)=\notlcaV{2}(G)$ for all $i\geq 2$, and hence
$\reg_i(G)$ is obtained simply by removing the vertices of out-degree one. Hence, we obtain

\begin{corollary}\label{cor1:lca-i/reg-i_of_normal}
	For every normal DAG $G$ it holds that $\reg(G) = \reg_i(G) = G\ominus \notlcaV{2}(G)$ for all
	integers $i\geq 2$, and, in particular, $V(\reg(G))=\lcaV{2}(G)$.
\end{corollary}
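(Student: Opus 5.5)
The corollary follows by combining Proposition~\ref{prop:lca-i/reg-i_of_normal} with the fact that a normal DAG stays normal, and hence shortcut-free, under $\ominus$ of out-degree-one vertices. Let $G$ be a normal DAG on $X$ and put $U\coloneqq\{v\in V(G)\mid \outdeg_G(v)=1\}$.

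First, Proposition~\ref{prop:lca-i/reg-i_of_normal} gives $\notlcaV{i}(G)=U$ for every integer $i\geq 2$. In particular $\notlcaV{i}(G)=\notlcaV{2}(G)=U$, and hence $G\ominus\notlcaV{i}(G)=G\ominus U$ for all $i\geq 2$. Taking $i=|X|$ covers $\reg(G)$ as well, provided $|X|\geq 2$. If $|X|=1$, then $\reg(G)=\reg_1(G)$, so this edge case needs a separate check. There, $\lcaV{1}(G)=X$, and since $G$ is tree-child every non-leaf vertex has out-degree at least one. With a single leaf, every non-leaf vertex lies on a path to it. I would verify that in this situation $U=V(G)\setminus X$, so that $\notlcaV{1}(G)=V(G)\setminus X=U$ again. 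The reason is that with one leaf, every non-leaf vertex must have out-degree exactly one: two children would both reach the leaf, and the tree-child plus shortcut-free conditions should then force a contradiction. This degenerate case is the only spot where I expect any care, and if needed I would simply note that the statement is meant for $|X|\geq 2$.

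Second, I would show that removing shortcuts does nothing. By Proposition~\ref{prop:lca-i/reg-i_of_normal} (with $W=U$), $G\ominus U$ is normal, and in fact strongly normal. By definition, normal means tree-child and shortcut-free, so $(G\ominus U)^-=G\ominus U$. Hence
\[
\reg_i(G)=(G\ominus\notlcaV{i}(G))^-=G\ominus U=G\ominus\notlcaV{2}(G)
\]
for all $i\geq 2$, and in particular $\reg(G)=\reg_{|X|}(G)$ equals the same graph.

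Finally, for the vertex set, Lemma~\ref{lem:ominus-basics} gives $V(G\ominus W)=V(G)\setminus W$. Therefore $V(\reg(G))=V(G)\setminus\notlcaV{2}(G)=\lcaV{2}(G)$.
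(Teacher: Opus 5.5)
Your proof is correct and is the paper's argument. Proposition~\ref{prop:lca-i/reg-i_of_normal} gives $\notlcaV{i}(G)=U$ for all $i\geq 2$ and shows that $G\ominus U$ is normal, hence shortcut-free, so the shortcut removal in $\reg_i(G)$ is vacuous and $V(\reg(G))=\lcaV{2}(G)$ follows from Lemma~\ref{lem:ominus-basics}. Your extra check of the case $|X|=1$, which the paper passes over, is sound but can be shortened: there $\mathcal{P}_{\leq 1}(X)=\mathcal{P}_{\leq 2}(X)$, so $\notlcaV{1}(G)=\notlcaV{2}(G)=U$ follows directly from the proposition without any tree-child argument.
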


Recall that $\reg(N)$ may have multiple roots, even though $N$ itself is a network (c.f.\
Figure~\ref{fig:reg_loose_root}). Nevertheless, the latter cannot happen if $N$ is a normal network
as we now show in the main result of this section.

\begin{theorem}\label{thm:normal=>Lrsf(N)-network}
    Let  $N$ be a normal network on $X$. Then, $\reg(N)$ is a normal phylogenetic network on $X$.
    In particular,  the following statements are equivalent.
    \begin{enumerate}
        \item $\reg(N) = N$.
        \item $N$ is regular.
        \item $N$ is strongly normal.
        \item $\reg_i(N) = N$ for all $i \in \{2,\dots, |X|\}$.
    \end{enumerate}
\end{theorem}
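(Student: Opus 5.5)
We would first establish that $\reg(N)$ is a normal phylogenetic network. Put $U\coloneqq\{v\in V(N)\mid \outdeg_N(v)=1\}$. By Corollary~\ref{cor1:lca-i/reg-i_of_normal} we have $\reg(N)=N\ominus\notlcaV{2}(N)$, and by Proposition~\ref{prop:lca-i/reg-i_of_normal} we have $\notlcaV{2}(N)=U$ and $N\ominus U$ is strongly normal. Strongly normal means normal with no vertex having exactly one child. We then need three further facts.

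\emph{The leaf set is $X$.} Since $U\cap X=\emptyset$, Lemma~\ref{lem:ominus-basics} shows that $N\ominus U$ is a DAG on $X$.

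\emph{There is a unique root.} We argue that the root $\rho$ of $N$ is not in $U$, or handle the case that it is. Suppose $\outdeg_N(\rho)=1$. Normality gives $\rho$ a tree-child $c$, and $c$ is then the unique child of $\rho$. Removing $\rho$ makes $c$ a root, because $c$ had in-degree one with parent $\rho$. In general, we show that the roots of $N\ominus U$ are exactly the $\preceq_N$-maximal elements of $V(N)\setminus U$. Every vertex of $V(N)\setminus U$ is a descendant of $\rho$. The descendants of $\rho$ lying in $V\setminus U$ contain a $\preceq$-maximal one, namely $r$, the first vertex outside $U$ on the chain of unique children starting at $\rho$. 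This chain is a path $\rho\to c_1\to c_2\to\cdots$ in which each vertex lies in $U$ until the first vertex with out-degree other than one. Every vertex $v\neq r$ of $N$ with $v\notin U$ satisfies $v\prec_N\rho$, so every $\rho v$-path passes through $r$. Hence $v\preceq_N r$, and by Lemma~\ref{lem:ominus-basics} also $v\preceq_{N\ominus U}r$. So $r$ is the unique root and $N\ominus U$ is a network.

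\emph{It is phylogenetic.} No vertex of $N\ominus U$ has out-degree one, since $N\ominus U$ is strongly normal, so $N\ominus U$ is phylogenetic in particular.

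For the equivalences, we would show (1)$\Leftrightarrow$(2) directly from Theorem~\ref{thm:prune-regular}, which states that $G$ is regular if and only if $\reg(G)=G$. For (2)$\Leftrightarrow$(3), Theorem~\ref{thm:regular-char} says that regular means lca-relevant and shortcut-free. Since $N$ is normal, it is already shortcut-free, so regularity is equivalent to lca-relevance. By Proposition~\ref{prop:strong-normal=>2lcarel}, (1)$\Leftrightarrow$(3), this is equivalent to strong normality. For (1)$\Leftrightarrow$(4), Corollary~\ref{cor1:lca-i/reg-i_of_normal} gives $\reg_i(N)=\reg(N)$ for every $i\geq2$. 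So (1) implies (4), and (4) with $i=|X|$ gives (1). The edge case $|X|=1$ needs separate care: the index range in (4) is then empty. In that case a normal network is a single leaf, or a path, which is excluded by tree-child plus shortcut-freeness except for trivial cases, and these can be checked by hand.

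The main obstacle is the unique-root argument, specifically justifying that every vertex outside $U$ lies below $r$. This works because each vertex on the chain from $\rho$ to $r$ has a single child, so every path out of $\rho$ must pass through $r$ before reaching any vertex outside $U$.
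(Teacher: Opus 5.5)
Your argument is correct for $|X|\geq 2$ and is essentially the paper's proof. You use the same reduction $\reg(N)=N\ominus U$ via Corollary~\ref{cor1:lca-i/reg-i_of_normal} and Proposition~\ref{prop:lca-i/reg-i_of_normal}, and the same chain-of-unique-children argument for the root; you justify that $r$ lies above every vertex outside $U$ more carefully than the paper does. The deviations are minor. You read off phylogeneticity directly from strong normality of $N\ominus U$, whereas the paper goes through regularity and Theorem~4.6 of \cite{HL:24}. You also get (1)$\Leftrightarrow$(4) from Corollary~\ref{cor1:lca-i/reg-i_of_normal} instead of proving (3)$\Rightarrow$(4) with Lemma~\ref{lem:stronglynormal=>2-lca-rel}. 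Both variants are fine and slightly more self-contained.

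Your remark on $|X|=1$, however, is wrong, and checking the case by hand refutes it rather than confirming it.

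\emph{Paths are not excluded.} By Theorem~\ref{thm:char-normal-vis}, every vertex of a normal network on $\{x\}$ lies on every $\rho x$-path. Shortcut-freeness then forces $N$ to be a directed path ending in $x$, and every such path is normal.

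\emph{A counterexample to (4)$\Rightarrow$(1).} Take $N$ to be the single arc $\rho\to x$. Then:
\begin{itemize}
\item $\reg(N)=\reg_1(N)=(\{x\},\emptyset)\neq N$;
\item $N$ is not regular, because $\CC_N(\rho)=\CC_N(x)$;
\item $N$ is not strongly normal, because $\rho$ has only one child;
\item yet (4) holds vacuously, since $\{2,\dots,|X|\}=\emptyset$.
\end{itemize}

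So the equivalence of (4) with (1)--(3) genuinely requires $|X|\geq 2$. The paper's own step (4)$\Rightarrow$(1), via $\reg(N)=\reg_{|X|}(N)$, tacitly assumes this as well. You should state that hypothesis explicitly rather than claim the degenerate case works out. The first assertion, that $\reg(N)$ is a normal phylogenetic network, and the equivalences among (1)--(3) are unaffected.
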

\begin{proof}
    Let $N$ be a normal network on $X$ and let $\rho$ be its unique root. By
    Corollary~\ref{cor1:lca-i/reg-i_of_normal} and Proposition~\ref{prop:lca-i/reg-i_of_normal} we
    have that $\reg(N)=N\ominus \notlcaV{2}(N)$, where $\notlcaV{2}(N)=\{v\in V(N)\mid
    \outdeg_N(v)=1\}$.  
    This together with Proposition~\ref{prop:lca-i/reg-i_of_normal} implies that $\reg(N)$ is a normal DAG on $X$. We show that $\reg(N)$ is a network, i.e,
    that $\reg(N)$ has a unique root.

    If $\rho\in V(\reg(N))$, then $v\preceq_N \rho$ for all $v\in V(N)$. Observation~\ref{obs:prune-basics} 
    therefore implies $v \preceq_{\reg(N)} \rho$,
    for all $v\in V(\reg(N))$ and hence, $\rho$ remains the unique root of $\reg(N)$. 
    Now suppose that $\rho\not\in V(\reg(N))$, that is, $\outdeg_N(\rho) = 1$. 
    Starting at $\rho$, follow the unique outgoing arc until reaching the first vertex $v$ with
    $\outdeg_N(v)\neq 1$; such a vertex exists and may be a leaf. Then $v$ is the unique
    $\preceq_N$-maximal vertex of $N$ whose out-degree is distinct from one. By
    Observation~\ref{obs:prune-basics}, $u\preceq_{\reg(N)}v$ for every $u\in V(\reg(N))$. Thus, $v$
    is the unique root of $\reg(N)$. In summary, $\reg(N)$ is a normal \emph{network} on $X$.
    Moreover, Theorem~\ref{thm:prune-regular} implies that $\reg(N)$ is regular and thus, by
    \cite[Thm~4.6]{HL:24}, also phylogenetic. 
		
    It remains to prove the equivalence of Statements~(1) through (4).
    By Theorem~\ref{thm:prune-regular}, Statements~(1) and (2) are equivalent. 
    Suppose that (1) holds. Since $N$ is shortcut-free, $N^- = N =\reg(N)$. This together with
    Theorem~\ref{thm:prune-regular} implies that $N$ is lca-relevant. Since $N$ is normal,
    Proposition~\ref{prop:strong-normal=>2lcarel} implies that $N$ is strongly normal. Hence, (1)
    implies (3). Suppose now that~(3) holds.
    Lemma~\ref{lem:stronglynormal=>2-lca-rel} implies that $N$ is $i$-lca-relevant for all $i \in
    \{2,\dots, |X|\}$. Thus, $\notlcaV{i}(N)=\emptyset$ and since $N$ is shortcut-free,
    we obtain $N = N^-=(N\ominus\notlcaV{i}(N))^-=\reg_{i}(N)$ for all $i \in \{2,\dots,
    |X|\}$. Hence, (3) implies (4). Finally, since $\reg(N) = \reg_{|X|}(N)$, Statement (4) immediately implies (1), which completes this proof. 
\end{proof}

Note, in particular, that Statement (2) and (4) of Theorem~\ref{thm:normal=>Lrsf(N)-network} are not
necessarily equivalent in case $N$ is not normal. For example, the network $N''$ in
Figure~\ref{fig:exmpl-LCA} is regular, but $\reg_2(N'')=N''\ominus \rho\neq N''$.

\section{Normalizing versus Regularizing}
\label{sec:normreg}

In \cite{francis2021normalising} the concept of \emph{normalization} was introduced, that is, the
transformation of a given network $N$ into a normal version $\norm(N)$ of the network that aims to
retain the underlying evolution information whilst producing a network that has the desirable
features of a normal network. Since clustering systems naturally arise in network inference,
$\reg(N)$ -- by virtue of being regular -- also encodes useful phylogenetic signals through
leaf-based information, since every vertex is the unique LCA of some subset of leaves (cf.\
Theorem~\ref{thm:prune-regular}). It is therefore natural to investigate how $\reg(N)$ and
$\norm(N)$ are related, as this could be useful information in practice when using these techniques.

One of the main goals of this section is to characterize those networks for which $\reg(N)=
\norm(N)$ holds. In particular, we show that $\norm(N)$ can be fully expressed as $(N\ominus W)^-$
for a particular subset $W\subseteq V(N)$ and thus, conceptually, falls within the same paradigm as
$\reg(N)$. Before providing the definition of $\norm(N)$, we emphasize that (implicitly), all
networks in \cite{francis2021normalising} are separated, i.e., all hybrids $v$ in $N$ have
$\outdeg_N(v)=1$. Moreover, no vertex $v$ of the networks used in \cite{francis2021normalising}
satisfies $\indeg(v)=\outdeg(v)=1$. Hence, they are phylogenetic ``up to'' possibly the root having
out-degree 1. We thus slightly generalize the definition of $\norm(N)$ to arbitrary networks $N$.

\begin{definition}[{\cite[Sec.~2]{francis2021normalising}}]\label{def:normalisation}
	Let $N$ be a network. The directed graphs $\cov(N)$ and $\norm(N)$ are constructed as follows. 
 \begin{enumerate}
        \item Let $G = (\vis(N),\,A)$ be the DAG whose arc set $A$ comprises 
              all arcs $u\to v$ for which $u,v\in \vis(N)$ and $v\prec_N u$ holds
        \item Put $\cov(N) = G^-$.
        \item $\norm(N)$ is obtained from $\cov(N)$ by suppressing all vertices $v$ of $\cov(N)$ for
              which $\indeg_{\cov(N)}(v)=\outdeg_{\cov(N)}(v)=1$.
    \end{enumerate}
\end{definition}

\begin{figure}
    \centering
    \includegraphics[width=0.75\linewidth]{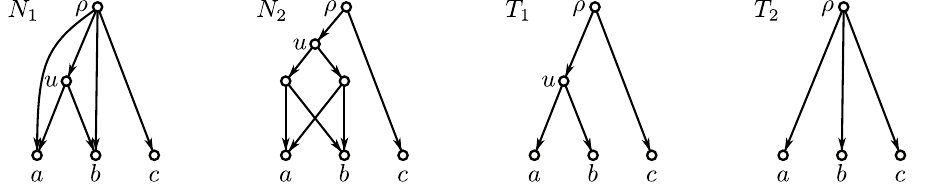}
    \caption{Two networks $N_1$, $N_2$ and two trees $T_1$, $T_2$ on $\{a,b,c\}$, for which $T_1
             =\reg(N_1) = \norm(N_2)$ and $T_2=\norm(N_1)=\reg(N_2)$.}
    \label{fig:normal-prune-intro}
\end{figure}

There are many networks $N$ for which the normalization $\norm(N)$ and the regularization $\reg(N)$
of $N$ coincide. For example, considering the networks shown in Figure~\ref{fig:outdeg1}, we have
both $N_1'=\norm(N_1)=\reg(N_1)$ and $N_2'=\norm(N_2)=\reg(N_2)$. However, this is not always the
case. For example, see Figure~\ref{fig:normal-prune-intro} where $\norm(N_1)=\reg(N_2)=T_2\neq T_1
=\reg(N_1) = \norm(N_2)$. In particular, there are lca-vertices that are not visible (e.g.\ vertex
$u$ in $N_1$) and there are visible vertices that are not lca-vertices (e.g.\ the vertex $u$ in
$N_2$). In other words, $\reg(N)$ and $\norm(N)$ may identify different vertices as ``omittable''. A
technical difficulty in comparing regularization and normalization is that the operator $\reg$ is
invariant under the addition or removal of shortcuts. More precisely, $\reg(N^-)=\reg(N)$ always
holds; cf.\ Lemma~\ref{lem:basic-properties-ireg}(2). In contrast, the normalization operator
$\norm$ does not share this invariance property. 

To illustrate this fact, consider the networks shown in Figure~\ref{fig:normal-prune-intro}. The
network $N_1$ differs from $T_1$ only by the presence of the two shortcuts $\rho \to a$ and
$\rho \to b$. Although $\reg(N_1^-)=\reg(N_1)=T_1$, we have $\norm(N_1)=T_2 \neq T_1 =
\norm(N_1^-)$. Thus, in general, $\norm(N) \neq \norm(N^-)$. Nevertheless, we show that
normalization can be characterized using the $\ominus$-operator together with the removal of
shortcuts. To this end, we start with establishing the connection between the graph $\cov(N)$ in
Definition~\ref{def:normalisation} and the $\ominus$-operator combined with shortcut removal.

\begin{lemma}\label{lem:cov(N)-ominus-form}
    For all networks $N$ on $X$, $\cov(N)=(N\ominus \notvis(N))^-$ is a normal network on $X$. 
\end{lemma}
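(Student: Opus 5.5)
Set $W\coloneqq\notvis(N)$ and $H\coloneqq(N\ominus W)^-$. The proof has two parts: first show $H=\cov(N)$, then show $\cov(N)$ is a normal network on $X$.

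\emph{Step 1: the vertex sets and the preorders agree.} Every leaf is visible, so $W\subseteq V(N)\setminus X$. Observation~\ref{obs:prune-basics} then says that $H$ is a shortcut-free DAG on $X$ with $V(H)=\vis(N)$. It also gives $u\preceq_H v\iff u\preceq_N v$ for all $u,v\in\vis(N)$.

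Now consider the auxiliary DAG $G=(\vis(N),A)$ from Definition~\ref{def:normalisation}. By construction, $G$ contains an arc $u\to v$ exactly when $v\prec_N u$. Hence $v\prec_G u$ implies $v\prec_N u$, since $\prec_N$ is transitive. Conversely, $v\prec_N u$ gives the arc $u\to v$ directly. So $\preceq_G$ coincides with $\preceq_N$ restricted to $\vis(N)$.

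By Lemma~\ref{lem:properties-SF-G}, $\cov(N)=G^-$ is shortcut-free and has the same preorder as $G$. Thus $H$ and $\cov(N)$ are shortcut-free DAGs on the same vertex set with identical preorders. The uniqueness part of Lemma~\ref{lem:same-prec-so-G=H} then gives $H=\cov(N)$.

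\emph{Step 2: $\cov(N)$ is a network.} The root $\rho$ of $N$ is visible, so it lies in $\vis(N)$. Every visible vertex is $\preceq_N\rho$, and this relation is preserved in $\cov(N)$. Hence $\rho$ is the unique root of $\cov(N)$. The leaf set is $X$ by Observation~\ref{obs:prune-basics}.

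\emph{Step 3: $\cov(N)$ is normal.} This is the main obstacle. By Theorem~\ref{thm:char-normal-vis}, it suffices to show that every vertex of $\cov(N)$ is visible in $\cov(N)$, since shortcut-freeness is already known.

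Fix $v\in\vis(N)$, and let $x\in X$ witness its visibility in $N$, so every $\rho x$-path in $N$ passes through $v$. Suppose, for contradiction, that some $\rho x$-path $P$ in $\cov(N)$ avoids $v$. Write $P$ as $\rho=w_0\to w_1\to\dots\to w_m=x$. Each arc $w_j\to w_{j+1}$ of $\cov(N)$ satisfies $w_{j+1}\prec_N w_j$, so it can be replaced by a directed $w_jw_{j+1}$-path in $N$. Concatenating these gives a $\rho x$-path $Q$ in $N$, which must pass through $v$. Since $v\notin\{w_0,\dots,w_m\}$, the vertex $v$ lies strictly inside some sub-path from $w_j$ to $w_{j+1}$. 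Then $w_{j+1}\prec_N v\prec_N w_j$, and all three vertices are visible. Therefore $v\prec_{\cov(N)}$-lies between them, so $w_j\to w_{j+1}$ is a shortcut of $G$ and cannot be an arc of $G^-$, a contradiction.

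In this last step I must make sure the sub-path argument is stated carefully. The key point is that a path from $w_j$ through $v$ to $w_{j+1}$ exists in $G$ and avoids the arc $w_j\to w_{j+1}$: it consists of the arc $w_j\to v$ followed by the arc $v\to w_{j+1}$, both of which are arcs of $G$. Hence every vertex of $\cov(N)$ is visible, and $\cov(N)$ is normal.
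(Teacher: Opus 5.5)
Your proof is correct and follows the paper's route. You identify $\cov(N)$ with $(N\ominus\notvis(N))^-$ via equal vertex sets, equal reachability and the uniqueness part of Lemma~\ref{lem:same-prec-so-G=H}, then obtain normality from Theorem~\ref{thm:char-normal-vis} by showing that every vertex remains visible. Your Step~3 is more careful than the paper's, which lifts a $v$-avoiding $\rho x$-path of $\cov(N)$ to $N$ by citing only reachability; your observation that an interior occurrence of $v$ would make $w_j\to w_{j+1}$ a shortcut of $G$ is exactly what justifies that lifting.
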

\begin{proof}
	Since the proof is straightforward, we have placed it in Appendix~\ref{sec:appx}.
\end{proof}

To express $\norm(N)$ in terms of the $\ominus$-operator, we need to determine the set $W$ for which
$\norm(N)=(N\ominus W)^-$ holds. To this end, we define the following set 
\[
\dvis(N) \coloneqq \{v \in \vis(N) \mid \indeg_{\cov(N)}(v) = \outdeg_{\cov(N)}(v) = 1\}.
\]
The vertices in $\dvis(N)$ are called \emph{dispensably visible} (or \emph{$\wvis$}) vertices. By
definition, $\norm(N) = \cov(N)$ if and only if $\cov(N)$ contains no vertices $u$ with
$\indeg_{\cov(N)}(u) = \outdeg_{\cov(N)}(u) = 1$, that is, if and only if $N$ contains no $\wvis$
vertices. Importantly, the $\wvis$ vertices in $\vis(N)$ are determined in terms of their properties
in $\cov(N)$. Nevertheless, as the next result shows $\wvis$ vertices of $N$ can be identified
solely based on properties they satisfy in $N$ itself.

\begin{lemma}\label{lem:parents/children-in-cov}
	Let $N$ be a network on $X$. We define the following two sets for all vertices $v\in \vis(N)$:
    \begin{align*}
    P_{\visop}(v) &\coloneqq\{u\in \vis(N)\setminus \{v\}\mid \text{there is some } uv\text{-path } P \text{ in } N \text{ s.t. }  V(P)\cap \vis(N)=\{u,v\}\} \\ 
    C_{\visop}(v) &\coloneqq\{u\in \vis(N)\setminus \{v\}\mid \text{there is some } vu\text{-path } P \text{ in } N \text{ s.t. }  V(P)\cap \vis(N)=\{u,v\}\}  
    \end{align*}
    Then, for all $v\in \vis(N)$, the following two statements are equivalent. 
    \begin{enumerate}
        \item $v\in \dvis(N)$.
        \item (a) $P_{\visop}(v)\neq\emptyset$ and $P_{\visop}(v)$ contains a unique $\preceq_N$-minimal vertex and 
        
        			 (b) $C_{\visop}(v)\neq\emptyset$ and $C_{\visop}(v)$ contains a unique $\preceq_N$-maximal vertex.
    \end{enumerate}
    Moreover, if $v\in \dvis(N)$, then $v\notin\LCA_N(A)$ for all non-empty $A\subseteq X$.
    In particular, $\dvis(N)\subseteq \notlcaV{i}(N)$ for all integers $i\geq2$.
\end{lemma}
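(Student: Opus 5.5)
The plan is to describe the parents and children of a visible vertex $v$ in $\cov(N)$ in terms of $P_{\visop}(v)$ and $C_{\visop}(v)$. I will use the representation $\cov(N)=(N\ominus\notvis(N))^-$ from Lemma~\ref{lem:cov(N)-ominus-form}.

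\textbf{Step 1: arcs of $N\ominus\notvis(N)$.} Iterating Definition~\ref{def:ominus} shows that, for $u,w\in\vis(N)$, there is an arc $u\to w$ in $N\ominus\notvis(N)$ if and only if $N$ contains a $uw$-path whose interior vertices are all non-visible. This is an induction over the removed vertices, and commutativity (Lemma~\ref{ominus:commutative}) lets us remove them in any order. Hence the in-neighbours of $v$ in $N\ominus\notvis(N)$ are exactly $P_{\visop}(v)$, and its out-neighbours are exactly $C_{\visop}(v)$.

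\textbf{Step 2: shortcut removal.} Passing to $\cov(N)$ removes shortcuts. By Lemma~\ref{lem:properties-SF-G} and Lemma~\ref{lem:ominus-basics}, $\preceq$ is unchanged throughout, so it agrees with $\preceq_N$ on $\vis(N)$. I claim that the parents of $v$ in $\cov(N)$ are precisely the $\preceq_N$-minimal elements of $P_{\visop}(v)$.

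\begin{itemize}
\item Suppose $u\in P_{\visop}(v)$ and $u'\in P_{\visop}(v)$ with $u'\prec_N u$. Then $u\to u'\to v$ is a path in $N\ominus\notvis(N)$, so the arc $u\to v$ is a shortcut and is removed.
\item Conversely, suppose $u\to v$ is a shortcut. Then there is a path $u\to\dots\to p\to v$ in $N\ominus\notvis(N)$ with $p\ne u$. By Step 1, $p\in P_{\visop}(v)$ and $p\prec_N u$, so $u$ is not minimal.
\end{itemize}

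The dual argument shows that the children of $v$ in $\cov(N)$ are the $\preceq_N$-maximal elements of $C_{\visop}(v)$.

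\textbf{Step 3: the equivalence.} A finite non-empty set has a unique minimal element exactly when it has exactly one minimal element. So $\indeg_{\cov(N)}(v)=1$ holds if and only if (a) holds, and likewise $\outdeg_{\cov(N)}(v)=1$ holds if and only if (b) holds. This gives the equivalence of (1) and (2).

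\textbf{Step 4: the "moreover" part.} Let $v\in\dvis(N)$ and let $c$ be its unique child in $\cov(N)$. I aim to show that $\CC_N(v)=\CC_N(c)$. Take any leaf $x\preceq_N v$ and any $vx$-path in $N$. Since $x$ is visible, this path meets a first visible vertex after $v$, and that vertex lies in $C_{\visop}(v)$. It is therefore $\preceq_N$ some maximal element of $C_{\visop}(v)$, and the only such element is $c$. Hence $x\preceq_N c$, so $\CC_N(v)=\CC_N(c)$ with $c\prec_N v$.

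Now suppose $v\in\LCA_N(A)$ for some non-empty $A$. Then $A\subseteq\CC_N(c)$, so $c$ is a common ancestor of $A$ strictly below $v$. This contradicts the minimality of $v$, so $v\notin\LCA_N(A)$ for every non-empty $A\subseteq X$. This means $v$ is not a $k$-lca vertex for any $k$, and therefore $\dvis(N)\subseteq\notlcaV{i}(N)$ for all $i\geq 2$.

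\textbf{Main obstacle.} The delicate point is Step 1. Iterated $\ominus$ could conceivably create arcs through chains of removed vertices in unexpected ways, so the path characterization must be proved carefully by induction on $|W|$, using that $N$ is acyclic. The remaining steps are routine order-theoretic bookkeeping.
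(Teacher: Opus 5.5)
Your proof is correct, but it is organized differently from the paper's.

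\textbf{How the paper argues.} The paper never describes the arcs of $N\ominus\notvis(N)$. It works with the transitive-closure graph $G$ on $\vis(N)$ from Definition~\ref{def:normalisation}, so that $\cov(N)=G^-$, together with the fact that $\preceq_{\cov(N)}$ and $\preceq_N$ agree on $\vis(N)$. It then proves the two implications separately:
\begin{itemize}
\item From the unique child $c$, it deduces that no visible vertex lies strictly between $c$ and $v$. Hence $c\in C_{\visop}(v)$ and every other element of $C_{\visop}(v)$ lies below $c$.
\item Conversely, it shows that every arc $v\to u$ of $\cov(N)$ has $u\in C_{\visop}(v)$, since otherwise the arc would be a shortcut of $\cov(N)$. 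It also shows that $v\to w$ is a shortcut of $G$ for every non-maximal $w\in C_{\visop}(v)$.
\end{itemize}

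\textbf{How your route differs.} You instead prove the path characterization of the arcs of $N\ominus W$. This is a routine induction: acyclicity turns concatenations through the newly removed vertex into paths, and any path with interior in $W\cup\{w\}$ splits at $w$. It yields a stronger statement: for \emph{every} visible $v$, the parents of $v$ in $\cov(N)$ are exactly the $\preceq_N$-minimal elements of $P_{\visop}(v)$, and its children are exactly the $\preceq_N$-maximal elements of $C_{\visop}(v)$. The equivalence then follows by counting, symmetrically in both directions.

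\textbf{The final part.} The paper transfers the equality of clusters from $\cov(N)$ back to $N$ and invokes \cite[Lem.~3.4]{HL:24}. You argue directly in $N$, through the first visible vertex on each $vx$-path, and finish with an explicit minimality contradiction. Your version is self-contained.

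\textbf{One small slip.} In the first bullet of Step~2, $u\to u'$ need not be an arc of $N\ominus\notvis(N)$. You only have a $uu'$-path there, by Lemma~\ref{lem:ominus-basics}. The witness for the shortcut is therefore $u\leadsto u'\to v$, which works just as well.
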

\begin{proof}
	Let $N$, $P_{\visop}(v)$ and $C_{\visop}(v)$ be defined as in the lemma's statement for some $v\in
	\vis(N)$. We recall that Lemma~\ref{lem:cov(N)-ominus-form} and Observation~\ref{obs:prune-basics}
	together implies that \begin{equation}\label{eq:prec-equiv-1'} v\preceq_{\cov(N)} u \iff
	v\preceq_N u,\quad\text{ for all $u,v\in \vis(N)=V(\cov(N))$.} \end{equation} First assume that
	$v$ is $\wvis$ and thus, that $\indeg_{\cov(N)}(v)=\outdeg_{\cov(N)}(v)=1$. Let $c$ denote the
	unique child of $v$ in $\cov(N)$. Hence, $v,c\in \vis(N)$ and there is no vertex $u$ of $\cov(N)$
	such that $c\prec_{\cov(N)}u\prec_{\cov(N)} v$. This together with the definition of $\cov(N)$
	implies that $c\prec_N v$ and that there is no $u \in \vis(N)$ such that $c\prec_{N}u\prec_{N} v$.
	The latter two arguments imply that every path $P=v\leadsto c$ in $N$ must satisfy $V(P)\cap
	\vis(N)=\{v,c\}$ and, therefore, $c\in C_{\visop}(v)$. Let $w\in C_{\visop}(v)$ be a vertex such
	that $w\neq c$. By definition of $C_{\visop}(v)$, it holds that $v\neq w$ and there is a
	$vw$-path in $N$ and thus, $w\prec_N v$ and, furthermore, $w\in \vis(N) = V(\cov(N))$ which
	together with Eq.~\eqref{eq:prec-equiv-1'} implies that $w\prec_{\cov(N)} v$. Since $c$ is the
	unique child of $v$ in $\cov(N)$ and $w\neq c$, we therefore have $w\prec_{\cov(N)} c$ which,
	again, is equivalent to $w\prec_N c$. In other words, $c$ must be the unique $\preceq_N$-maximal
	vertex in the set $C_{\visop}(v)$. With analogous arguments, one shows that the unique parent $p$
	of $v$ in $\cov(N)$ is the unique $\preceq_N$-minimal element of $P_{\visop}(v)$.
 
	Assume, for the converse, that $P_{\visop}(v)$ contains a unique $\preceq_N$-minimal vertex $p$
	and that $C_{\visop}(v)$ contains a unique $\preceq_N$-maximal vertex $c$. We first show that if
	an arc $v\to u$ of $\cov(N)$ exist, then $u\in C_{\visop}(v)$. To this end, note first that if
	$v\to u$ is an arc of $\cov(N)$, then, by definition, $u\prec_N v$. Let $P$ denote any path from
	$v$ to $u$ in $N$. If there would exist a visible vertex $w\in V(P)\setminus\{u,v\}$, then
	$u\prec_N w\prec_N v$ and Eq.~\eqref{eq:prec-equiv-1'} would imply that $u\prec_{\cov(N)}
	w\prec_{\cov(N)} v$; but that would make the arc $v\to u$ a shortcut of $\cov(N)$, contradicting
	$\cov(N)$ being shortcut-free. Hence $V(P)\cap \vis(N)=\{u,v\}$ and, therefore, $u\in
	C_{\visop}(v)$. Thus, any arc $v\to u$ that potentially exists in $\cov(N)$ must satisfy $u\in
	C_{\visop}(v)$. Now, by definition, each $u\in C_{\visop}(v)$ satisfies $u\prec_N v$. Hence, for
	each $u\in C_{\visop}(v)$, the arc $v\to u$ is initially added to the DAG $G$ as in
	Definition~\ref{def:normalisation}(1) from which $\cov(N) = G^-$ is obtained from. Since $c$ is
	the unique $\preceq_N$-maximal vertex in $C_{\visop}(v)$ it follows that any other vertex $w\neq
	c$ in $C_{\visop}(v)$ (if there is one), must satisfy $w\prec_N c$. By the latter arguments, the
	arcs $v\to c$, $c\to w$ and $v\to w$ exists in $G$. However, since the arcs $v\to w$ is a shortcut
	in $G$, it will thus not remain in $\cov(N)=G^-$. As this holds for all vertices $w\in
	C_{\visop}(v)\setminus \{c\}$ it follows that only the arc $v\to c$ exists in $\cov(N)$ and
	therefore, that $c$ is the unique child of $v$ in $\cov(N)$. In summary, $\outdeg_{\cov(N)} = 1$.
	Using similar arguments, one can show that the unique $\preceq_N$-minimal vertex $p$ in
	$P_{\visop}(v)$ is the unique parent of $v$ in $\cov(N)$ and thus, that $\indeg_{\cov(N)} = 1$. By
	definition, $v\in \dvis(N)$. Therefore Statement (1) and (2) are equivalent.

	Now consider any $v\in \dvis(N)$. Since $\outdeg_{\cov(N)}(v) =1$, one easily verifies that there
	is a path $P=v\leadsto u$ in $\cov(N)$ where $u$ satisfies $\outdeg_{\cov(N)}(u)\neq 1$, and where
	each vertex $w$ along $P$ distinct from $u$ (if there is one) satisfies $w\in \dvis(N)$. Clearly,
	$\CC_{\cov(N)}(v)=\CC_{\cov(N)}(u)$ and $u\notin \dvis(N)$. By Lemma~\ref{lem:cov(N)-ominus-form},
	the vertices in $X$ remain vertices in $\cov(N)$. This and Eq.~\eqref{eq:prec-equiv-1'} implies
	that, for each $x\in X$, we have $x\preceq_{\cov(N)} u$ if and only if $x\preceq_N u$.
	Consequently, $\CC_{\cov(N)}(u)=\CC_N(u)$. Similarly, $\CC_{\cov(N)}(v)=\CC_N(v)$. Since,
	additionally, $u\prec_N v$, we have shown that $v$ has a proper descendant $u$ in $N$ with the
	same cluster as $v$. Consequently, $v\notin \LCA_N(\CC_N(v))$ must hold, and this together with
	\cite[Lem~3.4]{HL:24} implies that $v\notin\LCA_N(A)$ and, in particular, $v\neq \lca_N(A)$ for
	all non-empty subsets $A\subseteq X$. As the latter holds for all vertices $v\in \dvis(N)$, we
	conclude that $\dvis(N)\subseteq \notlcaV{|X|}(N)$. By Lemma~\ref{lem:lca-sets-inclusions}, we
	thus have $\dvis(N)\subseteq\notlcaV{i}(N)$ for all $i\geq 2$.
\end{proof}

By definition, suppression can be formulated in terms of the $\ominus$-operator (c.f.\ the
discussion right after Definition~\ref{def:ominus}), hence the following is an immediate result of
Lemma~\ref{lem:cov(N)-ominus-form} and the definition of $\norm(N)$.

\begin{theorem}\label{thm:cov(N)-ominus-form}
 For all networks $N$ we have $\norm(N)=\cov(N)\ominus \dvis(N)=(N\ominus \notvis(N))^-\ominus
 \dvis(N)$.
\end{theorem}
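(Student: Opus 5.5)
The plan is to combine Lemma~\ref{lem:cov(N)-ominus-form} with the observation that suppression of a vertex with in-degree and out-degree one is exactly one application of the $\ominus$-operator. The second equality, $\cov(N)\ominus\dvis(N)=(N\ominus\notvis(N))^-\ominus\dvis(N)$, is immediate once we substitute $\cov(N)=(N\ominus\notvis(N))^-$ from Lemma~\ref{lem:cov(N)-ominus-form}. So the work is in showing $\norm(N)=\cov(N)\ominus\dvis(N)$.

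By Definition~\ref{def:normalisation}(3), $\norm(N)$ is obtained from $\cov(N)$ by suppressing all vertices $v$ with $\indeg_{\cov(N)}(v)=\outdeg_{\cov(N)}(v)=1$. By definition these are exactly the vertices of $\dvis(N)$, since $V(\cov(N))=\vis(N)$. For a single such vertex $v$ with parent $p$ and child $c$, the discussion after Definition~\ref{def:ominus} shows that $\cov(N)\ominus v$ deletes $v$ and adds the arc $p\to c$, which is precisely suppression of $v$.

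The one point needing care is that the suppressions are performed simultaneously on a set, while $\ominus$ is applied vertex by vertex, so the degree-$(1,1)$ status must persist. Let $d\in\dvis(N)$ and let $w\in\dvis(N)\setminus\{d\}$ be removed first. If $d$ is not adjacent to $w$, its neighbourhood is unchanged. If $d$ is the child of $w$, then $d$'s unique parent $w$ is replaced by $w$'s unique parent. If $d$ is the parent of $w$, then $d$'s unique child $w$ is replaced by $w$'s unique child. In every case $d$ keeps in-degree and out-degree one, because $\ominus$ applied to a degree-$(1,1)$ vertex introduces exactly one arc $p\to c$. Parallel arcs cannot arise: we work with arc sets, and $p\to c$ already present would be a shortcut in the shortcut-free $\cov(N)$. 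By induction, each step is a genuine suppression of a degree-$(1,1)$ vertex.

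Lemma~\ref{ominus:commutative} makes the result independent of the order in which the vertices are removed. It therefore coincides with suppressing all of $\dvis(N)$, which is $\norm(N)$. The only mild obstacle is this invariance of the degree-$(1,1)$ property along chains of dispensably visible vertices; the remaining steps are direct substitutions.
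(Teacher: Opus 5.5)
Your proof is correct and takes the paper's route: suppressing a vertex of in- and out-degree one is a single $\ominus$-step, and you substitute Lemma~\ref{lem:cov(N)-ominus-form}; beyond the paper, you also spell out why the remaining vertices of $\dvis(N)$ keep in- and out-degree one under sequential removal, which the paper treats as immediate. One small remark: your parallel-arc justification (that $p\to c$ would be a shortcut of $\cov(N)$) does not by itself cover an arc $p\to c$ created by an earlier removal, but the point is moot, since arcs form a set and your case analysis for $d$ only uses that $d$ and $w$ have in- and out-degree one in the current DAG.
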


\begin{figure}
    \centering
    \includegraphics[width=0.75\linewidth]{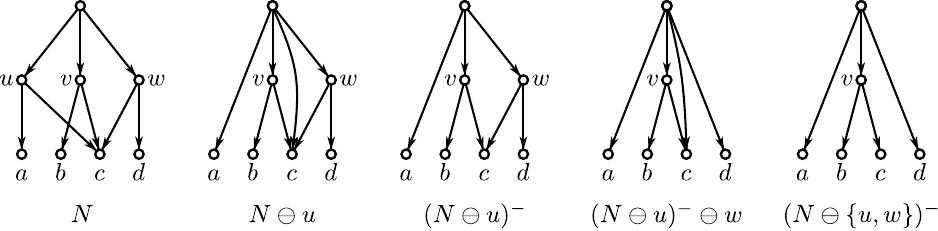}
    \caption{Some networks resulting from the $N$ on the very left after application of the
             $\ominus$-operator and removal of shortcuts. In particular, note that $(N\ominus
             u)^-\ominus w$ is not isomorphic to $(N\ominus \{u,w\})^-$.} \label{fig:ominus-exmpls}
\end{figure}

Note that $(N\ominus U)\ominus W = N\ominus(U\cup W)$ always holds by
Lemma~\ref{ominus:commutative}. However, the $\ominus$-operator could result in networks $N\ominus
U$ that contain shortcuts even if the original network $N$ is shortcut-free; for example, see the
network $N\ominus u$ in Figure~\ref{fig:ominus-exmpls}. Hence, in general, $(N\ominus U)^-\ominus
W\neq(N\ominus(U\cup W))^-$, since $(N\ominus U)^-\ominus W$ may contain shortcuts that, clearly, do
not appear in $(N\ominus(U\cup W))^-$; see Figure~\ref{fig:ominus-exmpls} again for an example.
Intriguingly, as we shall see in Theorem~\ref{thm:norm(N)-ominus-form}, $\norm(N)=(N\ominus
(\notvis(N)\cup \dvis(N)))^-$. In order to prove this theorem and for completeness, we now show that
Theorem~3.3 of \cite{francis2021normalising} also holds for the more general setting of networks
that are not necessarily phylogenetic and separated.

\begin{proposition}
\label{prop:normN-basics}
    
    Let $N$ be a network on $X$. The following statements hold.
    \begin{enumerate}
    \item $\norm(N)$ is a normal network on $X$ for which $V(\norm(N))=\vis(N)\setminus \dvis(N)
              = V(N)\setminus (\notvis(N)\cup \dvis(N))$. 
    \item For all $u,v\in V(\norm(N))$, $v\preceq_{\norm(N)} u \iff v\preceq_N u$.
    \item If $v\in V(\norm(N))$, then $\CC_{\norm(N)}(v)=\CC_N(v)$.
    \item No vertex $v$ of $\norm(N)$ satisfy $\indeg_{\norm(N)}(v)=\outdeg_{\norm(N)}(v)=1$. In
          particular, $\norm(N)$ is phylogenetic if and only if its root has out-degree distinct
          from 1. \item $\norm(N)=N$ if and only if $N$ is normal and there exists no $v\in V(N)$
          such that $\indeg_N(v)=\outdeg_N(v)=1$.
    \item If $N$ is separated, then $\norm(N)$ is separated.
    \item $\norm(N)$ is a tree if and only if $\{\CC_N(v)\mid v\in \vis(N)\}$ is a hierarchy.
\end{enumerate} 
\end{proposition}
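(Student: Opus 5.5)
The plan is to build everything from Theorem~\ref{thm:cov(N)-ominus-form}, $\norm(N)=\cov(N)\ominus\dvis(N)$, together with Lemma~\ref{lem:cov(N)-ominus-form}, which says that $\cov(N)$ is a normal network on $X$ with vertex set $\vis(N)$ and the same ancestor order as $N$ (Observation~\ref{obs:prune-basics}).

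\textbf{Statements (1)--(3).} The vertices of $\dvis(N)$ all have in- and out-degree one in $\cov(N)$. I would first prove a small auxiliary claim: suppressing a vertex $v$ with $\indeg(v)=\outdeg(v)=1$ in a normal DAG keeps it normal and does not change the in- or out-degree of any other vertex.
\begin{itemize}
\item Degrees: the parent $p$ loses the arc to $v$ and gains an arc to the child $c$, so no degree changes.
\item Shortcut-freeness: $p\to c$ is not a shortcut. Any other $pc$-path would give a $pc$-path in $\cov(N)$ avoiding $v$, and together with $p\to v\to c$ the arc $p\to c$ would not have been absent originally. More precisely, $p\to v$ or $v\to c$ would then be a shortcut, contradicting shortcut-freeness.
\item No new shortcuts: any other new shortcut would lift back to a shortcut of the original graph.
\item Tree-child: if $c$ was the tree-child of $v$, it remains a tree-child of $p$, and tree-children of other vertices are unaffected.
\end{itemize}
Since degrees are preserved, each vertex of $\dvis(N)$ still has in- and out-degree one when its turn comes. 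Iterating therefore shows that $\norm(N)$ is normal. It remains a network because the root of $\cov(N)$ has in-degree $0$, so it is never in $\dvis(N)$; and since leaves have out-degree $0$, it is a network on $X$.

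The vertex set $\vis(N)\setminus\dvis(N)$ is immediate. Statement (2) follows from Lemma~\ref{lem:ominus-basics} applied to $\cov(N)$ combined with Eq.~\eqref{eq:prec-equiv-1'}. Statement (3) follows from (2), since $X\subseteq V(\norm(N))$.

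\textbf{Statement (4).} Degrees are preserved under the suppressions, so any vertex of $\norm(N)$ has the same in- and out-degree as in $\cov(N)$. A vertex with both degrees equal to one would therefore lie in $\dvis(N)$ and would have been removed. The only possible violation of phylogeneticity is then a vertex of in-degree $0$ and out-degree $1$, i.e.\ the root.

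\textbf{Statement (5).} For the forward direction, if $\norm(N)=N$ then $N$ is normal by (1) and has no degree-$(1,1)$ vertex by (4). Conversely, if $N$ is normal, all vertices are visible by Theorem~\ref{thm:char-normal-vis}. Then $\cov(N)=(N\ominus\emptyset)^-=N$, so $\dvis(N)$ consists exactly of the degree-$(1,1)$ vertices of $N$; there are none, hence $\norm(N)=N$.

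\textbf{Statement (6).} This is the step I expect to be the main obstacle: showing that a hybrid $h$ of $\norm(N)$ has out-degree one. In-degrees are the same in $\norm(N)$ and $\cov(N)$, so $h$ is a hybrid of $\cov(N)$. The plan is:
\begin{enumerate}
\item Argue that $h$ is a hybrid in $N$. If $h$ had a unique parent $q$ in $N$ and $q$ were invisible, then every root-$q$ path continues to $h$, so the visibility witness for $h$ passes through $h$'s unique route. I would use this to show that the parents of $h$ in $\cov(N)$ are all ancestors of $q$. I expect this to force a single $\preceq$-minimal visible ancestor, i.e.\ a unique cover parent.
\item Since $N$ is separated, $h$ then has a unique child $c$ in $N$.
\item Every visible proper descendant of $h$ is a descendant of $c$, and $c$'s visible descendants lie below the first visible vertex on any path, giving $h$ a unique child in $\cov(N)$.
\item The child of $h$ in $\cov(N)$ might itself be suppressed, but suppression preserves out-degree, so $h$ still has out-degree one in $\norm(N)$.
\end{enumerate}
The delicate point in step 3 is showing that $c$'s visible descendants are totally "funneled". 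My first attempt would be to show that the $\preceq_N$-maximal visible descendants of $h$ form a single vertex: $h$ is visible for some leaf $x$, and all $h x$ paths go through $c$.

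\textbf{Statement (7).} Apply (3) and (1). The cluster set of $\norm(N)$ is $\{\CC_N(v)\mid v\in\vis(N)\setminus\dvis(N)\}$, and this equals $\{\CC_N(v)\mid v\in\vis(N)\}$ because each dvis vertex shares its cluster with a non-dvis descendant, as shown in Lemma~\ref{lem:parents/children-in-cov}. A normal network is a tree iff its cluster set is a hierarchy:
\begin{itemize}
\item If it is a tree, its clusters form a hierarchy.
\item Conversely, if $h$ is a hybrid with parents $p_1,p_2$, then tree-child paths from $p_1$ and $p_2$ give leaves in $\CC(p_1)\setminus\CC(p_2)$ and vice versa, while both contain $\CC(h)$. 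This violates the hierarchy property, using the (PCC)-type incomparability and visibility.
\end{itemize}
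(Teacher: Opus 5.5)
Your treatment of (1)--(5) follows the paper: it also starts from $\norm(N)=\cov(N)\ominus\dvis(N)$ and uses that $\cov(N)$ is normal. The only difference is that the paper cites Proposition~\ref{prop:lca-i/reg-i_of_normal} (every vertex of $\dvis(N)$ has out-degree one in $\cov(N)$) instead of re-deriving that suppression preserves normality. If you keep your own auxiliary claim, repair the reason why $p\to c$ is not a shortcut. A $pc$-path avoiding $v$ does not make $p\to v$ or $v\to c$ a shortcut. The actual reason is that $c$, as the only child of $v$ in a tree-child DAG, is a tree-child, so every path into $c$ passes through $v$. Your argument for (7) is sound; the paper does not prove (6) or (7) itself but defers them to \cite{francis2021normalising}.

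The gap is in (6). You leave it at the level of ``I expect'', and the route you sketch would not work. You plan to handle an invisible parent $q$ of $h$, or an invisible child $c$, by showing that its $\preceq_N$-minimal visible ancestors (respectively maximal visible descendants) are unique. For an arbitrary invisible vertex this is false: it can have two incomparable minimal visible ancestors.

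The missing observation is that neither case arises, because visibility propagates along unique arcs. Let $h$ be a hybrid of $\norm(N)$. Since suppression preserves in- and out-degrees, $h$ is a hybrid of $\cov(N)$, and it suffices to show $\outdeg_{\cov(N)}(h)=1$. Fix a leaf $x$ such that every path from the root $\rho$ of $N$ to $x$ passes through $h$.
\begin{itemize}
\item Suppose $h$ had a unique parent $q$ in $N$. Then every such path also passes through $q$, so $q\in\vis(N)$. Every proper ancestor of $h$ is an ancestor of $q$, so $q$ is the unique $\preceq_N$-minimal visible proper ancestor of $h$. By shortcut-freeness and Eq.~\eqref{eq:prec-equiv-1}, $q$ is then the unique parent of $h$ in $\cov(N)$, a contradiction.
\item Hence $h$ is a hybrid of $N$, and since $N$ is separated it has a unique child $c$. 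Every $\rho x$-path passes through $h$ and then necessarily through $c$, so $c\in\vis(N)$. Every proper descendant of $h$ is a descendant of $c$, so $c$ is the unique child of $h$ in $\cov(N)$.
\end{itemize}
Your step 4 (suppression preserves out-degree) then completes the proof.
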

\begin{proof}
	Since the proof is straightforward and resembles \cite[Thm.~3.3]{francis2021normalising}, we have
	placed it in Appendix~\ref{sec:appx}.
\end{proof}

We are now in the position to extend Theorem~\ref{thm:cov(N)-ominus-form} and to characterize those
networks $N$ for which $\norm(N)$ and $\reg(N)$ coincide.
\begin{theorem}
\label{thm:norm(N)-ominus-form}
	Let $N$ be a network. Then $\norm(N)=(N\ominus (\notvis(N)\cup \dvis(N)))^-$. Moreover, for each
	$i\geq2$, we have
\begin{alignat}{3}
    &V(\reg_{i}(N))     \cap  V(\norm(N)) &&= \lcaV{i}(N) \cap \vis(N) \label{eq:intersect}\\ 
    &V(\reg_{i}(N)) \setminus V(\norm(N)) &&= \lcaV{i}(N) \setminus \vis(N) \label{eq:setminus}\\
    &V(\norm(N)) \setminus V(\reg_{i}(N)) &&=\vis(N) \setminus (\lcaV{i}(N)\cup \dvis(N)). \label{eq:setminus2}
\end{alignat}
In particular, $\norm(N)=\reg_{i}(N)$ if and only if $\lcaV{i}(N) = \vis(N)\setminus \dvis(N)$ and,
therefore, $\norm(N)=\reg(N)$ if and only if $\lcaV{|X|}(N) = \vis(N)\setminus \dvis(N)$.
\end{theorem}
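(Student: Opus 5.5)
For the first identity I would put $W\coloneqq\notvis(N)\cup\dvis(N)$ and $H\coloneqq(N\ominus W)^-$, and compare $H$ with $\norm(N)$ using the uniqueness statement in Lemma~\ref{lem:same-prec-so-G=H}. That lemma says two shortcut-free DAGs on the same vertex set with the same ancestor relation are equal, so three checks suffice.

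First, both DAGs have the same vertices. Since $\dvis(N)\subseteq\vis(N)$, the sets $\notvis(N)$ and $\dvis(N)$ are disjoint. By Lemma~\ref{lem:ominus-basics}, $V(H)=V(N)\setminus W=\vis(N)\setminus\dvis(N)$. This equals $V(\norm(N))$ by Proposition~\ref{prop:normN-basics}(1). Note that $W$ contains no leaves: leaves are visible, and leaves have out-degree $0$ in $\cov(N)$, so they are not in $\dvis(N)$.

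Second, both are shortcut-free. $H$ is shortcut-free by construction. $\norm(N)$ is normal by Proposition~\ref{prop:normN-basics}(1), hence shortcut-free.

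Third, the ancestor relations agree. For $u,v\in V(H)$, Observation~\ref{obs:prune-basics} gives $v\preceq_H u\iff v\preceq_N u$. Proposition~\ref{prop:normN-basics}(2) gives $v\preceq_{\norm(N)}u\iff v\preceq_N u$. So Lemma~\ref{lem:same-prec-so-G=H} yields $H=\norm(N)$. The real content sits in Proposition~\ref{prop:normN-basics}, which we may assume. The point of this approach is to avoid comparing shortcut removal before and after the suppression, which, as Figure~\ref{fig:ominus-exmpls} shows, need not commute.

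For the vertex-set identities I would use $V(\reg_i(N))=\lcaV{i}(N)$ (from Lemma~\ref{lem:ominus-basics}, since $\notlcaV{i}(N)$ contains no leaves by Lemma~\ref{lem:lca-sets-inclusions}) together with $V(\norm(N))=\vis(N)\setminus\dvis(N)$. The key fact is Lemma~\ref{lem:parents/children-in-cov}: for $i\ge2$, $\dvis(N)\subseteq\notlcaV{i}(N)$, so $\lcaV{i}(N)\cap\dvis(N)=\emptyset$. Then:
\begin{itemize}
\item Eq.~\eqref{eq:intersect}: $\lcaV{i}(N)\cap(\vis(N)\setminus\dvis(N))=\lcaV{i}(N)\cap\vis(N)$.
\item Eq.~\eqref{eq:setminus}: $\lcaV{i}(N)\setminus(\vis(N)\setminus\dvis(N))=(\lcaV{i}(N)\setminus\vis(N))\cup(\lcaV{i}(N)\cap\dvis(N))$, and the second term is empty.
\item Eq.~\eqref{eq:setminus2}: $(\vis(N)\setminus\dvis(N))\setminus\lcaV{i}(N)=\vis(N)\setminus(\lcaV{i}(N)\cup\dvis(N))$, which is plain set algebra.
\end{itemize}

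For the final equivalence, suppose $\norm(N)=\reg_i(N)$. Then their vertex sets agree, so $\lcaV{i}(N)=\vis(N)\setminus\dvis(N)$. Conversely, suppose the vertex sets agree. Both DAGs are shortcut-free, and by Observation~\ref{obs:prune-basics} and Proposition~\ref{prop:normN-basics}(2) both inherit $\preceq_N$ on that common vertex set. Lemma~\ref{lem:same-prec-so-G=H} then gives equality. (Alternatively, $\notlcaV{i}(N)=W$ and the first identity.) Setting $i=|X|$ gives the statement for $\reg(N)$; the case $|X|=1$ is trivial or excluded by $i\ge2$.

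There is no real obstacle. The step needing care is checking that $W\subsetneq V(N)$ contains no leaves, so that Lemma~\ref{lem:ominus-basics} and Observation~\ref{obs:prune-basics} apply, and that the needed facts from Proposition~\ref{prop:normN-basics} are indeed available.
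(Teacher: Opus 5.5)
Your proposal is correct and follows essentially the same route as the paper. You identify $(N\ominus(\notvis(N)\cup\dvis(N)))^-$ with $\norm(N)$ through equal vertex sets, shortcut-freeness and preservation of $\preceq_N$ via Lemma~\ref{lem:same-prec-so-G=H}, and you get the vertex-set identities from $\dvis(N)\subseteq\notlcaV{i}(N)$. The only variation is in the converse of the final equivalence: you reapply Lemma~\ref{lem:same-prec-so-G=H} directly, while the paper rewrites $\notlcaV{i}(N)=\notvis(N)\cup\dvis(N)$ and uses the first identity (your stated alternative); both are valid.
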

\begin{proof} 
	Let $N$ be a network on $X$. For simplicity, put $N'\coloneqq(N\ominus (\notvis(N)\cup
	\dvis(N)))^-$. By definition of $\notvis(N)$ and $\dvis(N)$, $X\subseteq V(N)\setminus
	(\notvis(N)\cup \dvis(N))$. This together with Observation~\ref{obs:prune-basics} ensures that
	$N'$ is a shortcut-free DAG on $X$ such that $V(N')=(V(N)\setminus \notvis(N))\setminus
	\dvis(N)=\vis(N)\setminus \dvis(N)$. In addition, Proposition~\ref{prop:normN-basics} shows, in
	particular, that $\norm(N)$ is a shortcut-free network on $X$ with $V(\norm(N))=\vis(N)\setminus
	\dvis(N)$. To summarize so far, the graphs $\norm(N)$ and $N'$ have the same set of vertices and
	are both shortcut-free. Moreover, by Observation~\ref{obs:prune-basics}, we have $u\preceq_{N'} v$
	if and only if $u\preceq_N v$ for all $u,v\in V(N')$. By Proposition~\ref{prop:normN-basics},
	$u\preceq_{\norm(N)} v$ if and only if $u\preceq_N v$ for all $u,v\in V(\norm(N))$. Combining the
	latter two arguments allow us to conclude that, for all $u,v\in \vis(N)\setminus \dvis(N)$, we
	have $u\preceq_{N'} v$ if and only if $u\preceq_{\norm(N)} v$. By
	Lemma~\ref{lem:same-prec-so-G=H}, we thus have $\norm(N)=N'=(N\ominus (\notvis(N)\cup
	\dvis(N)))^-$.
	
	Now, fix some integer $i\geq2$. By definition, we have $V(\reg_i(N))= \lcaV{i}(N)$. By
	Proposition~\ref{prop:normN-basics}, $V(\norm(N))=\vis(N)\setminus \dvis(N)$. In addition,
	Lemma~\ref{lem:parents/children-in-cov} implies that $\dvis(N)\subseteq \notlcaV{i}(N)$ and,
	therefore, $\lcaV{i}(N)\cap \dvis(N) = \emptyset$. It now readily follows that $V(\norm(N)) \cap
	V(\reg_i(N)) = \vis(N)\cap \lcaV{i}(N)$. Hence, Equation~\eqref{eq:intersect} holds. Furthermore,
	$V(\reg_i(N))\setminus V(\norm(N)) = \lcaV{i}(N) \setminus (\vis(N)\setminus \dvis(N))$. Since
	$\lcaV{i}(N)\cap \dvis(N) = \emptyset$, the latter simplifies to $V(\reg_i(N))\setminus
	V(\norm(N)) = \lcaV{i}(N) \setminus \vis(N)$, i.e., Equation~\eqref{eq:setminus} holds. Moreover,
	$V(\reg_i(N))= \lcaV{i}(N)$ and $V(\norm(N))=\vis(N)\setminus \dvis(N)$ implies that
	Equation~\eqref{eq:setminus2} holds.
			  
	To see that the last statement in the theorem holds, first assume that $\norm(N)=\reg_i(N)$. Since
	$V(\norm(N))=\vis(N)\setminus \dvis(N)$ and $V(\reg_i(N))=\lcaV{i}(N)$, we must in particular have
	$\vis(N)\setminus \dvis(N)=V(\norm(N))=V(\reg_i(N))=\lcaV{i}(N)$. Conversely, assume that
	$\lcaV{i}(N) = \vis(N)\setminus \dvis(N)$. Recall that $\norm(N)=(N\ominus (\notvis(N)\cup
	\dvis(N)))^-$ and so \begin{align*} \reg_i(N) = (N\ominus \notlcaV{i}(N))^- &= (N\ominus
	(V\setminus \lcaV{i}(N)))^- \\ &= (N \ominus (V\setminus (\vis(N)\setminus \dvis(N))))^- \\ & = (N
	\ominus ((V\setminus \vis(N)) \cup (V \cap \dvis(N))))^- \\ & = (N \ominus (\notvis(N) \cup
	\dvis(N)))^- = \norm(N). \end{align*} The latter, in particular, shows that $\lcaV{i}(N) =
	\vis(N)\setminus \dvis(N)$ implies that $\reg_i(N) = \norm(N)$. Hence setting $i=|X|$ implies that
	$\norm(N)=\reg(N)$ if and only if $\lcaV{|X|}(N) = \vis(N)\setminus \dvis(N)$.
\end{proof}

\begin{corollary}\label{cor:sets}
    For a network $N$ and an integer $i\geq2$, the following statements are equivalent.
\begin{enumerate}
    \item $\norm(N)=\reg_i(N)$
    \item $\vis(N)\setminus \dvis(N)=\lcaV{i}(N)$
    \item $\vis(N)=\lcaV{i}(N)\cupdot \dvis(N)$
    \item $\notvis(N)\cupdot \dvis(N)=\notlcaV{i}(N)$
    \item $\lcaV{i}(N)\cap \notvis(N)=\emptyset$ and $\notlcaV{i}(N)\cap \vis(N)=\dvis(N)$.
\end{enumerate}
\end{corollary}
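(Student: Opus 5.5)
The plan is to derive everything from Theorem~\ref{thm:norm(N)-ominus-form} together with the observation, from Lemma~\ref{lem:parents/children-in-cov}, that $\dvis(N)\subseteq\notlcaV{i}(N)$ for all $i\geq2$. Equivalently, $\lcaV{i}(N)\cap\dvis(N)=\emptyset$. We also use that $\dvis(N)\subseteq\vis(N)$ by definition. All equivalences are then elementary set manipulations inside the universe $V\coloneqq V(N)$, using complements $\notvis(N)=V\setminus\vis(N)$ and $\notlcaV{i}(N)=V\setminus\lcaV{i}(N)$.

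First, (1)$\iff$(2) is exactly the last statement of Theorem~\ref{thm:norm(N)-ominus-form}. For (2)$\iff$(3), note that $\dvis(N)\subseteq\vis(N)$, so $\vis(N)=(\vis(N)\setminus\dvis(N))\cupdot\dvis(N)$. Given (2), this yields (3), where the union is disjoint precisely because $\lcaV{i}(N)\cap\dvis(N)=\emptyset$. Conversely, from (3) we remove $\dvis(N)$, which is disjoint from $\lcaV{i}(N)$, and obtain (2). For (3)$\iff$(4), take complements in $V$. The complement of $\vis(N)$ is $\notvis(N)$, and the complement of $\lcaV{i}(N)\cupdot\dvis(N)$ is $\notlcaV{i}(N)\setminus\dvis(N)$. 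Thus (3) is equivalent to $\notvis(N)=\notlcaV{i}(N)\setminus\dvis(N)$. Since $\dvis(N)\subseteq\notlcaV{i}(N)$ and $\dvis(N)\cap\notvis(N)=\emptyset$, this is in turn equivalent to $\notvis(N)\cupdot\dvis(N)=\notlcaV{i}(N)$.

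Finally, for (2)$\iff$(5), first assume (2). Then $\lcaV{i}(N)\subseteq\vis(N)$, so $\lcaV{i}(N)\cap\notvis(N)=\emptyset$. Moreover, $\notlcaV{i}(N)\cap\vis(N)=\vis(N)\setminus\lcaV{i}(N)=\vis(N)\setminus(\vis(N)\setminus\dvis(N))=\dvis(N)$. Conversely, assume (5). The first condition gives $\lcaV{i}(N)\subseteq\vis(N)$. The second gives $\vis(N)\setminus\lcaV{i}(N)=\dvis(N)$. Together these yield $\lcaV{i}(N)=\vis(N)\setminus\dvis(N)$.

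No step is a genuine obstacle. The only points needing care are to invoke the disjointness $\lcaV{i}(N)\cap\dvis(N)=\emptyset$ and the inclusion $\dvis(N)\subseteq\vis(N)$ at the right places, so that the disjoint-union symbols in (3) and (4) are justified. This is also where the hypothesis $i\geq2$ enters.
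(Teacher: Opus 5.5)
Your proof is correct and essentially follows the paper's argument. Both use Theorem~\ref{thm:norm(N)-ominus-form} for (1)$\iff$(2) and the disjointness $\lcaV{i}(N)\cap\dvis(N)=\emptyset$ from Lemma~\ref{lem:parents/children-in-cov}, with the rest being elementary set algebra in $V(N)$. The only cosmetic difference is that the paper closes its cycle with a direct step (4)$\Rightarrow$(1), identifying $(N\ominus(\notvis(N)\cupdot\dvis(N)))^-$ with $(N\ominus\notlcaV{i}(N))^-$, whereas you route every equivalence back through (2) or (3).
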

\begin{proof}
Since the proof is tedious but straightforward, we have placed it in Appendix~\ref{sec:appx}.
\end{proof}

\begin{corollary}
   Given a network $N$, testing if $\reg(N)=\norm(N)$ can be done in polynomial time in $|V(N)|$.
\end{corollary}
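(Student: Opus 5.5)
By Theorem~\ref{thm:norm(N)-ominus-form} we have $\norm(N)=\reg(N)$ if and only if $\lcaV{|X|}(N)=\vis(N)\setminus\dvis(N)$. The plan is therefore to compute each of the three vertex sets $\lcaV{|X|}(N)$, $\vis(N)$ and $\dvis(N)$ in time polynomial in $|V(N)|$, and then compare them. Comparing two subsets of $V(N)$ is trivially polynomial, so everything hinges on computing the sets themselves.

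\textbf{Computing $\lcaV{|X|}(N)$.} We reuse the argument of Proposition~\ref{prop:polytime-regN}. By Lemma~\ref{lem:lca-clusters}, $v\in\lcaV{|X|}(N)$ if and only if $v=\lca_N(\CC_N(v))$.
\begin{enumerate}
\item Compute all clusters $\CC_N(v)$ from the reachability relation $\preceq_N$; a transitive closure gives this in $O(|V|^3)$.
\item For each $v$, let $A=\CC_N(v)$ and find the common ancestors of $A$, namely all $w$ with $A\subseteq\CC_N(w)$.
\item Keep the $\preceq_N$-minimal ones and check whether $v$ is the only one.
\end{enumerate}
Each step is polynomial.

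\textbf{Computing $\vis(N)$.} A vertex $v$ with $v\neq\rho$ is visible if and only if there is a leaf $x$ such that $x$ is unreachable from $\rho$ in $N-v$. For each $v$, a single search from $\rho$ in $N-v$ decides this for all leaves at once, so $\vis(N)$ is computable in $O(|V|(|V|+|E|))$. The vertex $\rho$ and all leaves are visible by definition.

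\textbf{Computing $\dvis(N)$.} We construct $\cov(N)$ directly via Definition~\ref{def:normalisation}:
\begin{enumerate}
\item Form the DAG on $\vis(N)$ whose arcs are the pairs $u\to v$ with $v\prec_N u$, read off the transitive closure.
\item Remove shortcuts to obtain $\cov(N)$. An arc $(u,w)$ is a shortcut exactly when some $z$ satisfies $w\prec z\prec u$, so this is polynomial.
\item Read off the vertices with in- and out-degree one in $\cov(N)$; these form $\dvis(N)$.
\end{enumerate}
Alternatively, Lemma~\ref{lem:parents/children-in-cov} characterizes $\dvis(N)$ intrinsically in $N$, but the direct construction is simpler.

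The only potential obstacle is $\lcaV{|X|}(N)$. Checking all subsets $A\subseteq X$ is exponential, and the related problem with $k$ part of the input is NP-complete. The reduction to $A=\CC_N(v)$ via Lemma~\ref{lem:lca-clusters} removes this difficulty, and Proposition~\ref{prop:polytime-regN} already handles this step. Alternatively, one may compute $\reg(N)$ and $\norm(N)$ explicitly, using Proposition~\ref{prop:polytime-regN} and the construction above, and compare them as labelled graphs.
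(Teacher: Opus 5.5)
Your proof is correct and follows the same route as the paper. Both reduce, via Theorem~\ref{thm:norm(N)-ominus-form}, to comparing the sets $\lcaV{|X|}(N)$ and $\vis(N)\setminus\dvis(N)$, and both compute $\lcaV{|X|}(N)$ through the test $v=\lca_N(\CC_N(v))$ as in Proposition~\ref{prop:polytime-regN}. The only difference is that you give explicit polynomial-time procedures for $\vis(N)$ and $\cov(N)$ (hence $\dvis(N)$), whereas the paper just appeals to the algorithm of Francis et al.\ for computing $\norm(N)$, so your version is more self-contained.
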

\begin{proof}
	Let $N$ be a network on $X$. As outlined in the appendix of \cite{francis2021normalising},
	$\norm(N)$ can be computed in polynomial time, whenever the network $N$ is separated. In fact, the
	exact same approach may be used to compute $\norm(N)$ for a network that is not necessarily
	separated which, in particular, provides access to the set $V(N)=\vis(N)\setminus\dvis(N)$. As
	outlined in the proof of Proposition~\ref{prop:polytime-regN}, the set $\notlcaV{|X|}(N)$ and,
	thus, $\lcaV{|X|}(N)= V(N)\setminus \notlcaV{|X|}(N)$ can be computed in polynomial time in
	$|V(N)|$. Since, by Theorem~\ref{thm:norm(N)-ominus-form}, $\norm(N)=\reg(N)$ if and only if
	$\lcaV{|X|}(N) = \vis(N)\setminus \dvis(N)$, we can thus test in polynomial time whether
	$\norm(N)=\reg(N)$ by a straightforward set comparison in $O(|V(N)|^2)$ time.
\end{proof}


\section{Discussion and Outlook}\label{sec:future}

In this paper, we have introduced and explored a framework for simplifying DAGs, where the
simplified DAGs have the form $(G\ominus W)^-$ for certain subsets $W$ of the vertices of $G$. To be
more precise, in Section~\ref{sec:reg}, we studied the $i$-regularization $\reg_i(G)$ of a DAG $G$,
where $W$ is the set $\notlcaV{i}(G)$ of vertices that are not the unique least common ancestor of
any subset of leaves of size at most $i$. In particular, we noted that $\reg_i(G)$ is both
$i$-lca-relevant and regular (c.f. Theorem~\ref{thm:prune-regular}). We then proceeded to describe
the clusters of $\reg_i(G)$ in Theorem~\ref{thm:clusters-of-prune} which, due to regularity, fully
determine the structure of $\reg_i(G)$. In Section~\ref{sec:norm}, we studied regularizations of
normal networks, in particular showing that regularizations preserve the property of being a normal
network (Theorem~\ref{thm:normal=>Lrsf(N)-network}). Finally, in Section~\ref{sec:normreg}, we
studied the normalization $\norm(N)$ of a network $N$ and showed, in
Theorem~\ref{thm:norm(N)-ominus-form}, that $\norm(N)=(N\ominus W)^-$ for $W=\notvis(N)\cup
\dvis(N)$. In other words, the set $W$ contains all vertices of $N$ that are not visible and all
vertices of $N$ that are visible but have in- and out-degree 1 in $\cov(N)$ (where the latter
vertices, in particular, can be determined by properties in $N$ itself, see
Lemma~\ref{lem:parents/children-in-cov}). Moreover, Theorem~\ref{thm:norm(N)-ominus-form} also
characterizes those networks $N$ for which $\norm(N)$ and $\reg_i(N)$ coincide.

The comparison between regularization and normalization also highlights that neither simplification
is uniformly more informative than the other. Their behavior depends on which type of leaf-based
information is regarded as relevant. There are regular networks for which regularization retains
essentially the entire structure while normalization retains only very little, since most vertices
are not visible. Conversely, there are networks for which normalization retains substantial
structure although few, or no, internal vertices occur as unique LCAs of subsets of leaves, so that
regularization removes most of the network (refer back to Figure~\ref{fig:normal-prune-intro},
from which larger examples can easily be obtained). Thus, the two approaches should be viewed as
complementary rather than competing simplification procedures. This difference also comes with a
potential complexity trade-off. Regular DAGs may in general contain many more vertices than normal
or strongly normal networks (see, for example, \cite[Thm.~5.1]{Willson2010}), and hence a
regularization that retains a large collection of LCA-induced clusters may be substantially more
complex than a corresponding normalization. On the other hand, such additional vertices may be
relevant when they are supported by the available data, for example through clusters or inferred LCA
relationships. Which simplification is preferable therefore depends not only on structural
considerations but also on the type of information that can be reliably inferred.

It could be worthwhile to investigate how different choices of subsets $W$ affect the simplification
$(G\ominus W)^-$ of a given DAG or network $G$. By the results developed in~\cite{HL:24}, cf.
Observation~\ref{obs:prune-basics}, the DAG $H\coloneqq(G\ominus W)^-$ always preserves the
$\preceq_G$-relation i.e. $u\preceq_{H}v$ if and only if $u\preceq_G v$ for all $u,v\in V(H)$.
Consequently, $\mathfrak{C}(H)\subseteq\mathfrak{C}(G)$, provided that $W$ contains no leaves of
$G$. Thus, the simplification $H$ never introduces clusters that were not already present in $G$.

In another more general direction, in \cite[p.4]{Francis:25}, Francis states
\begin{quotation}
    ``The normalization captures the key information in \emph{any} given network: the information
    that is in fact reconstructable, that is, the vertices and edges whose evolutionary signal may
    be visible in the leaves. If one has data that evolved from a process on a network,
    reconstructing the normalization of the network is the sweet spot: both the most information
    reasonable to hope for, and the most that can practically be obtained.''
\end{quotation}
This provides a compelling argument for the normalization procedure in particular, and for normal
networks more generally. Although normalization is designed to remove vertices that are not visible,
visibility is a path-based notion of relevance that depends on the topology of the entire network.
From the perspective of data-driven simplification, it may therefore not be always clear which
observable data would directly support such a topological property, especially when the underlying
network is unknown. For example, depending on the data at hand, it is conceivable that one may be
able to infer the existence of a vertex $v$ with a particular cluster $C=\CC_N(v)$, even if no
vertex $u$ with $\CC_N(u)=C$ is visible in the network $N$. Such a cluster would disappear under
normalization, since no vertex with cluster $C$ would be retained; cf.\
Proposition~\ref{prop:normN-basics}(3).

In general, such considerations could be very important when trying to take into account the data
when simplifying networks. For example, there is evidence that relationships between LCAs can, to
some extent, be inferred from so-called best hits~\cite{Geiss2020,Stadler2020}. More precisely,
pairwise similarities between genomic sequences associated with leaves $a$, $b$, and $b'$, where $a$
belongs to some species $A$ and $b,b'$ belong to a species $B$ distinct from $A$, may support
conclusions of the form $\lca_N(a,b)\preceq_N\lca_N(a,b')$ even without prior knowledge of the
structure of the underlying network $N$. The vertices $\lca_N(a,b)$ and $\lca_N(a,b')$ need not be
visible, however, and may therefore be removed by the normalization $\norm(N)$. Thus, it may be
useful to consider more general simplifications of the form $(N\ominus W)^-$, where $W$ need not
contain all non-visible vertices. Such a choice of $W$ allows one to retain non-visible vertices
whose relevance is nevertheless supported by the available data, for instance because they induce
observed clusters or occur as unique LCAs of pairs, or more generally subsets, of leaves.

In summary, the main point is that, depending on the type of data under consideration, there may be
different notions of what constitutes the \emph{key information} of a network and, consequently, of
what makes a vertex relevant for simplification. Thus, network simplifications should aim to retain
those vertices whose existence is supported, or witnessed, by the available data. If the relevant
signal is visibility, then this requires information about paths from vertices to leaves. If the
data are clusters, then vertices $v$ with $\CC_N(v)$ among the supported clusters should be
retained, since their relevance is witnessed by the cluster system. Similarly, if the data support
the existence or relative position of LCAs of pairs of leaves, then the corresponding LCA vertices
should be retained. In this sense, the appropriate simplification of a network should be guided by
the kind of evolutionary signal present in the data, rather than by a single universal notion of
vertex relevance and, indeed, this information could also be useful for the more challenging problem
of constructing a network in the first place.

\begin{appendix}

\section{Outsourced Proofs}
\label{sec:appx}

  \begin{proof}[Proof of Lemma~\ref{ominus:commutative}]
      Let $G$ be a DAG and $u,v\in V(G)$ be distinct vertices. By definition, the vertex sets of $(G \ominus v) \ominus u$ and 
    $(G \ominus u) \ominus v$ coincide. We show now that their arc sets coincide. To this end, let 
     $p\to q$ be an arc of $(G \ominus v) \ominus u$
     and thus, $\{p,q\}\cap \{u,v\}=\emptyset$
     and, in particular, $|\{p,q, u,v\}|=4$. We will show that $p\to q$ is an arc of $(G\ominus u)\ominus v$.

     If $(p,q)\in E(G)$ then it is, by definition, also an arc in $(G \ominus u) \ominus v$. Suppose now that 
    $(p,q)\notin E(G)$. Now, one of the following cases occurs. 
   \begin{owndesc} 
            \item[Case: $(p,q)\in E(G\ominus v)$.] Since
            $(p,q)\notin E(G)$, it holds that $p\in\parent_{G}(v)$ and $q\in\child_{G}(v)$. Since $p,q,v\neq u$,
            the arcs $p\to v$ and $v\to q$ remain in $G\ominus u$ i.e. $p\in\parent_{G\ominus u}(v)$ and $q\in\child_{G\ominus u}(v)$.
            Thus, by definition, $(p,q)\in E((G \ominus u) \ominus v)$. 
            \item[Case: $(p,q)\notin E(G\ominus v)$.] In this case, $(p,q)\in E((G \ominus v) \ominus u)\setminus E(G\ominus v)$ hold. Hence, by definition,
             $p\in\parent_{G\ominus v}(u)$ and $q\in\child_{G\ominus v}(u)$. In other words, $(p,u),(u,q)\in E(G\ominus v)$. Now one of the following subcases occurs.
            \begin{owndesc}
                \item[Subcase: $(p,u),(u,q)\in E(G)$.] 
                By definition, $(p,q) \in E(G\ominus u)$. 
                Since, $p,q\neq v$ it follows that 
                $(p,q) \in E((G\ominus u)\ominus v)$.
                \item[Subcase: $(p,u)\notin E(G),(u,q)\in E(G)$.]  Since $(p, u)$ is an arc of $G\ominus v$ but not of $G$, the arcs $p\to v$ and $v\to u$ must exist in $G$.                
                Since also $u\to q$ is an arc of $G$, it follows that there is a path $p\to v\to u\to q$ in $G$. By definition, $p\to v$ and $v\to q$ are arcs of $G\ominus u$ and it follows that $p\to q$ is an arc of $(G\ominus u)\ominus v$.
                \item[Subcase: $(p,u)\in E(G),(u,q)\notin E(G)$.] Since $u\to q$ is an arc of $G\ominus v$ but not of $G$, $u\to v$ and $v\to q$ must be arcs of $G$. Thus, there is a path $p\to u\to v\to q$ in $G$. By similar arguments as in the previous case, $p\to q$ is an arc of $(G\ominus u)\ominus v$. 
                \item[Subcase: $(p,u),(u,q)\notin E(G)$.] As argued
                before, since $(p, u)\in E(G\ominus v)$ 
                and $(p, u)\notin E(G)$, the arc $v\to u$ must exist in $G$.
                Moreover, since $(u, q) \in E(G\ominus v)$ 
                and $(u, q) \notin E(G)$, 
                $u\to v$ is an arc  of $G$.
                The existence of the two arcs $u\to v$
                and $v\to u$ in $G$, however, contradicts the fact
                that $G$ is a DAG. Hence, this case cannot occur. 
            \end{owndesc}
        \end{owndesc}
    
    In summary, in all cases that can occur,
    $(p,q)\in E((G\ominus v)\ominus u)$ implies $(p,q)\in E((G\ominus u)\ominus v)$. By
    analogous argumentation, $E((G\ominus u)\ominus v)\subseteq 
     E((G\ominus v)\ominus u)$ holds and thus,
    the arc sets of the graphs coincide. Therefore,  $(G \ominus v) \ominus u = (G \ominus u) \ominus v$.
  \end{proof}

\begin{proof}[Proof of Lemma~\ref{lem:stronglynormal=>2-lca-rel}]
Let $G$ be a strongly normal DAG on $X$. If $G$ has $k\geq2$ roots $r_1,\ldots,r_k$, we construct the network $N$ by adding a new root $\rho$ and the arc $\rho\to r_i$ for each $1\leq i\leq k$. If $G$, instead, is already a network, we put $N\coloneqq G$ and let $\rho$ denote its unique root. It is easy to see that in both cases, $N =(V,E)$ is a strongly normal network on $X$.
We will now use \cite[Thm~3.9]{Willson2010} established by Willson to prove the result. 
To this end, we define the \emph{base-set} $\widetilde X_N$ of $N$ as the set of vertices in 
$N$ comprising the root, the leaves, and all vertices with out-degree 1. 
In our setting $\widetilde X_N = X\cup \{\rho\}$, since strongly normal networks by definition have no vertices with out-degree 1.
Willson has shown that each vertex $v\in V\setminus \widetilde X_N$
satisfies $v=\lca_N(x,y)$ for some $x,y\in \widetilde X_N$. 
Note that $\lca_N(\{x\})=x$ for all $x\in X$. 
By the latter arguments, all vertices $v\in V \setminus \{\rho\}$ satisfy 
$v=\lca_N(x,y)$ for some $x,y\in X\cup \{\rho\}$. As $v\prec_N \rho$
for all  $v\in V \setminus \{\rho\}$, we obtain $v=\lca_N(x,y)$ for some $x,y\in X$.

Hence, if $G$ has multiple roots, then the latter result 
together with  the fact that $u\preceq_G v$ if and only if $u\preceq_N v$ for all $u,v\in V(G)=V\setminus \{\rho\}$ implies that, for all $v\in V(G)$, there are some $x,y\in X$ such that $v=\lca_N(x,y)$. 
Therefore, $G$ is 2-lca-relevant in case $G\neq N$.

We now consider the case when $N=G$. Hence,  in order to show that 
$N$ is 2-lca-relevant, it remains to show that $\rho=\lca_N(x,y)$ for some $x,y\in X$. 
To this end, we extend $N$ by adding a vertex $r$ and the arc $r\to \rho$, resulting in 
the network $N^*$. It is easy to verify that $N^*$ remains a normal network 
on $X$  with base-set $\widetilde X_{N^*} = X\cup \{r\}$. We can now apply 
 \cite[Thm~3.9]{Willson2010} again to conclude that $\rho=\lca_{N^*}(x,y)$
 for some $x,y\in X\cup \{r\}$. By definition of least common ancestors we have $x,y\preceq_{N^*}\lca_{N^*}(x,y)=\rho\prec_{N^*} r$, which ensures that $r\neq x$ and $r\neq y$.
 In other words, $\rho=\lca_{N^*}(x,y)$ implies $x,y\in X$. 
 It is now straight-forward to check that $\rho=\lca_{N}(x,y)$ holds.  In summary, $G$ is 2-lca-relevant, in the case that $G=N$.
 
 Thus, $G$ is 2-lca-relevant in both cases. In particular, $G$ is $k$-lca-relevant for all $k\in\{2,\dots,|X|\}$.
 \end{proof}

\begin{proof}[Proof of Proposition~\ref{prop:lca-i/reg-i_of_normal}]
    Let $G=(V,E)$ be a normal DAG on $X$ and put $U \coloneqq \{v\in V\mid \outdeg_G(v)=1\}$. Let
    $v \in U$ and denote with $c$ the unique child of $v$. By Lemma~\ref{lem:ominus-basics}, $G\ominus v$ is a DAG on $X$. We start with showing that $G\ominus v$
    remains  normal.
    Since $G$ is tree-child and since $c$ is the unique child of $v$ it must be a
    tree-vertex, i.e., $\indeg_G (c) = 1$, which implies that $v$ is the unique parent of $c$.
        
    We continue with showing that $G\ominus v$ remains tree-child. Let $u\in V(G\ominus v)\setminus
    X$. We consider now the following cases for $u$: (i) $u=c$ and (ii) $u$ is a parent of $v$ and
    (iii) $u\neq c$ and $u$ is not a parent of $v$. 
    
    Case (i): $u=c$. Since $G$ is
    tree-child and $u\notin X$, $u=c$ has a tree-child $c'$ in $G$. Since the children of $c$ in
    $G\ominus v$ are are precisely the children of $c$ in $G$ it follows that $c'$ is a child of $c$
    in $G\ominus v$. Moreover, as $v$ is the unique parent of $c$ and $u\neq v$, it follows that
    $c'$ is not a child of (and since $G$ is a DAG also no parent of) $v$. Hence, the out-degree and
    in-degree of $c'$ in $G$ is the same as in $G\ominus v$, and we can conclude that $c'$ remains a
    tree-child of $u=c$ in $G\ominus v$. 
    
    Case (ii): $u$ is a parent of
    $v$ in $G$. If $u$ is the unique parent of $v$ in $G$, then $u$ is the unique parent of $c$ in
    $G\ominus v$, in which case $c$ becomes the new tree-child of $u$ in $G\ominus v$. If $u$ is one
    of at least two parents, then $v$ is not a tree-child of $u$ in $G$ since $\indeg_G (v) > 1$.
    However, $u$ must have a tree-child $c'$ in $G$. As $v$ is the unique parent of $c$ it follows
    that $c\neq c'$ and, therefore, $c'$ cannot be a child of $v$ in $G$. Moreover, $c'$ cannot be
    a parent of $v$ in $G$ as otherwise the arc $u\to v$ would be a shortcut. Hence, the out-degree
    and in-degree of $c'$ in $G$ is the same as in $G\ominus v$, and we can conclude that $c'$
    remains a tree-child of $u$ in $G\ominus v$. 
    
    Case (iii): $u$
    is neither a parent nor a child of $v$. Since $G$ is tree-child, $u$ must have a tree-child $c'$
    in $G$. Since $u$ is not a parent of $v$, it follows that $c'\neq v$ and, therefore, $c'\in
    V(G\ominus v)$. Since $u\neq v$ and $v$ is the unique parent of $c$, we can conclude that $c\neq
    c'$. It immediately follows, by construction of $G\ominus v$ that the in-degree of $c'$ in $G$
    and $G\ominus v$ remain the same, i.e., $c'$ remains a tree-child of $u$. In summary, all
    vertices $u\in V(G\ominus v)\setminus X$ have a tree-child and thus, $G\ominus v$ is tree-child.

    We continue by showing that $G\ominus v$ remains shortcut-free. Assume, for contradiction,
    that $G\ominus v$ contains a shortcut $(a,b)$. Hence, there is an alternative path $P =
    a\leadsto b$ in $G\ominus v$ that does not contain the arc $(a,b)$. Note that
    $b\prec_{G\ominus v} a$, $v\notin X$ and Observation~\ref{obs:prune-basics} implies that $b\prec_G a$ holds. Suppose first
    that $(a,b)\notin E(G)$. The latter is, by construction of $G\ominus v$, only possible if $c=a$
    or $c=b$ holds for the unique child $c$ of $v$. However, since the children of $c$ 
    and their descendants in $G$ and
    $G\ominus v$ are identical, $a=c$ is not possible as then $(a,b)\in E(G)$. Thus, $c=b$ must hold,
    in which case $a$ must be a parent of $v$ in $G$, i.e., there is an arc $a\to v$ in $G$. Since
    $c$ is the unique child of $v$ in $G$ and $v$ the unique parent of $c$ in $G$, it follows that
    any path from $a$ to $b=c$ must contain the arc $v\to b$. However, there is the alternative
    $P$ from $a$ to $b$ in $G\ominus v$. 
    One easily verifies that this is only possible if there is
    a path $P' = a\leadsto v\to b$ in $G$ where the subpath $a\leadsto v$ contains more than the
    single arc $a\to v$. But then $G$ is not shortcut-free; a contradiction. Therefore, $(a,b)\in
    E(G)$ must hold. However, since $P = a\leadsto b$ is a directed path from $a$ to $b$ 
    in $G\ominus v$ it follows that $v$ is not contained in $P$ and
    Observation~\ref{obs:prune-basics} implies that all vertices $w$ in $P$ satisfy $b\preceq_G
    w\preceq_G a$. As at least one such vertex $w\neq a,b$ exists it follows that there is an
    alternative path $a\leadsto b$ in $G$, a contradiction to $G$ being shortcut-free. Thus,
    $G\ominus v$ is shortcut-free.

    In summary, we have shown that $G\ominus v$ is a shortcut-free and tree-child DAG on $X$, i.e., $G\ominus v$ is
    a normal DAG on $X$. Since, by Lemma~\ref{lem:ominus-outdeg1}, the out-degrees of vertices in $G\ominus v$ are the same in $G$ and $G\ominus v$, the previous arguments and induction can be used to show that
    for all $W\subseteq U$ it holds that $G\ominus W$ is a normal DAG on $X$. 
    As $U$ is the set of all vertices of out-degree 1 in $G$, Lemma~\ref{lem:ominus-outdeg1} also
    implies that $G\ominus U$ cannot have vertices of out-degree 1. 
    Hence,  $G\ominus U$ is, in particular, strongly normal.  

    Finally, we will show that $\notlcaV{i}(G)=U$ for all $i\geq2$. We first show
    that $\notlcaV{2}(G)=U$ by verifying that $V\setminus \notlcaV{2}(G) = V\setminus U$. Since, by
    Lemma~\ref{lem:lca-sets-inclusions}, $U\subseteq \notlcaV{2}(G)$ it follows that $V\setminus
    \notlcaV{2}(G) \subseteq V\setminus U$. Now let $v\in V\setminus U$. By definition, $v\in
    V(G\ominus U)$; see also Lemma~\ref{lem:ominus-basics}. As shown in the previous
    paragraph, $G\ominus U$ is strongly normal and so Lemma~\ref{lem:stronglynormal=>2-lca-rel}
    implies that $G\ominus U$ is 2-lca-relevant. Hence, there are leaves $x,y\in X$ such that $v =
    \lca_{G\ominus U}(x,y)$. In other words, $\LCA_{G\ominus U}(\{x,y\})=\{v\}$. This together with
    Lemma~\ref{lem:ominus-outdeg1} implies that $\LCA_{G}(\{x,y\})=\{v\}$ and, therefore, $v =
    \lca_G(x,y)$. By definition, $v\in \lcaV{2}(G)=V\setminus \notlcaV{2}(G)$. In conclusion, $U =
    \notlcaV{2}(G) $ holds. This together with Lemma~\ref{lem:lca-sets-inclusions} implies that
    $\notlcaV{i}(G)=U$ for all integers $i\geq2$.
\end{proof}

\begin{proof}[Proof of Lemma~\ref{lem:cov(N)-ominus-form}]
    Let $N$ be a network on $X$ with unique root $\rho$. 
    Let $G$ denote the digraph as specified in Definition~\ref{def:normalisation}(1).
    Hence, $\cov(N)=G^-$. Since $V(G)=\vis(N)$ and  $u\to v$ is an arc of $G$ if and only if $v\prec_N u$, it follows
    that there is a directed $uv$-path in $G$ if and only if $u,v\in \vis(N)$ and $v\preceq_N u$.
    Consequently, $G$ is a DAG such that
    \begin{equation}\label{eq:prec-equiv-0}  
        v\preceq_{G} u \iff v\preceq_N u,\quad\text{ for all $u,v\in \vis(N)$.}
    \end{equation} 
    Note that, by definition of visibility, $\rho\in \vis(N)$ and $X\subseteq \vis(N)$. 
    Moreover, for all $v$ in $N$,  $v\preceq_N \rho$
    and there is a leaf $x\in X$ with $x\preceq_N v$. The latter two arguments 
    together with  Eq.~\eqref{eq:prec-equiv-0} ensure that 
    $\rho$ remains the unique root of $G$ and that the set of leaves of $G$ is $X$
    and, in particular, that $G$ is network on $X$
    
    Now, by definition, $\cov(N)=G^-$ and $\cov(N)$ is, by Lemma~\ref{lem:properties-SF-G} and the
    arguments in the previous paragraph, a shortcut-free network on $X$ which satisfy
    $V(\cov(N))=V(G)=\vis(N)$. Moreover, Lemma~\ref{lem:properties-SF-G} together with
    Eq.~\eqref{eq:prec-equiv-0} also ensures that
    \begin{equation}\label{eq:prec-equiv-1}  
        v\preceq_{\cov(N)} u \iff v\preceq_N u,\quad\text{ for all $u,v\in \vis(N)$.}
    \end{equation}
    
    Now consider $N'\coloneqq(N\ominus \notvis(N))^-$. Since $\notvis(N)\subseteq V(N)\setminus X$, 
    we can apply Lemma~\ref{lem:properties-SF-G} and Lemma~\ref{lem:ominus-basics}
    to conclude that $V(N')=V(N)\setminus \notvis(N)=\vis(N)$ and that
    \begin{equation}\label{eq:prec-equiv-2}  
        v\preceq_N u \iff v\preceq_{N'} u,\quad\text{ for all $u,v\in \vis(N)$.}
    \end{equation}
    By, Lemma~\ref{lem:properties-SF-G}, $N'$ is shortcut-free. We have thus shown that $N'$ and
    $\cov(N)$ are shortcut-free DAGs with common vertex set $\vis(N)$. Moreover,
    Eqs.~\eqref{eq:prec-equiv-1} and \eqref{eq:prec-equiv-2} together shows that $u\preceq_{\cov(N)}
    v$ if and only if $u\preceq_{N'} v$ for all $u,v\in \vis(N)$. Thus
    Lemma~\ref{lem:same-prec-so-G=H} implies that $\cov(N)=N'=(N\ominus \notvis(N))^-$.

    Finally, we show that $\cov(N)$ is normal network on $X$.  
    Recall that
    $\cov(N)$ is a network on $X$.  We continue by showing that every vertex of $\cov(N)$ is visible. 
     Assume, for contradiction, that there is a vertex $v$ in $\cov(N)$ that is not visible. 
    Let $x\in X$ be a leaf with $x\preceq_N v$ and $P_1 = \rho\leadsto v\leadsto x$
    be directed path in $\cov(N)$. Since,  $v$ is not visible 
    in $\cov(N)$ there is an alternative path $P_2 = \rho \leadsto x$
    that does not contain $v$. 
    By Eq.~\eqref{eq:prec-equiv-1}, there are paths $P'_1 = \rho\leadsto v\leadsto x$
    and $P'_2 = \rho \leadsto x$ in $N$ where $P'_2$ does not contain $v$. 
    Hence, $v$ is not visible in $N$; a contradiction to $v\in V(\cov(N))=\vis(N)$. 
    Thus, all vertices of $\cov(N)$ are visible. Since $\cov(N) = G^-$ is a shortcut-free
    network, Theorem~\ref{thm:char-normal-vis} implies that $\cov(N)$ is normal.
\end{proof}

\begin{proof}[Proof of Proposition~\ref{prop:normN-basics}]
    Let $N$ be a network on $X$ and consider $\norm(N)$. By Theorem~\ref{thm:cov(N)-ominus-form}, $\norm(N)=\cov(N)\ominus \dvis(N)=(N\ominus \notvis(N))^-\ominus \dvis(N)$. By definition, $\cov(N)$ has vertex set $\vis(N)$ and, by
    Lemma~\ref{lem:cov(N)-ominus-form}, $\cov(N)$ is a normal network on $X$.
    Furthermore, by definition, $\dvis(N)\subseteq\{v\in V(\cov(N))\mid\outdeg_{\cov(N)}(v)=1\}$. The latter two facts together with Proposition~\ref{prop:lca-i/reg-i_of_normal} ensure that $\cov(N)\ominus \dvis(N)$ is a normal DAG on $X$. 
    Since, by Theorem~\ref{thm:cov(N)-ominus-form}, $\norm(N)=\cov(N)\ominus \dvis(N)$, it follows that $\norm(N)$ is a normal DAG on $X$.
    In particular, Lemma~\ref{lem:ominus-basics} ensures that $V(\norm(N))=V(\cov(N))\setminus \dvis(N)=\vis(N)\setminus \dvis(N)$. 
    In addition, Theorem~\ref{thm:cov(N)-ominus-form} implies that  $V(\norm(N))=V(N)\setminus (\notvis(N)\cup \dvis(N))$, and Lemma~\ref{lem:ominus-basics} and Eq.~\eqref{eq:prec-equiv-1} together ensure that 
    \begin{equation*}
        v\preceq_{\norm(N)} u \iff v\preceq_{\cov(N)} u \iff v\preceq_{N} u\quad\text{ for all $u,v\in V(\norm(N))\subseteq V(\cov(N))$.}
    \end{equation*}
    In particular, the unique root $\rho$ of $\cov(N)$ will, by definition, remain a vertex in $\norm(N)$ and satisfies $v\preceq_{\norm(N)}\rho$ for all $v\in V(\norm(N))$. Therefore, $\norm(N)$ has a unique root and is indeed a normal \emph{network} on $X$. We have thus 
    proven statement (1) and (2) in the proposition. Statement (3) is an immediate consequence of (2), since for all $x\in X$ and $v\in V(\norm(N))$ we have $x\preceq_N v$ if and only if $x\preceq_{\norm(N)} v$.

    For the first part of Statement (4), recall that, by definition, $\norm(N)=\cov(N)\ominus \dvis(N)$. It is easily verified that if $v\in\dvis(N)$ i.e. if $\indeg_{\cov(N)}(v)=\outdeg_{\cov(N)}(v)=1$, then $\outdeg_{\cov(N)\ominus v}(u)=\outdeg_{\cov(N)}(u)$ and $\indeg_{\cov(N)\ominus v}(u)=\indeg_{\cov(N)}(u)$ for all $u\in V(\cov(N)\ominus v)$. Consequently, $\norm(N)=\cov(N)\ominus \dvis(N)$ contains no vertex of in- and out-degree 1. The second part of Statement (4) is an immediate consequence of the first part and the definition of phylogenetic networks.

    Next consider Statement (5). The \emph{only-if} part is an immediate consequence of Statement (1) and (4). For the \emph{if} part, suppose that $N$ is normal and that there exists no $v\in V(N)$ such that $\indeg_N(v)=\outdeg_N(v)=1$. By Theorem~\ref{thm:char-normal-vis}, $V(N)=\vis(N)$, so Lemma~\ref{lem:cov(N)-ominus-form} implies that $\cov(N)=(N\ominus \notvis(N))^-=(N\ominus \emptyset)^-=N$, in particular applying that $N$ is shortcut-free.
    Since $\cov(N)=N$ and there is no $v\in V(N)$ such that $\indeg_N(v)=\outdeg_N(v)=1$, we also have $\dvis(N)=\emptyset$. Hence, $\norm(N)=\cov(N)\ominus \dvis(N)=\cov(N)=N$ follows.
    
    The final two statements can be proved verbatim as in the proof of \cite[Thm.~3.3]{francis2021normalising} (see Appendix A therein) and are therefore left to the reader.
\end{proof}

\begin{proof}[Proof of Corollary~\ref{cor:sets}]
 Let $N=(V,E)$ be a network. The equivalence between Condition (1) and (2) has been shown in Theorem~\ref{thm:norm(N)-ominus-form}.
 Suppose Condition (2) holds. Then, $\vis(N) = (\vis(N)\setminus \dvis(N))\cup \dvis(N)=\lcaV{i}(N)\cup \dvis(N)$. Since, by Lemma~\ref{lem:parents/children-in-cov},
 $\dvis(N)\subseteq \notlcaV{i}(N)$ it follows that $\lcaV{i}(N)$ and $\dvis(N)$ are disjoint. Taken the latter two arguments together, $\vis(N)=\lcaV{i}(N)\cupdot \dvis(N)$
 and, thus, Condition (3) holds. Suppose Condition (3) holds.
 Hence, $V\setminus \vis(N)=V \setminus (\lcaV{i}(N)\cupdot \dvis(N))$. 
 Note that $V\setminus \vis(N) = \notvis(N)$ and that 
 $V \setminus (\lcaV{i}(N)\cupdot \dvis(N)) = (V \setminus \lcaV{i}(N))\setminus \dvis(N) = \notlcaV{i}(N) \setminus \dvis(N)$. 
 Hence, $\notvis(N) = \notlcaV{i}(N) \setminus \dvis(N)$ and consequently, 
 $\notvis(N) \cupdot \dvis(N) =  \notlcaV{i}(N)$, i.e., Condition (4) holds. 
 If Condition (4) holds, then $N\ominus (\notvis(N)\cupdot \dvis(N)) = N\ominus \notlcaV{i}(N)$. Thus,
 $(N\ominus (\notvis(N)\cupdot \dvis(N)))^- = (N\ominus \notlcaV{i}(N))^-$. 
 By definition, $\reg_i(N)=(N\ominus \notlcaV{i}(N))^-$ and, 
 by Theorem~\ref{thm:norm(N)-ominus-form}, $\norm(N) = (N\ominus (\notvis(N)\cupdot \dvis(N)))^-$. Therefore, $\norm(N)=\reg_i(N)$ and Condition (1) holds. 
 In summary, Condition (1)--(4) are equivalent. 

 For the final equivalence, first assume that Condition (3), or equivalently, (4) holds. 
 By Condition (3), $\vis(N)=\lcaV{i}(N)\cupdot \dvis(N)$.
 Hence, $\notvis(N)=V\setminus \vis(N) = V\setminus (\lcaV{i}(N)\cupdot \dvis(N))
 \subseteq V\setminus \lcaV{i}(N)$. Therefore, 
 $\lcaV{i}(N)\cap \notvis(N)=\emptyset$. Moreover, by Condition (4), 
 $\notlcaV{i}(N) \cap \vis(N) = (\notvis(N)\cupdot \dvis(N)) \cap \vis(N) =
 (\notvis(N)\cap \vis(N))\cup (\dvis(N) \cap \vis(N)) = \dvis(N)$
 where the latter equation follows from the fact that, by definition,
 $\notvis(N)\cap \vis(N)=\emptyset$ and $\dvis(N)\subseteq \vis(N)$.
 Thus, Condition (5) holds. Assume that Condition (5) holds. 
 By definition, $V(N)=\lcaV{i}(N)\cupdot \notlcaV{i}(N)$ and $V(N)=\vis(N)\cupdot \notvis(N)$. Hence, $\vis(N)=(\lcaV{i}(N)\cupdot \notlcaV{i}(N))\setminus\notvis(N) =(\lcaV{i}(N)\setminus \notvis(N))\cupdot (\notlcaV{i}(N)\setminus \notvis(N))$. Since, by assumption, $\lcaV{i}(N)\cap \notvis(N)=\emptyset$, 
 we obtain $\vis(N)=\lcaV{i}(N)\cupdot (\notlcaV{i}(N)\setminus \notvis(N))$. Moreover, 
 since $V(N)=\vis(N)\cupdot \notvis(N)$, we obtain
  $\notlcaV{i}(N)\setminus \notvis(N)=\notlcaV{i}(N)\cap \vis(N)=\dvis(N)$.
 Hence, $\vis(N) = \lcaV{i}(N)\cupdot \dvis(N)$, i.e., Condition (3) holds.
\end{proof}

\end{appendix}

\subsection*{Data availability}
No data sets are associated with this manuscript.

\section*{Acknowledgements}

We sincerely thank the anonymous reviewers for their careful reading of the manuscript and for
their constructive and detailed comments. Their suggestions helped us to improve the presentation,
clarify the relationship to previous work, and sharpen the exposition of the main contributions.

\bibliographystyle{spbasic}
\bibliography{common}

\end{document}